\documentclass[a4paper,10pt,table]{article}
\usepackage{amssymb,amsmath,latexsym,color,amsthm,authblk,mathpazo,palatino,bbding,setspace,datetime,breakcites,hyphenat,microtype,diagbox,tikz,xcolor,tkz-graph,subcaption,graphicx,titlesec}
\usepackage[round,authoryear]{natbib}
\usepackage{hyperref,theoremref}
\usepackage[titletoc,title]{appendix}
\usepackage{mathtools}
\usepackage{bm}
\usepackage{accents}
\usepackage{esvect}
\usepackage{mathtools}
\usepackage{bm}
\usepackage{esvect}
\usepackage{setspace}
\usepackage{microtype}
\usepackage[normalem]{ulem}

\usepackage{subcaption}

\usepackage[shortlabels]{enumitem}
\usepackage{float}
\usepackage{amssymb}
\restylefloat{table}

\definecolor{armygreen}{rgb}{0.29, .8, 0.13}
\definecolor{auburn}{rgb}{0.43,0.21, 0.1}
\definecolor{burgundy}{rgb}{0.5,0.0, 0.13}
\definecolor{medium red}{rgb}{.490,.298,.337}
\definecolor{dark red}{rgb}{.235,.141,.161}
\definecolor{col}{RGB}{10,218,200}
\definecolor{col2}{RGB}{200,50,255}

\hypersetup{
	colorlinks = true,
	linkcolor = {burgundy},
	citecolor = {burgundy}, 		
	linkbordercolor = {white},
}

\usepackage{hyperref}
\hypersetup{
    colorlinks=true,
    linkcolor=blue,
    filecolor=magenta,      
    urlcolor=blue,
    pdftitle={Strategy-Proof Probabilistic Social Choice Correspondence},
    pdfpagemode=FullScreen,
}

\let\OLDthebibliography\thebibliography
\renewcommand\thebibliography[1]{
	\OLDthebibliography{#1}
	\setlength{\parskip}{0pt}
	\setlength{\itemsep}{0pt plus 0.1ex}
}

\DeclareFontFamily{U}{mathx}{\hyphenchar\font45}
\DeclareFontShape{U}{mathx}{m}{n}{<-> mathx10}{}
\DeclareSymbolFont{mathx}{U}{mathx}{m}{n}
\DeclareMathAccent{\widebar}{0}{mathx}{"73}

\titleformat{\section}[block]{\normalfont\scshape\large\filcenter}{\thesection .}{1em}{}
\titleformat{\subsection}{\normalfont\scshape\large}{\thesubsection}{1em}{}
\titleformat{\subsubsection}{\normalfont\scshape\large}{\thesubsubsection}{1em}{}

\newtheorem{theorem}{Theorem}
\newtheorem{proposition}{Proposition}
\newtheorem{claim}{Claim}
\newtheorem{lemma}{Lemma}

\newtheorem{assumption}{Assumption}
\newtheorem*{theorem*}{Theorem}

\theoremstyle{definition}
\newtheorem{definition}{Definition}
\newtheorem{example}{Example}
\theoremstyle{remark}
\newtheorem{remark}{\textsc{Remark}}

\newcommand{\du}{\mathcal{D}_U}
\newcommand{\de}{\mathcal{D}_E}
\newcommand{\hv}{\hat{\varphi}}

\newdateformat{monthyeardate}{\monthname[\THEMONTH], \THEYEAR}

\usepackage{fancyhdr}
\title{\textsc{Strategy-Proof Probabilistic Social Choice Correspondences under Conditional Expected Utility}\thanks{\noindent Soumyarup Sadhukhan gratefully acknowledges financial support from the Indian Institute of Technology Kanpur under the Faculty Research Initiation Grant (Project No. IITK/MATH/2021295). }}

\author[1]{Madhuparna Karmokar\footnote{Contact: madhuparnakarmokar@yahoo.in}}
\author[2]{Ujjwal Kumar\footnote{Contact: ujjwal.kumar@uni-bonn.de}}
\author[3]{Soumyarup Sadhukhan\footnote{Contact: soumyarup.sadhukhan@gmail.com}}

\affil[1]{\small Indian Institute of Management Calcutta}
\affil[2]{\small Hausdorff Center for Mathematics and Institute for Microeconomics, University of Bonn}
\affil[3]{\small Indian Institute of Technology Kanpur}

\date{}

\begin{document}
\maketitle
	\begin{abstract}\singlespacing	
		\noindent 
       
        We study unanimous and strategy-proof probabilistic social choice correspondences (PSCCs), where the selected set of alternatives is interpreted as an interim outcome, and agents evaluate sets using conditional expected utility. We analyze two preference domains introduced by \cite{barbera2001strategy}: the conditionally expected utility consistent (CEUC) domain and the conditionally expected utility consistent with equal probabilities (CEUCEP) domain. Our results characterize all unanimous and strategy-proof PSCCs on these domains and identify cases when randomization enlarges the class of admissible rules. On the CEUC domain, every unanimous and strategy-proof PSCC is a random dictatorship, showing that randomization over sets yields no additional flexibility. In contrast, the CEUCEP domain admits a richer family of unanimous and strategy-proof PSCCs. For at most three agents, these rules are precisely the random bi-dictatorial rules, which are convex combinations of bi-dictatorial rules introduced in \cite{feldman1980strongly}. For four or more agents, the characterization depends on the number of alternatives. When there are exactly three alternatives, the class expands to the larger  family of coalition-weighted rules. Thus, randomization enlarges the class of strategy-proof correspondences in the three-alternative case, producing rules that are not convex combinations of deterministic strategy-proof correspondences. However, for four or more alternatives, the class of unanimous and strategy-proof probabilistic correspondences again collapse to random bi-dictatorships.
		\end{abstract}

\vspace{4mm}
		\noindent \textit{JEL Classification}:  D71, D82.
		
		\noindent \textit{Keywords}: Probabilistic social choice, Social choice correspondences, Strategy-proofness, Conditional expected utility, Random bi-dictatorship, Coalition-weighted rules.

\newpage
\section{Introduction}
The classical social choice function framework studies rules that select a single alternative from a set of feasible alternatives. Although this is appropriate in many settings, many real-life decisions involve sets of alternatives or randomized outcomes. For instance, a committee may select a shortlist of candidates, a department may identify a pool of acceptable hires, a jury may nominate finalists, or a public body may approve a set of proposals for further consideration. In such environments, the immediate collective outcome is not necessarily a final alternative, but a non-empty subset of alternatives. Moreover, the rule that selects such a subset may itself be probabilistic, either because randomization is used to preserve ex-ante fairness or because institutional procedures explicitly involve lotteries. These considerations motivate the study of probabilistic social choice correspondences (PSCCs), which assign lotteries over non-empty subsets of alternatives.

The interpretation of the selected set depends on the application. One plausible interpretation is that the selected subset consists of mutually compatible alternatives that are jointly treated as the final outcome, as in committee formation, team selection, hiring a cohort of faculty members, electing multiple representatives for a legislative body, or selecting a comprehensive package of public policies (see \cite{barbera1991voting},
\cite{ju2003characterization}, \cite{ozyurt2008strategy}). Alternatively, the selected subset may consist of incompatible alternatives that have passed an initial screening, from which a final alternative will be chosen at a later stage, often through a second-stage procedure or tie-breaking rule (see \cite{kelly1977strategy}, \cite{barbera2001strategy}, \cite{benoit2002strategic}, \cite{ching2002multi}, \cite{ingalagavi2023class}, \cite{rodriguez2025strategy}). In this paper, we adopt the latter interpretation, viewing the selected subset of alternatives as an interim outcome. This perspective is particularly relevant in institutional settings where a committee identifies a pool of qualified candidates, but the final selection is subject to a later tie-breaking rule or a randomized final stage.

When selected sets are interim outcomes, voters evaluate them by considering not only which alternatives they contain, but also how likely each contained alternative is to be finally chosen.  One way to model this is to allow each voter to have a subjective assessment of alternatives. An assessment captures a voter’s belief about the relative chances of different alternatives being selected at the final stage. These beliefs may arise from institutional features of the second-stage procedure, from expectations about tie-breaking, or from the voter’s subjective view about which alternatives are more likely to survive the final selection process. Given such an assessment, a voter evaluates a set by conditioning her beliefs on the event that the final choice must come from that set. She then compares sets according to their conditional expected utilities. We study two domains of such preferences over sets of alternatives introduced in \cite{barbera2001strategy}: the conditionally expected utility consistent (CEUC) domain and the conditionally expected utility consistent with equal probabilities (CEUCEP) domain. The CEUC domain allows these assessments to be subjective and unequal across alternatives, thereby capturing situations in which voters may hold heterogeneous beliefs about the second-stage selection process. The CEUCEP domain corresponds to the special case in which all alternatives in a selected set are treated as equally likely.

Given these domains, we consider two desirable properties of PSCCs, namely unanimity and strategy-proofness. Unanimity requires that if all voters have the same top-ranked set (subset of alternatives), then that set must be selected with probability one. This is a minimal efficiency requirement, since when there is complete consensus about the best set, the PSCC should respect it.  Strategy-proofness requires truthful reporting to be a dominant strategy. Since the rule is probabilistic, an agent compares the lottery obtained by truthful reporting with the lottery obtained after a misreport. We use first-order stochastic dominance with respect to the agent’s true preference over sets. Thus, a rule is strategy-proof if no agent can misreport her preference and obtain a lottery that assigns more probability to better sets, according to her true preference. This requirement is natural in our setting because voters’ preferences over sets take into account their expected utilities from the eventual final outcome. A strategy-proof rule, therefore, makes truthful reporting a dominant strategy even when the rule randomizes over selected subsets.

Unanimity and strategy-proofness are central requirements in classical social choice theory. In the standard social choice function framework, the Gibbard-Satterthwaite (\cite{gibbard1973manipulation,satterthwaite1975strategy}) theorem shows that, on the unrestricted domain, any strategy-proof rule satisfying a suitable range condition must be dictatorial, that is, the outcome is always determined by the top-ranked alternative of a fixed agent. \cite{gibbard1977manipulation} extends this insight to probabilistic social choice functions and shows that, when randomization over alternatives is allowed, unanimity and strategy-proofness lead to random dictatorship. Similar results arise for deterministic social choice correspondences (DSCCs) as well. In particular, \cite{barbera2001strategy} show that on the CEUC domain, every unanimous and strategy-proof deterministic rule is dictatorial, while on the CEUCEP domain, the DSCCs are bi-dictatorial. A bi-dictatorial rule always selects the top-ranked alternatives of a pair of agents (not necessarily distinct) as the outcome.

We study the probabilistic version of the deterministic model of
\cite{barbera2001strategy}. The central question is whether randomization over set-valued outcomes only produces lotteries over deterministic strategy-proof correspondences, or whether it gives rise to new strategy-proof rules that have no deterministic analogue. Our results show that both possibilities arise.

Our first result concerns the CEUC domain. We show that every unanimous and strategy-proof PSCC on the CEUC domain is random dictatorial (Theorem \ref{du_n}). Thus, although the rule is allowed to randomize over arbitrary non-empty subsets of alternatives, all probability is assigned to singleton
sets that are top-ranked by agents. In this domain, allowing lotteries over sets does not generate any structure beyond the usual random-dictatorial one.

The CEUCEP domain exhibits a richer structure. We begin by establishing a key structural result which shows that every unanimous and strategy-proof PSCC on this domain is tops-only and has support only on singleton and doubleton subsets of the union of the agents' top-ranked sets (Theorem \ref{de_tops-only}).\footnote{Informally, this refers to the collection of the top-ranked sets across all agents.} We then characterize the entire class of unanimous and strategy-proof PSCCs for up to three agents, showing that they are precisely the random bi-dictatorial rules (Theorem \ref{de_2}). These rules assign fixed probabilities to pairs of agents (not necessarily distinct) and, for each realized pair, select the union of the agents' top-ranked  sets. Equivalently, they are exactly the convex combinations of the bi-dictatorial rules introduced by
\cite{feldman1980strongly}.\footnote{\cite{feldman1980strongly} originally introduced these rules under the name \emph{duumvirate}. The terminology \emph{bi-dictatorial} was later adopted by \cite{barbera2001strategy}.}

For four or more agents, the characterization depends on the number of alternatives. When there are exactly three alternatives, unanimous and strategy-proof PSCCs are characterized by coalition-weighted rules (Theorem \ref{de_m=3_4}). A coalition-weighted rule is generated by a coalition weight function, which determines the probability assigned to an alternative as a function of the coalition of agents who rank it at the top. This class contains random bi-dictatorial rules, but is generally larger.
Particularly, Example \ref{eg_2} shows that, for three alternatives and four agents, there are unanimous and strategy-proof PSCCs that are not random bi-dictatorial. Hence, these rules cannot be written as convex combinations of unanimous and strategy-proof DSCCs.  By contrast, when there are at least four alternatives, this additional freedom disappears. We show that every unanimous and strategy-proof PSCC on the CEUCEP domain is random bi-dictatorial (Theorem \ref{de_n}). Hence, for at least four alternatives, every admissible probabilistic rule is a lottery over deterministic bi-dictatorial rules. For clarity, we summarize our characterization results in Table \ref{tab:contributions_ceucep}.

\begin{table}[htbp]
\centering
\caption{Characterization of Unanimous and Strategy-Proof PSCCs }
\label{tab:contributions_ceucep}
\small
\begin{tabular}{|lcclc|}
\hline
\textbf{Domain} & \textbf{Number of Agents ($n$)} & \textbf{Number of Alternatives ($m$)} & \textbf{Characterization Result} & \textbf{Theorem} \\ [1ex] \hline 
CEUC & $n \ge 2$ &$m \ge 3$ & Random Dictatorial Rules & \ref{du_n} \\ [1ex]
CEUCEP & $n \le 3$& $m \ge 3$ & Random Bi-dictatorial Rules & \ref{de_2} \\ [1ex]
CEUCEP & $n \ge 4$ & $m = 3$ & Coalition-weighted Rules (Strictly larger class) & \ref{de_m=3_4} \\ [1ex]
CEUCEP & $n \ge 4$ & $m \ge 4$ & Random Bi-dictatorial Rules & \ref{de_n} \\ \hline
\end{tabular}
\end{table}
Finally, we relate the three-alternative, and at least four agents case to the deterministic results of \cite{barbera2001strategy}. We focus particularly on this configuration because, in all other cases, the admissible probabilistic rules are convex combinations of strategy-proof deterministic rules. Although coalition-weighted rules may be strictly more general than random bi-dictatorial rules, their deterministic members are exactly bi-dictatorial rules (Proposition \ref{prop_deterministic_cw}). This shows that the deterministic version of our result coincides with that of \cite{barbera2001strategy}, but randomization can enlarge the class of unanimous and strategy-proof PSCCs beyond the convex hull of unanimous and strategy-proof DSCCs in the three-alternative case. We also provide an equivalent formulation of coalition-weighted rules for more than three alternatives, and show that these are exactly the random bi-dictatorial rules (Theorem \ref{thm_mgeq4}). Thus, yielding an alternative characterization of unanimous and strategy-proof PSCCs for at least four alternatives.

\subsection{Related Literature}

The seminal results of \cite{gibbard1973manipulation} and \cite{satterthwaite1975strategy} show that any strategy-proof and onto social choice function must be dictatorial. A central criticism of this framework is its restriction to single-valued outcomes. This has led to the study of deterministic social choice correspondences (DSCCs), where outcomes are sets of alternatives.
   An important issue in this literature is how agents compare subsets of alternatives.  \cite{gardenfors1976manipulation} and \cite{kelly1977strategy} provide foundational approaches to extending preferences over alternatives to preferences over sets.  \cite{gardenfors1976manipulation} introduced a ``sure-thing'' principle for comparing sets of alternatives. Roughly, this principle says that adding an alternative that is worse than every element of a set should make the set worse, while adding an alternative that is better than every element of a set should make the set better. In contrast, \cite{kelly1977strategy} studies a more conservative extension, now known as the Kelly extension,  under which one set is strictly preferred to another only if every element of the former is strictly preferred to every element of the latter. These approaches led to a series of impossibility and characterization results for strategy-proof correspondences. Adding to this strand of impossibility results, \cite{ching2002multi} introduce a new notion of strategy-proofness for DSCCs and show that, under their criterion, any such correspondence must be either dictatorial or constant.

A major advancement in the restricted-domain analysis of strategy-proof social choice correspondences is due to \cite{barbera2001strategy}. Rather than deriving preferences over sets from ordinal preferences over alternatives through a general extension rule, they model preferences over sets using conditional expected utility. They introduce the conditionally expected utility consistent (CEUC) domain and its equal-probability counterpart, the conditionally expected utility consistent with equal probabilities (CEUCEP) domain. Their characterization shows that unanimity and strategy-proofness force dictatorship on the CEUC domain, while on the CEUCEP domain the admissible deterministic rules are dictatorial or bi-dictatorial. An earlier related result by \cite{feldman1980strongly} establishes a bi-dictatorship theorem for strategy-proof decision schemes based on even-chance lotteries. These results form the deterministic baseline for our analysis of probabilistic social choice correspondences.

Another strand studies preferences over sets and lotteries, particularly in environments where outcomes may involve randomization or ties. \cite{benoit2002strategic} considers a framework in which agents’ preferences are defined either over sets of alternatives or over lotteries on alternatives, and shows that, under mild conditions, no rule can simultaneously satisfy strategy-proofness and a stronger unanimity requirement called near- unanimity.  Relatedly, \cite{duggan2000strategic} generalize the Gibbard–Satterthwaite theorem to non-resolute environments, allowing for ties and heterogeneous beliefs about their resolution, and show that social choice correspondences remain vulnerable to manipulation under very weak conditions. Closely related, \cite{brandt2022strategyproof} study social choice correspondences when both preferences and outcomes may contain ties and a final alternative is selected via tie-breaking rules such as lotteries or deterministic refinements, and show that all anonymous and Pareto-optimal correspondences are manipulable under such procedures. In a related vein, \cite{ingalagavi2023class} study strategy-proof deterministic social choice correspondences on the CEUC and CEUCEP domains under additional structure, such as single-peaked utility functions. Closely related to our setting, \cite{rodriguez2025strategy} studies deterministic social choice correspondences for conditional expected utility maximizers, where agents start from a common assessment over alternatives and use it to evaluate each selected set. From a different perspective, \cite{brandt2023characterizing} characterize the top cycle correspondence using strategy-proofness-type axioms.

Turning to probabilistic social choice, most of the literature focuses on lotteries over single alternatives. As noted above, \cite{gibbard1977manipulation} provides a foundational characterization in this setting. However, comparatively little is known about PSCCs, where randomization occurs over sets of alternatives rather than individual alternatives. An early contribution is \cite{nandeibam2002structure}, who studies probabilistic social choice correspondences as rules assigning lotteries over subsets of feasible alternatives and analyzes the distribution of coalitional influence. More closely related in spirit is the literature on randomized committee formation, where randomization also occurs over set-valued outcomes. For instance, \cite{roy2019formation} and \cite{roy2023committee} study randomized voting rules over feasible committees on separable domains and obtain random dictatorship characterizations under strategy-proofness and onto conditions. These papers differ from ours in their interpretation of sets: committees are final outcomes composed of mutually compatible alternatives, whereas we interpret selected sets as interim outcomes that will later be resolved into a single alternative.

Our paper, therefore, connects two strands of the literature: deterministic strategy-proof correspondences under conditional expected utility and probabilistic social choice. We show that the relationship between deterministic and probabilistic rules is domain-dependent. On $\mathcal D_U$, the probabilistic characterization collapses to random dictatorship. On $\mathcal D_E$, the generic structure is random bi-dictatorship, the probabilistic counterpart of deterministic bi-dictatorship. However, when there are three alternatives and at least four agents, randomization generates strategy-proof PSCCs that are not mixtures of deterministic strategy-proof correspondences. This identifies a new role for randomization in set-valued social choice.

\subsection{Organization of the paper}

 The rest of the paper is organized as follows. In Section \ref{sec:model}, we introduce the formal framework and notation. In Section \ref{sec:du} and  Section \ref{sec:de}, we study the CEUC and CEUCEP domains, respectively. Finally, in Section \ref{sec:discussions},  we relate our findings to the deterministic results established by \cite{barbera2001strategy} and provide an extension of coalition-weighted rules when there are at least four alternatives. All proofs omitted from the main text are relegated to the Appendix and supplementary material.

\section{Preliminaries}\label{sec:model}

Consider a society of $n$ agents where $n \geq 2$. We denote the set of agents by $N=\{1,2,\ldots,n\}$. Let $A$ be a finite set of alternatives, and unless otherwise specified, we assume that $|A|=m \geq 3$.  Let $\mathcal{A}$ denote the set of all non-empty subsets of $A$. To maintain consistency throughout the paper, we distinguish between alternatives and subsets of alternatives. Specifically, we let lowercase letters denote individual alternatives, writing $a, b, c, x, y, z, w \in A$, while uppercase letters denote non-empty subsets of alternatives, writing $X, Y, Z, W \in \mathcal{A}$. A (weak) preference over $\mathcal{A}$ is a weak order over $\mathcal{A}$ (complete and transitive binary relation), and the set of all preferences over $\mathcal{A}$ is denoted by $\mathbb{W}(\mathcal{A})$.\footnote{In a similar manner, we can define the preferences over $A$. A (weak) preference over $A$ is a weak order over $A$.} For a preference $R$, we denote by $P$, the strict part of $R$, i.e., $XPY$ implies $XRY$ and $\neg YRX$. Moreover, for a preference $R$ and a set $X$, we write $I(X,R)$ to denote all the sets that are indifferent to $X$ according to $R$, i.e., $I(X,R)=\{Y\in \mathcal{A}\mid XRY \text{ and } YRX\}$. A preference $R$ is a strict preference if $|I(X,R)|=1$ for all $X\in \mathcal{A}$. We often write a preference $R$ as $R\equiv XYZ\cdots$, where $X,Y,Z\in \mathcal{A}$, meaning according to $R$, $X$ is the top-ranked set, $Y$ is the second-ranked set, and $Z$ is the third-ranked set. 

In this paper, we talk about two kinds of preferences over $\mathcal{A}$ which come from a utility function and an assessment. A utility function $v$ is a mapping  $v: A \rightarrow \mathbb{R}$. We make the following assumption on the utility functions that we consider in this paper. 
 \begin{assumption}\label{ass_1}
      For all distinct alternatives $x, y \in A$, $v(x) \neq v(y)$. 
 \end{assumption}
 An assessment $\lambda$ is a function $\lambda: A \rightarrow (0,1]$ such that $\sum_{a \in A} \lambda(a) = 1$ . We now define the first kind of preferences that we work with in this paper.
\begin{definition}
 An ordering $R$ over $\mathcal{A}$ is conditionally expected utility consistent (CEUC) if there exists a utility function $v$ and an assessment $\lambda$
such that $\forall  X,Y \in \mathcal{A}$,
  $$XRY\Leftrightarrow\sum_{x \in X}v(x)\frac{\lambda(x)}{\sum_{y \in X} \lambda(y)} \geqslant \sum_{x \in Y}v(x) \frac{\lambda(x)}{\sum_{y \in Y} \lambda(y)}$$

  Let $\du$ be the set of all CEUC orderings over $\mathcal{A}$.
\end{definition}
Note that the above definition is well-defined as $\lambda$ takes a value strictly greater than $0$. Also, given $v$ and $\lambda$, we sometimes extend $v$ over the elements of $\mathcal{A}$ (also called the expected utility) as $$v^{\lambda}(X)=\sum_{x \in X}v(x)\frac{\lambda(x)}{\sum_{y \in X} \lambda(y)}.$$ Next, we define the second kind of preferences that is a special case of CEUC orderings.
\begin{definition}
 An ordering $R$ over $\mathcal{A}$ is conditionally expected utility consistent with equal probabilities (CEUCEP) if there exists a utility function $v$ such that $\forall  X,Y \in \mathcal{A}$,
  $$XRY\Leftrightarrow \dfrac{\sum_{x \in X}v(x)}{|X|}  \geqslant  \dfrac{\sum_{x \in Y}v(x)}{|Y|}.$$ 

Let $\de$ be the set of all CEUCEP orderings.
\end{definition}
It is straightforward to see that $R$ is a CEUCEP ordering corresponding to a valuation function $v$ if it is the CEUC ordering corresponding to $v$ and the uniform assessment, i.e., $\eta(x)=\tfrac{1}{m}$ for all $x\in A$. Below we provide an illustration of a CEUC ordering and a CEUCEP ordering. 
\begin{example}
    Let $A=\{a,b,c\}$. Consider an assessment $\lambda$ and a valuation $v$ where $\lambda(a) = 0.5$, $\lambda(b) =0.3$, $\lambda(c)=0.2$, and  
    $v(a)=1,\; v(b)=2,\; v(c)=3$. Then,
    $v^\lambda(\{a,b\}) = \frac{11}{8}$, 
     $v^\lambda(\{b, c\}) = \frac{12}{5}$, 
      $v^\lambda(\{c, a\}) = \frac{11}{7}$, and
       $v^\lambda(\{a, b, c\}) = \frac{17}{10}$. Thus the corresponding CEUC ordering is
$$R=\{c\}\{b,c\}\{b\}\{a,b,c\}\{a,c\}\{a,b\}\{a\}.$$
%     \begin{center}
%      \begin{tabular}{c c}
%     & $\underline{R}$ \\
%  & $\{c\}$ \\
%  & $\{c, b\}$ \\
%  & $\{b\}$ \\
%  & $\{a, b, c\}$ \\
%  & $\{c, a\}$ \\
%  & $\{b, a\}$ \\
%  & $\{a\}$ 
% \end{tabular}
%   \end{center}

To see the CEUCEP ordering with the same utility function $v$, we first calculate the expected utilities of the sets with the uniform assessment $\upsilon$. These are
    $v^{\eta}(\{a,b\}) = \frac{3}{2}$, 
     $v^{\eta}(\{b, c\}) = \frac{5}{2}$, 
      $v^{\eta}(\{c, a\}) = 2$, and 
       $v^{\eta}(\{a, b, c\})=2$.
       Therefore, the ordering is 
       $$Q=\{c\}\{b,c\}\big[\{b\}\{a,b,c\}\{a,c\}\big]\{a,b\}\{a\}.$$
 %       \begin{center}
 % \begin{tabular}{c c}
 %  & $\underline{Q}$ \\
 % & $\{c\}$ \\
 % & $\{c, b\}$ \\
 % & $\{b\}$, $\{a, b, c\}$, $\{c, a\}$  \\
 % & $\{b, a\}$ \\
 % & $\{a\}$ 
 % \end{tabular}
 % \end{center}
    where $\big[\{b\}\{a,b,c\}\{a,c\}\big]$ indicates that the three sets $\{b\}$, $\{a, b, c\}$, and $\{c, a\}$ are indifferent in the ordering. Note that for an ordering (either CEUC or CEUCEP), the best and worst sets are always unique singletons, meaning any indifference must occur between these two extremes (see Remark \ref{rem_2}). \hfill $\square$
\end{example}
\begin{remark}
Clearly $\mathcal{D}_E\subset\mathcal{D}_U\subset \mathbb{W}(\mathcal{A})$.
\end{remark}

\begin{remark}\label{rem_2}
    Although the assumption that no two distinct elements have the same utility rules out indifference between
sets of cardinality 1, it is obvious that there can be $X, Y \in \mathcal{A}$ such that $XR_iY$ and $YR_iX$. Yet, the best and the worst elements of any preferences in $\mathcal{D}_U$
and $\mathcal{D}_E$ will be unique and singleton sets. 
\end{remark}
Henceforth, throughout, $\mathcal{D}$ denotes either $\de$ or $\du$, and we restrict attention to preferences in $\mathcal{D}$. In view of Remark \ref{rem_2}, we denote the best element (top-ranked set) of a preference $R\in \mathcal{D}$ by $\tau(R)$. Moreover, for a set of agents $S\subseteq N$ and a profile $R_N\in \mathcal{D}^n$, we use $\tau(R_S)$ to denote the collection of top-ranked sets of the agents in $S$ at the profile $R_N$. Formally, $\tau(R_S)=\{\tau(R_i)\mid i\in S\}$. 

A \textbf{probabilistic social choice correspondence} (PSCC) $\varphi$ on $\mathcal{D}^n$ is a mapping
$\varphi: \mathcal{D}^n \rightarrow \Delta \mathcal{A}$ 
where $\Delta \mathcal{A}$ denotes the set of probability distributions over $\mathcal{A}$.\footnote{More formally, $\Delta \mathcal{A}=\{\lambda : \mathcal{A}\to [0,1] \text{ such that } \sum_{X\in \mathcal{A}}\lambda(X)=1\}$.} For a PSCC $\varphi$, a set $X \in \mathcal{A}$, and a profile $R_N$, we denote by $\varphi_X(R_N)$ the probability assigned to $X$ at $R_N$. For $\mathcal{B}\subseteq \mathcal{A}$  and $R_N\in \mathcal{D}^n$, we write $\varphi_{\mathcal{B}}(R_N)$ to denote the total probability assigned to the sets in $\mathcal{B}$ at $R_N$ by $\varphi$, i.e., $\varphi_{\mathcal{B}}(R_N)=\sum_{X\in \mathcal{B}}\varphi_X(R_N)$.
A PSCC is a \textbf{deterministic social choice correspondence} (DSCC) if it selects a degenerate probability distribution at every preference profile. More formally, a PSCC $\varphi:\mathcal{D}^n \to \Delta \mathcal{A}$ is a DSCC if $\varphi_X(R_N)\in \{0,1\}$ for all $X \in \mathcal{A}$ and all $R_N\in \mathcal{D}^n$. Sometimes, for simplicity, we write a DSCC, $\varphi$, as a function from $\mathcal{D}^n$ to $\mathcal{A}$ where for all $R_N\in \mathcal{D}^n$, $\varphi(R_N)=X$ if $\varphi_X(R_N)=1$ for some $X\in \mathcal{A}$.

 A PSCC $\varphi$ is \textbf{unanimous} if for any profile $R_N=(R_1,\ldots,R_n)\in \mathcal{D}^n$ with $\tau(R_1)=\cdots=\tau(R_n)=\{a\}$ for some $a\in A$, we have $\varphi_{\{a\}}(R_N)=1$. Two profiles $R_N,R_N'\in \mathcal{D}^n$ are \textit{tops-equivalent} if $\tau(R_i)=\tau(R_i')$ for all $i\in N$. A PSCC is called \textbf{tops only} if its outcomes do not change over tops-equivalent profiles. In other words, the outcome of such a PSCC depends only on the top-ranked sets in a preference profile. Mathematically, $\varphi(R_N)=\varphi(R_N')$ for all tops-equivalent profiles $R_N$ and $R'_N$. Next, we define the notion of \textit{strategy-proofness} of a PSCC. The idea is standard in the literature and was first introduced in \cite{gibbard1977manipulation}. For this, we define the upper contour set of $X\in \mathcal{A}$ at a preference $R$ as the set of elements in $\mathcal{A}$ that are preferred over $X$ according to $R$. More formally, $U(X,R)=\{Y\in \mathcal{A}\mid YRX\}$

\begin{definition}
	A PSCC $\varphi: \mathcal{D}^n \to \Delta \mathcal{A}$ is \textbf{strategy-proof} if for all $i \in N$, all $R_i, R'_i \in  \mathcal{D}$, and all $R_{-i} \in \mathcal{D}^{n-1}$, $\varphi(R_i,R_{-i})$ first order stochastically dominates $\varphi(R'_i,R_{-i})$ according to $R_i$, that is,
    \begin{equation}\label{eq_spf}
        \varphi_{U(X,R_i)}(R_i,R_{-i}) \geq \varphi_{U(X,R_i)}(R'_i,R_{-i}) \mbox{ for all } X\in \mathcal{A}.
    \end{equation}
    \end{definition}

The above definition treats the elements of $\mathcal A$ as the outcomes of the rule and applies the standard first-order stochastic dominance (FOSD) extension of an agent's preference over these outcomes. We now clarify why this is the
appropriate formulation in our setting.

There are two sources of uncertainty in our model. The first comes as the rule itself is probabilistic and assigns a lottery over elements of $\mathcal A$. The second one is due to the fact that a selected set may later be resolved into a single alternative, via a tie-breaking rule, or more generally, the second-stage procedure, which need not be known to the agents. The second source of uncertainty is already incorporated in the agents' preferences over sets, as when an agent compares two sets, she takes into account her belief about which alternatives are more likely to be selected from them. The assessment in the definition of the preferences represents the agent's belief about this second-stage resolution. Thus, we define strategy-proofness here, considering only the first source of uncertainty, using FOSD. 

In an alternate formulation, one may define strategy-proofness in terms of the compound lottery based on the outcome of the PSCC and the assessment of the agents. We argue that to do so in a meaningful way, it gets back to our formulation using FOSD. To see this, if an agent has a utility function $v_i$ and an assessment $\lambda_i$, then any lottery over sets induces a compound lottery over alternatives. For $p\in\Delta \mathcal A$, the probability induced on an alternative $x$ is  $$\sum_{X \in\{Y\in\mathcal A\mid \,x\in Y\}} p_X \frac{\lambda_i(x)}{\sum_{y\in X}\lambda_i(y)}.$$  The expected utility of this induced lottery is $$\begin{aligned} \sum_{x\in A} v_i(x) \sum_{X \in\{Y\in\mathcal A\mid \,x\in Y\}} p_X \frac{\lambda_i(x)}{\sum_{y\in X}\lambda_i(y)} &= \sum_{X\in\mathcal A} p_X \left( \sum_{x\in X} v_i(x) \frac{\lambda_i(x)}{\sum_{y\in X}\lambda_i(y)} \right) \\ &= \sum_{X\in\mathcal A}p_Xv_i^{\lambda_i}(X). \end{aligned}$$  Thus, for a fixed pair $(v_i,\lambda_i)$, evaluating the compound lottery over alternatives is the same as evaluating the lottery over sets using the induced utility $v_i^{\lambda_i}$. The mechanism, however, is defined on ordinal preferences over $\mathcal A$, not on the cardinal pairs $(v_i,\lambda_i)$ that generate them. This distinction is important. In $\mathcal D_U$, assessments are subjective and may differ across agents, so the same lottery over sets need not induce a common lottery over alternatives. Moreover, even for a fixed agent, the same ordering over $\mathcal A$ may be generated by more than one utility-assessment pair. An incentive condition based on one such representation would therefore depend on cardinal information that is not part of the report. For this reason, a feasible way is to define strategy-proofness directly on lotteries over $\mathcal A$, using FOSD with respect to the reported ordering over sets. This keeps the incentive condition ordinal and representation-independent. If $p\in\Delta\mathcal A$ first-order stochastically dominates $q\in\Delta \mathcal A$ according to $R_i$, then $$ \sum_{X\in\mathcal A}p_Xu_i(X) \geq \sum_{X\in\mathcal A}q_Xu_i(X)$$ for every utility representation $u_i:\mathcal A\to\mathbb R$ of $R_i$. In particular, the inequality holds for every conditional expected-utility representation $(v_i,\lambda_i)$ that generates $R_i$. Hence, FOSD strategy-proofness rules out profitable deviations under the compound-lottery interpretation for every utility function and assessment consistent with the reported preference over sets. In $\mathcal D_E$, the assessment is fixed to be uniform, but the rule is still defined on orderings over $\mathcal A$ rather than on cardinal utilities; hence we use the same ordinal formulation throughout.

 Next, we define an equivalent formulation of strategy-proofness that will be used repeatedly in the proofs. Similar to the upper contour set, we may also define the strict upper contour set of $X\in \mathcal{A}$ at a preference $R$, denoted by $U^{-}(X,R)$. The strict upper contour set of $X$ at $R$ denote the set of elements in $\mathcal{A}$ that are \textit{strictly} preferred to $X$ according to $R$, i.e.,  $U^{-}(X,R)=\{Y\in \mathcal{A}\mid YPX\}$. It is straightforward to see that (\ref{eq_spf}) is equivalent to $$\varphi_{U^{-}(X,R_i)}(R_i,R_{-i}) \geq \varphi_{U^{-}(X,R_i)}(R'_i,R_{-i}) \mbox{ for all } X\in \mathcal{A}.$$

\section{Characterization of Unanimous and Strategy-proof PSCCs on $\du$}\label{sec:du}
In this section, we explore the structure of unanimous and strategy-proof PSCCs on $\mathcal{D}_U$. We start with the definition of random dictatorial rules. In Theorem \ref{du_n}, we show that every PSCC on the $\du$ domain is unanimous and strategy-proof if and only if it is a random dictatorial rule.

\begin{definition}
    A PSCC $\varphi: \mathcal{D}^n \to \Delta \mathcal{A}$ is \textbf{random dictatorial} if there exist coefficients $\epsilon_1,\ldots,\epsilon_n$ with $\epsilon_i\geq 0$ and $\sum_{i=1}^n\epsilon_i=1$ such that for all $R_N\in \mathcal{D}^n$,
    $$\varphi_X(R_N)=\sum_{i\in \{j\in N\mid \tau(R_j)=X\}}\epsilon_i.$$
\end{definition}

In the proof of Theorem \ref{du_n}, we utilize the seminal result by \cite{gibbard1977manipulation} regarding unanimous and strategy-proof probabilistic social choice functions (PSCFs) on the unrestricted domain. Let $\mathcal{P}$ denote the set of all strict preferences over $A$. For any $R \in \du$, we define $R|_{A}\in \mathcal{P}$ (with a slight abuse of notation) as follows: for distinct $a, b \in A$, $a R|_{A} b$ if and only if $\{a\} R \{b\}$.\footnote{Note that $R|_{A}$ is a strict preference as by  Assumption 1, $v(x)\neq v(y)$ for all $x,y \in A$.}
Extending this to the entire domain, we define $\du|_{A} = \{R|_{A} \mid R \in \du\}$. It is straightforward to observe that $\du|_A = \mathcal{P}$. A PSCF $\phi$ is a mapping $\phi: \mathcal{P}^n \to \Delta A$. The properties of unanimity and strategy-proofness for PSCFs are defined analogously to those for PSCCs, as is the definition of a random dictatorship. \cite{gibbard1977manipulation} established the following characterization.
\begin{theorem*}[\cite{gibbard1977manipulation}]
    Assume $m\geq 3$. Every PSCF $\phi:\mathcal{P}^n \rightarrow \Delta A$ is unanimous and strategy-proof if and only if $\phi$ is random dictatorial.
\end{theorem*}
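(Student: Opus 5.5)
The plan is to prove the statement in three stages. First we reduce strategy-proofness to local conditions, then we derive ex post efficiency and tops-onlyness, and finally we identify the dictatorial weights and show they add up. The "if" direction is immediate: under a random dictatorial rule each agent $i$ controls only the mass $\epsilon_i$, which is placed on her top alternative. Truthful reporting therefore dominates every misreport in the FOSD sense, and unanimity is clear. The work is in the "only if" direction.

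\textbf{Step 1 (local characterization).} Fix $R_{-i}$. We first show that strategy-proofness of $\phi$ is equivalent to conditions on \emph{adjacent swaps}. Suppose $R_i'$ is obtained from $R_i$ by interchanging two consecutively ranked alternatives $x\,R_i\,y$. Then two things must hold. (a) \emph{Localization}: $\phi_z(R_i',R_{-i})=\phi_z(R_i,R_{-i})$ for every $z\notin\{x,y\}$. (b) \emph{Non-perversity}: $\phi_y(R_i',R_{-i})\ge \phi_y(R_i,R_{-i})$. Necessity follows by applying the FOSD inequalities in both directions, $R_i$ against $R_i'$ and $R_i'$ against $R_i$. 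The upper contour sets of the two orderings coincide except at the swapped pair, which pins down every $\phi_z$ with $z\notin\{x,y\}$. Sufficiency, which we use only for bookkeeping, follows because any two orderings are connected by a path of adjacent swaps.

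\textbf{Step 2 (ex post efficiency and tops-onlyness).} Next we show that if every agent ranks $x$ above $y$, then $\phi_y=0$. Start from a unanimous profile with top $x$, where $\phi_x=1$. Then move each agent's ordering by adjacent swaps to the target profile, never moving $y$ above $x$ for anyone. By localization, the mass on $y$ can change only at swaps involving $y$. At such a swap $y$ is exchanged with some $z\neq x$, and we need $m\ge3$ to arrange the swap path so that $y$'s mass never rises above $0$.

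With efficiency in hand, we then show that $\phi_a(R)$ depends only on the set $S_a(R)=\{i: \tau(R_i)=\{a\}\}$. Let $a$ be an agent's top and consider reshuffling the alternatives below it. Each such reshuffle is a sequence of swaps not involving $a$, so by localization $\phi_a$ is unchanged. Since alternatives that nobody ranks first are Pareto dominated and thus receive zero mass, this gives tops-onlyness.

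\textbf{Step 3 (additive weights).} Write $\phi_a(R)=g_a(S_a(R))$. We show three properties of $g_a$:
\begin{itemize}
\item $g_a$ is monotone in $S$, by non-perversity applied when an agent lifts $a$ to her top.
\item $g_a$ is independent of $a$. Relabel through a third alternative $c$: compare profiles where coalition $S$ tops $a$ and its complement tops $b$, with those where $S$ tops $a$ and its complement tops $c$. Localization shows $\phi_a$ is unchanged when the complement swaps $b$ and $c$ below $a$.
\item $g$ is additive: $g(S\cup T)=g(S)+g(T)$ for disjoint $S,T$. Take a profile where $S$ tops $a$, $T$ tops $b$, and the rest top $c$. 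Move $T$'s top from $b$ to $a$ by a single adjacent swap. Localization implies that the mass gained by $a$ is exactly lost by $b$, so $g(S\cup T)-g(S)=g(T)-g(\emptyset)$, and $g(\emptyset)=0$ by efficiency.
\end{itemize}
Setting $\epsilon_i=g(\{i\})$ then gives $\phi_a(R)=\sum_{i\in S_a(R)}\epsilon_i$ with $\sum_i\epsilon_i=g(N)=1$.

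\textbf{Main obstacle.} I expect the main difficulty to be the additivity step and the independence of $g$ from the alternative, which is exactly where $m\ge 3$ is essential. The swap argument needs a third alternative to park the remaining agents' tops without disturbing $\phi_a$ and $\phi_b$. If the direct argument becomes messy, my fallback is to prove the case $n=2$ explicitly and then proceed by induction on $n$, merging two agents into one; this merging is legitimate because the construction in Step 3 never requires the merged agents to disagree.
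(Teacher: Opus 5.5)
The paper does not prove this statement. It imports it from \cite{gibbard1977manipulation} and uses it as a black box in the proof of Theorem \ref{du_n}, so your proposal has to stand on its own.

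Step 1 is fine. So is the efficiency claim in Step 2, which in fact needs neither $m\ge 3$ nor a delicate path: lift $x$ to the top of every agent's ordering; no swap involves $y$, so $\phi_y$ never changes, and unanimity gives $\phi_y=0$. Step 3 also goes through once tops-onlyness is available.

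The genuine gap is the tops-onlyness step, which is the real content of Gibbard's theorem. Your localization argument only handles agents whose top is $a$. For an agent whose top is $b\neq a$, moving $a$ among her lower positions requires swaps involving $a$, and non-perversity then gives only a weak inequality. For example, if agents $1,2,3$ top $a,b,c$ and agent $3$ switches from $c\,a\,b\cdots$ to $c\,b\,a\cdots$, you learn only that $\phi_a+\phi_b$ is constant and $\phi_a$ weakly falls. The claim you use to close the step, that an alternative nobody ranks first is Pareto dominated, is false. At $(a\,c\,b,\ b\,c\,a)$ the alternative $c$ is nobody's top, yet neither $a$ nor $b$ dominates it. Showing $\phi_c=0$ at such compromise profiles is exactly where $m\ge 3$ and the full force of strategy-proofness enter. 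It is the analogue of what the claims inside the proof of Lemma \ref{lem_1} do for sets.

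For $n=2$, $m=3$ the missing step can be supplied by a square of profiles. Let $p=\phi(acb,bca)$. At $(abc,bca)$, $(acb,bac)$ and $(abc,bac)$ the alternative $c$ is Pareto dominated, so only $a$ and $b$ receive mass there. Agent $1$'s swap of $c,b$ preserves $\phi_a$, and agent $2$'s swap of $c,a$ preserves $\phi_b$. Going through $(abc,bca)$ gives $\phi_a(abc,bac)=p_a$; going through $(acb,bac)$ gives $\phi_a(abc,bac)=1-p_b$. Hence $p_c=0$.

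For general $n$ you still need a real argument. Your fallback of merging two agents is the route of \cite{sen2011gibbard} and of Appendix \ref{ap_dictatorial}. However, the justification that the construction ``never requires the merged agents to disagree'' misses the point. The merged rule only describes profiles at which agents $1$ and $2$ report the same ordering. The entire work of the induction step is to determine $\phi$ at profiles where they disagree, including showing that no mass falls on compromise alternatives there.
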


 We now extend the random-dictatorship conclusion of  \cite{gibbard1977manipulation} from PSCFs to PSCCs on the domain $\du$. The key difference is that a PSCC may assign probability to arbitrary nonempty subsets of alternatives. We show that under unanimity and strategy-proofness on $\du$ the rule assigns probability only to singleton top-ranked sets and therefore reduces to an ordinary PSCF.
 
\begin{theorem}\label{du_n}
    Let $n\geq 2$ and $\varphi:\du^n\to \Delta \mathcal{A}$ be a PSCC. Then $\varphi$ is unanimous and strategy-proof if and only if $\varphi$ is random dictatorial.
\end{theorem}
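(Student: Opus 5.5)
The sufficiency direction is a direct check. Suppose $\varphi$ is random dictatorial with weights $\epsilon_1,\ldots,\epsilon_n$. Then $\varphi$ only puts probability on the singleton top-ranked sets $\tau(R_j)$ (Remark \ref{rem_2}). Unanimity is immediate. For strategy-proofness, agent $i$'s report affects only the weight $\epsilon_i$: truthfully it goes to $\tau(R_i)$, and after a misreport it goes to some other set. Since $\tau(R_i)$ belongs to every upper contour set $U(X,R_i)$, truthful reporting weakly raises $\varphi_{U(X,R_i)}$ for every $X$, which is (\ref{eq_spf}).

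For necessity, I would reduce to \cite{gibbard1977manipulation} in three steps.

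\textbf{Step 1 (richness of $\du$).} Fix $v$ satisfying Assumption \ref{ass_1}. For a non-singleton $X$ with best element $b_X$ and worst element $w_X$ under $v$, the value $v^{\lambda}(X)$ can be made any number strictly between $v(w_X)$ and $v(b_X)$ by choosing $\lambda$ suitably. This is possible because concentrating $\lambda$ on $b_X$ or on $w_X$ pushes $v^{\lambda}(X)$ arbitrarily close to $v(b_X)$ or $v(w_X)$. Changing $\lambda$ never alters the ranking of singletons, which is $R|_A$. So each preference in $\du$ with a given restriction $R|_A$ can be moved, by small changes of $\lambda$, to nearby preferences that differ from it only by moving one non-singleton set across one adjacent indifference class. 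I would also use extreme assessments: putting almost all weight on a single alternative $a$ makes every set containing $a$ rank in a block immediately adjacent to $\{a\}$.

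\textbf{Step 2 (neutrality to $\lambda$).} Take two preferences $R_i,R_i'$ of agent $i$ that differ by one such adjacent swap. Applying (\ref{eq_spf}) in both directions, with $R_i$ and then $R_i'$ as the true preference, shows that every upper contour set not separating the swapped sets receives the same probability under both reports. Both strict versions then pin down the probability of each swapped set, so $\varphi(R_i,R_{-i})=\varphi(R_i',R_{-i})$. Chaining such swaps, $\varphi$ depends on each agent's report only through $R_i|_A$ together with her top. I expect this to need care, since the chains must stay inside $\du$ and must handle ties created by the CEUC representation.

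\textbf{Step 3 (killing non-singleton mass) — the main obstacle.} Suppose $\varphi_X(R_N)>0$ for some non-singleton $X$. By Step 2, agent $i$ may, without changing the outcome, use truthful preferences in which $X$ sits just below $\{b_X\}$, or just above $\{w_X\}$, or anywhere in between. I would then run an argument in the spirit of Gibbard's. Move agents one at a time towards a profile where everyone's top is $a=b_X$ or $a=w_X$, with extreme assessments on $a$. At each move, apply (\ref{eq_spf}) to the upper contour set ending just above $X$ in one truthful placement and just below $X$ in another. This should show that the mass on $X$ cannot decrease along the chain. But unanimity forces it to $0$ at the end, so $\varphi_X(R_N)=0$. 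The risk is that this monotonicity must be proved carefully: it is where strategy-proofness over sets interacts with the domain, and I would first try induction on $|X|$ and on the number of agents whose top differs from $a$.

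\textbf{Conclusion.} Once $\varphi$ is supported on singletons and depends only on $R|_A$, it induces a PSCF $\phi:\mathcal{P}^n\to\Delta A$, using $\du|_A=\mathcal{P}$. Unanimity transfers directly. Strategy-proofness also transfers, because the upper contour sets of singletons in $R$ restrict, on singletons, to the upper contour sets of $R|_A$. Gibbard's theorem then gives random dictatorship, and this lifts back to $\varphi$.
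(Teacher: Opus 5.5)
Your sufficiency argument and your final reduction to \cite{gibbard1977manipulation} are fine; the reduction matches the paper's. The necessity argument, however, has a genuine gap, and it starts at Step 2. Suppose $R_i'$ is obtained from $R_i$ by swapping adjacent sets $Y$ (above) and $Z$. Applying (\ref{eq_spf}) in both directions equalizes $\varphi$ on every upper contour set that does not separate $Y$ from $Z$, and hence fixes $\varphi_Y+\varphi_Z$. The two separating upper contour sets give only
\[
\varphi_Y(R_i,R_{-i})\geq \varphi_Y(R_i',R_{-i}) \quad\text{and}\quad \varphi_Z(R_i',R_{-i})\geq \varphi_Z(R_i,R_{-i}).
\]
Given the fixed total, these are one and the same inequality. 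The strict-upper-contour formulation adds nothing, and passing through a preference with $Y$ and $Z$ indifferent does not help either. So strategy-proofness allows mass to shift toward whichever set is raised, nothing pins down the split, and invariance under changes preserving $R_i|_A$ does not follow. That invariance is really a consequence of the theorem, not an input. Once singleton support is known, it is immediate, because singleton upper contour sets depend only on $R_i|_A$. So Step 2 is not even needed, but Step 3 then carries the entire content of the theorem.

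Step 3 as written does not supply that content. Suppose agent $i$ moves her top from $c$ to $a$, and you want $\varphi_X(R_i',R_{-i})\geq\varphi_X(R_i,R_{-i})$ from (\ref{eq_spf}) by subtracting an upper contour set ending just above $X$ from one ending just below it. This requires the old and new truthful preferences to rank exactly the same collection of sets strictly above $X$. That collection must contain $\{c\}$ and $\{a\}$, and, since a set's value lies strictly between its elements' values, also $\{a,c\}$. You do not exhibit such preferences in $\du$ at each step of the chain, and your plan to relocate $X$ freely rests on the invalid Step 2.

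The missing idea is how strategy-proofness actually pins probabilities down: compare two reports that agree on an entire extreme segment containing the set in question. For example, $\{z\}\{z,x\}\cdots\{x,y\}\{x\}$ and $\{y\}\cdots\{x,y\}\{x\}$ share their bottom two positions. Hence $\varphi_{\{x\}}$ and $\varphi_{\{x,y\}}$ are both unchanged between these two reports. The paper combines this with the unanimity-based confinement of $\varphi(\bar{R}^{xy},\bar{R}^{yx})$ to $\{\{x\},\{x,y\},\{y\}\}$ and a third alternative to get $\varphi_{\{x,y\}}=0$ for $n=2$ (Lemma \ref{lem_1}). The needed preferences are constructed explicitly in Appendix \ref{appen_2}. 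For $n>2$ it inducts by merging agents $1$ and $2$ via $g(R_1,R_{-\{1,2\}})=\varphi(R_1,R_1,R_{-\{1,2\}})$, then proves Lemmas \ref{du_lem_1}--\ref{du_lem_5}. Without an argument of this kind, singleton support, and hence the theorem, is not established.
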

\begin{proof} Random dictatorial rules are unanimous and strategy-proof on any domain. So, we only prove the necessary part here. We prove the theorem by induction  on the number of agents. For brevity, we present the base case $n=2$ here and relegate the induction step to Appendix \ref{ap_dictatorial}. 

\noindent \textbf{Base case: $n=2$.} Let $\varphi:\du^2\to \Delta \mathcal{A}$ be a PSCC satisfying unanimity and strategy-proofness. We show that $\varphi$ is a random dictatorial. We start with a lemma that shows that $\varphi$ assigns probabilities only to the top-ranked sets of the agents at any preference profile.

  \begin{lemma}\label{lem_1}
    For all $(R_1,R_2)\in \du^2$, $$\varphi_X(R_1,R_2)=0 \text{ for all } X\notin \{\tau(R_1),\tau(R_2)\}.$$ 
\end{lemma}

\begin{proof}
    Consider the preference profile $(R_1,R_2)\in \mathcal{D}_U^2$. Suppose $\tau(R_1)=\{a\}$ and $\tau(R_2)=\{b\}$ for some $a,b\in A$. When $a=b$, $\varphi_{\{a\}}(R_1,R_2)=1$ follows from unanimity of $\varphi$. Hence, $\varphi_X(R_1,R_2)=0 \text{ for all } X\notin \{\{a\}\}$.  We proceed to show that $\varphi_X(R_1,R_2)=0 \text{ for all } X\notin \{\{a\},\{b\}\}$ when $a\neq b$. We begin by establishing a sequence of claims, which will then be used to prove the lemma. For this purpose, we introduce the following preferences in $\mathcal{D}_U$ for all distinct $x,y,z \in A$. 
    \begin{itemize}
   \item  $R^x\equiv \{x\}\cdots$,
   \item $\bar{R}^{xy}\equiv\{x\}\{x,y\}\{y\}\cdots$,
        \item $\tilde{R}^{xy}\equiv\{x\}\{x,y\}\cdots$,
        \item $\hat{R}^{xyz}\equiv\{x\}\{x,y\}\cdots\{y,z\}\{y\}$, and 
        \item $\breve{R}^{xy}\equiv\{x\}\cdots \{x,y\}\{y\}$.
    \end{itemize}
    The existence of these preferences is established in Appendix \ref{appen_2}.
    \begin{claim}\label{claim dictator 1} Let $x,y \in A$ be distinct. Then
       $\varphi_X(\bar{R}^{xy},\bar{R}^{yx})=0$ for all $X\in \mathcal{A}\setminus \{\{x\},\{x,y\},\{y\}\}$ and  $\varphi(\bar{R}^{xy},\bar{R}^{yx})=\varphi(\bar{R}^{xy},{R}^y)=\varphi({R}^x,\bar{R}^{yx})$. 
    \end{claim}
    \textbf{Proof of the claim:} By unanimity of $\varphi$, $\varphi_{\{x\}}(\bar{R}^{xy},\bar{R}^{xy})=1$. This, together with strategy-proofness of $\varphi$, implies $\varphi_{\{\{x\},\{x,y\},\{y\}\}}(\bar{R}^{xy},\bar{R}^{yx})=\varphi_{\{\{x\},\{x,y\},\{y\}\}}(\bar{R}^{xy},\bar{R}^{xy})=1$. Hence, $\varphi_X(\bar{R}^{xy},\bar{R}^{yx})=0$ for all $X\in \mathcal{A}\setminus \{\{x\},\{x,y\},\{y\}\}$. 
   
   Since $ \{\{x\},\{x,y\},\{y\}\}$ is an upper contour set in $\bar{R}^{xy}$, $\varphi_{\{\{x\},\{x,y\},\{y\}\}}(\bar{R}^{xy},\bar{R}^{yx})=\varphi_{\{\{x\},\{x,y\},\{y\}\}}(\bar{R}^{xy},{R}^y)$, as otherwise agent $1$ can manipulate at $(\bar{R}^{xy},{R}^{yx})$ via the preference $R^y$. This means that $\varphi_X(\bar{R}^{xy},{R}^y)=0$ for $X\in \mathcal{A}\setminus \{\{x\},\{x,y\},\{y\}\}$. Moreover, since the relative ordering of the sets $\{y\},\{x,y\}$ and $\{x\}$ is the same at $\bar{R}^{yx}$ and $R^y$, by strategy-proofness, $\varphi_{X}(\bar{R}^{xy},\bar{R}^{yx})=\varphi_{X}(\bar{R}^{xy},{R}^y)$ for $X\in\{\{x\},\{x,y\},\{y\}\}$. Hence, $\varphi(\bar{R}^{xy},\bar{R}^y)=\varphi(\bar{R}^{xy},{R}^y)$. Using similar logic, we can show that  $\varphi(\bar{R}^{xy},\bar{R}^{yx})=\varphi({R}^x,\bar{R}^{yx})$. \hfill $\square$

 \begin{claim}\label{claim dictator 2}
Let $x,y \in A$ be distinct. Then $\varphi(\bar{R}^{xy},\bar{R}^{yx})=\varphi(\tilde{R}^{xy},{R}^y)=\varphi({R}^x,\tilde{R}^{yx})$.
    \end{claim}
\noindent\textbf{Proof of the claim:} By strategy-proofness of $\varphi$, $\varphi_{\{x\}}(\tilde{R}^{xy},R^y)=\varphi_{\{x\}}(\bar{R}^{xy},R^y)$ and $\varphi_{\{x,y\}}(\tilde{R}^{xy},R^y)=\varphi_{\{x,y\}}(\bar{R}^{xy},R^y)$. Next we argue that $\varphi_{\{y\}}(\tilde{R}^{xy},R^y)=\varphi_{\{y\}}(\bar{R}^{xy},R^y)$. By Claim \ref{claim dictator 1}, $\varphi(\bar{R}^{xy},\bar{R}^{yx})=\varphi(\bar{R}^{xy},{R}^y)$. Hence, $\varphi_X(\bar{R}^{xy},{R}^y)=0$ for all $X\in \mathcal{A}\setminus\{\{x\},\{x,y\},\{y\}\}$. Thus, $\varphi_{\{y\}}(\tilde{R}^{xy},R^y)\leq \varphi_{\{y\}}(\bar{R}^{xy},R^y)$. Note that again by Claim \ref{claim dictator 1}, $\varphi_{\{y\}}(\bar{R}^{xy},\bar{R}^{yx})=\varphi_{\{y\}}(\tilde{R}^{xy},\bar{R}^{yx})=\varphi_{\{y\}}(\bar{R}^{xy},R^y)$. If  $\varphi_{\{y\}}(\tilde{R}^{xy},R^y)<\varphi_{\{y\}}(\bar{R}^{xy},R^y)$, then agent $2$ will manipulate at the profile $(\tilde{R}^{xy},R^y)$ via the preference $\bar{R}^{yx}$. This means that $\varphi(\tilde{R}^{xy},R^y)=\varphi(\bar{R}^{xy},R^{yx})$. Combining this with Claim \ref{claim dictator 1}, we have $\varphi(\tilde{R}^{xy},R^y)=\varphi(\bar{R}^{xy},\bar{R}^{yx})$. Using similar arguments it follows that  $\varphi({R}^x,\tilde{R}^{yx})=\varphi(\bar{R}^{xy},\bar{R}^{yx})$. \hfill $\square$
   
   \begin{claim}\label{claim dictator 3}
      Let $x,y \in A$ be distinct. Then $\varphi_{\{x,y\}}(\bar{R}^{xy},\bar{R}^{yx})=0$.
   \end{claim}

  \noindent\textbf{Proof of the claim:} By Claims \ref{claim dictator 1} and \ref{claim dictator 2}, $\varphi_{\{x,y\}}(\tilde{R}^{xy},\hat{R}^{zxy})=0$. By strategy-proofness, it follows that $\varphi_{\{x,y\}}(\tilde{R}^{xy},\breve{R}^{yx})=0$. Since by Claim \ref{claim dictator 2}, $\varphi(\bar{R}^{xy},\bar{R}^{yx})=\varphi(\tilde{R}^{xy},\breve{R}^{yx})$, we have $\varphi_{\{x,y\}}(\bar{R}^{xy},\bar{R}^{yx})=\varphi_{\{x,y\}}(\tilde{R}^{xy},\breve{R}^{yx})=0$.
   \hfill $\square$

 By Claims \ref{claim dictator 1} and \ref{claim dictator 3}, it follows that $\varphi_X(\bar{R}^{ab},\bar{R}^{ba})=0$ for all $X\in \mathcal{A}\setminus\{\{a\},\{b\}\}$. It follows that  $\varphi_{\{\{a\},\{b\}\}}(\bar{R}^{ab},\bar{R}^{ba})=1$. By strategy-proofness and Claim \ref{claim dictator 1}, we have $\varphi_{\{a\}}(R_1,R_2)=\varphi_{\{a\}}(\bar{R}^{ab},R_2)=\varphi_{\{a\}}(\bar{R}^{ab},\bar{R}^{ba})$ and $\varphi_{\{b\}}(R_1,R_2)=\varphi_{\{b\}}({R}_1,\bar{R}^{ba})=\varphi_{\{b\}}(\bar{R}^{ba},\bar{R}^{ab})$. Combining this with the fact that  $\varphi_{\{\{a\},\{b\}\}}(\bar{R}^{ab},\bar{R}^{ba})=1$,  we obtain that $ \varphi_X(R_1,R_2)=0$ for all $X\in \mathcal{A}\setminus \{\{a\},\{b\}\}$. This completes the proof of the lemma.
\end{proof}

By Lemma \ref{lem_1} and strategy-proofness, $\varphi$ is tops-only and assigns positive probability only to singleton top-ranked sets. Hence we can associate with $\varphi$ a PSCF $\hv:\mathcal P^2\to\Delta A $ defined by $\hv_a(Q_1,Q_2)=\varphi_{\{a\}}(R_1,R_2),$  where $Q_i=R_i|_A$ for each $i\in\{1,2\}$. This is well-defined because $\du|_A=\mathcal{P}$ and $\varphi$ is tops-only. We claim that $\hv$ is unanimous and strategy-proof. Unanimity follows immediately from the unanimity of $\varphi$. To see strategy-proofness, consider a preference $(Q_1,Q_2)\in\mathcal{P}^2$, an agent $j \in \{1,2\}$, a preference $Q_j' \in \mathcal{P}$, and an outcome $x\in A$. Assume without loss of generality that $j=1$. We show that $\hv_{U(x,Q_1)}(Q_1,Q_2)\geq \hv_{U(x,Q_1)}(Q_1',Q_2)$. Let $(R_1,R_2) \in \du^2$ be a preference profile such that $Q_i=R_i|_A$ for all $i\in \{1,2\}$, and let $R_1' \in \du$ be such that $Q_1'=R_1'|_A$. By the construction of $\hv$, \begin{align*}
     \hv_{U(x,Q_1)}(Q_1,Q_2)&=\varphi_{U(x,Q_1)}(R_1,R_2)=\varphi_{U(x,R_1)}(R_1,R_2) \mbox{ and }\\ \hv_{U(x,Q_1)}(Q_1',Q_{2})&=\varphi_{U(x,Q_1)}(R_1',R_{2})=\varphi_{U(x,R_1)}(R_1',R_{2}).
 \end{align*} Since $\varphi$ is strategy-proof, $\varphi_{U(x,R_1)}(R_1,R_2)\geq \varphi_{U(x,R_1)}(R_1',R_{2})$. Hence, $\hv_{U(x,Q_1)}(Q_1,Q_2)\geq \hv_{U(x,Q_1)}(Q_1',Q_{2}).$  Therefore, $\hv$ is strategy-proof. By \cite{gibbard1977manipulation}, $\hv$ is random dictatorial. Hence there exist $(\epsilon_1,\epsilon_2)\in [0,1]^2$ with $\epsilon_1+\epsilon_2=1$ such that for all $(Q_1,Q_2)\in \mathcal{P}^2$ and all $a\in A$, $$\hv_a(Q_1,Q_2)=\sum_{\{i\mid \tau(Q_i)=a\}}\epsilon_i.$$
Therefore, by the definition of $\hv$, for all $(R_1,R_2)\in\du^2$ and all $X\in \mathcal{A}$, $$\varphi_X(R_1,R_2)=\sum_{\{i\mid \tau(R_i)=X\}}\epsilon_i,$$ 
implying $\varphi$ is random dictatorial.  This completes the proof of the base case.
\end{proof}

\section{Characterization of Unanimous and Strategy-proof PSCCs on \texorpdfstring{$\de$}{D\_E}}\label{sec:de}
In this section, we explore the structure of unanimous and strategy-proof PSCCs on the $\de$ domain. We first define a new class of PSCCs that we refer to as random bi-dictatorial rules. These rules are a probabilistic generalization of the bi-dictatorial rules introduced in \cite{barbera2001strategy}. A DSCC $f:\de^n \to \mathcal{A}$ is bi-dictatorial if there exist agents $i,j\in N$ (possibly identical) such that for every $R_N\in \de$, $f(R_N)=\tau(R_i)\cup\tau(R_j)$. Next, we proceed to formally define a random bi-dictatorial rule. To do so, we denote by $N^{(2)}$ the set of ordered pairs from $N$ where the first element is less than the second. More formally, $N^{(2)}=\{(i,j)\in N^2\mid i\leq j\}$.

\begin{definition}
    A PSCC $\varphi: \de^n\to \Delta \mathcal{A}$ is a 
    random bi-dictatorial rule if there exist $\beta_{ij}\geq 0$ for all $(i,j)\in N^{(2)}$ with  $$\sum_{(i,j)\in N^{(2)}}\beta_{ij}=1$$ such that for all $R_N\in \de^n$ and all $X\in \mathcal{A}$ 
    $$\varphi_X(R_N)=\sum_{\{(i,j)\in N^{(2)} \mid \tau(R_i)\cup \tau(R_j)=X\}}\beta_{ij}.$$

    We call $\{\beta_{ij}\}_{(i,j)\in N^{(2)}}$ as the parameters of the PSCC $\varphi$.
\end{definition}

  The following example provides an illustration of a random bi-dictatorial rule.
  
      \begin{example}
Let $N=\{1,2,3\}$ and $A=\{a,b,c\}$. Here $N^{(2)}=\{(1,1),(1,2),(1,3),(2,2),(2,3),(3,3)\}$. Consider the random bi-dictatorial rule $\varphi$ with parameters
$\beta_{11}=0.2$, $\beta_{12}=0.2$, $\beta_{13}=0$, 
$\beta_{22}=0.1$, $\beta_{23}=0.4$, and $\beta_{33}=0.1$. Suppose that at a profile $R_N\in \de^3$, we have 
$\tau(R_1)=\{a\}$, $\tau(R_2)=\{a\}$, and $\tau(R_3)=\{b\}$. Then
\begin{align*}
    \varphi_{\{a\}}(R_N)&=\beta_{11}+\beta_{22}+\beta_{12}=0.5,\\
    \varphi_{\{b\}}(R_N)&=\beta_{33}=0.1, \text{and}\\
    \varphi_{\{a,b\}}(R_N)&=\beta_{13}+\beta_{23}=0.4,
\end{align*}
 while all other subsets of $A$ receive probability zero. \hfill $\square$
\end{example}
   \begin{remark}
     A random bi-dictatorial rule $\varphi: \de^n\to \Delta \mathcal{A}$ is characterized by $\frac{n(n+1)}{2}$ parameters. When the parameters corresponding to distinct agent pairs $(i, j)$ are set to zero, the rule collapses to a standard random dictatorship. Thus, random bi-dictatorial rules serve as a natural generalization of random dictatorial rules. Moreover, these rules satisfy the tops-only property, and as every preference in the domain $\de$ has a singleton set as top-ranked, a random bi-dictatorial rule assigns positive probability exclusively to the singletons and doubletons composed of the agents' top-ranked alternatives. Formally, if $\varphi$ is a random bi-dictatorial rule, for all $R_N\in \de^n$, $\varphi_X(R_N)>0$ if $X\subseteq \tau(R_N)$ and $|X|\leq 2$.
    \hfill $\square$
    \end{remark}

    % Moreover, for two profiles $R_N$ and $(R_i',R_{-i})$ with $\tau(R_i)\neq \tau(R'_i)$, the outcomes of $\varphi$ at $R_N$ and $(R_i',R_{-i})$ remains the same for all sets other than the sets of the form $\tau(R_i)\cup \{w\}$ and $\tau(R'_i)\cup \{w\}$ where $w\in A$, i.e.,
    % \begin{equation}
    %     \varphi_{X}(R_N)=\varphi_X(R_N') \text{ for all }X\notin \{\tau(R_i)\cup \{w\},\tau(R'_i)\cup \{w\}\} \text{ where }w\in A.
    % \end{equation}
Next we state an important property of the random bi-dictatorial rules that shows every random bi-dictatorial rule is a convex combination of bi-dictatorial DSCCs introduced in \cite{barbera2001strategy}. Let $\{\phi^1,\ldots, \phi^r\}$ be a set of PSCCs. We say a PSCC $\varphi:\mathcal{D}^n \to \Delta \mathcal{A}$ is a convex combination of a set of PSCCs $\{\phi^1,\ldots, \phi^r\}$ if there exists a set of non-negative numbers $\{\alpha_1,\ldots, \alpha_r\}$ with $\sum_{l=1}^r\alpha_l=1$ such that for all $R_N\in \de^n$ and all $X\in \mathcal{A}$, $$\varphi_X(R_N)=\sum_{l=1}^r\alpha_l\phi^l_X(R_N).$$ 
\begin{lemma}\label{lem_uk_1}
    Every random bi-dictatorial rule is a convex combination of bi-dictatorial rules.
\end{lemma}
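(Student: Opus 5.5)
The plan is to write the random bi-dictatorial rule explicitly as a mixture, with its own parameters as the weights. Fix a random bi-dictatorial rule $\varphi:\de^n\to\Delta\mathcal{A}$ with parameters $\{\beta_{ij}\}_{(i,j)\in N^{(2)}}$.

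For each pair $(i,j)\in N^{(2)}$, define the DSCC $f^{ij}:\de^n\to\mathcal{A}$ by $f^{ij}(R_N)=\tau(R_i)\cup\tau(R_j)$. This is well defined, since $\tau(R_i)$ and $\tau(R_j)$ are singletons by Remark \ref{rem_2}, so the union is a non-empty set of size one or two. By definition $f^{ij}$ is bi-dictatorial; when $i=j$ it is dictatorial, which is the degenerate bi-dictatorial case with identical agents. We view $f^{ij}$ as a PSCC with $f^{ij}_X(R_N)=1$ if $X=\tau(R_i)\cup\tau(R_j)$ and $f^{ij}_X(R_N)=0$ otherwise.

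Take the finite family $\{f^{ij}\}_{(i,j)\in N^{(2)}}$ with weights $\alpha_{ij}=\beta_{ij}$. These weights are non-negative and sum to one, as required. For any $R_N\in\de^n$ and $X\in\mathcal{A}$,
$$\sum_{(i,j)\in N^{(2)}}\beta_{ij}f^{ij}_X(R_N)=\sum_{\{(i,j)\in N^{(2)}\mid \tau(R_i)\cup\tau(R_j)=X\}}\beta_{ij}=\varphi_X(R_N).$$
The first equality holds because $f^{ij}_X(R_N)$ is the indicator of $\tau(R_i)\cup\tau(R_j)=X$. The second equality is the definition of $\varphi$. So $\varphi$ is the claimed convex combination.

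There is no real obstacle here. The only points to check are bookkeeping: the weights must be indexed by the $\frac{n(n+1)}{2}$ pairs in $N^{(2)}$, and pairs $(i,i)$ must count as bi-dictatorial rules, which the definition permits since the two agents may be identical.
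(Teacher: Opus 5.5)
Your proposal is correct and follows the paper's proof essentially line for line. The paper also takes the bi-dictatorial rules $f^{ij}(R_N)=\tau(R_i)\cup\tau(R_j)$ for $(i,j)\in N^{(2)}$ with weights $\beta_{ij}$, rewrites $\varphi_X(R_N)$ as $\sum_{(i,j)\in N^{(2)}}\beta_{ij}f^{ij}_X(R_N)$ via the indicator identity, and concludes using $\sum_{(i,j)\in N^{(2)}}\beta_{ij}=1$.
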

\begin{proof}
    Let $\varphi$ be a random bi-dictatorial rule with parameters
$\{\beta_{ij}\}_{(i,j)\in N^{(2)}}$. Further, let $f^{ij}$ for $(i,j)\in N^{(2)}$ be the bi-dictatorial rule corresponding to $(i,j)$. That is, for all $R_N\in \de^n$, $f^{ij}(R_N)=\tau(R_i)\cup \tau(R_j)$. From the definition of a random bi-dictatorial rule, we have for all $R_N\in \de^n$ and all $X\in \mathcal{A}$,
\begin{align*}
    \varphi_X(R_N)&=\sum_{\{(i,j)\in N^{(2)} \mid \tau(R_i)\cup \tau(R_j)=X\}}\beta_{ij} \\
    &=\sum_{(i,j)\in N^{(2)}}\beta_{ij} 1_{\{\tau(R_i)\cup \tau(R_j)=X\}}\\
    &=\sum_{(i,j)\in N^{(2)}}\beta_{ij} f^{ij}_X(R_N),
\end{align*}
where the last equality follows as $1_{\{\tau(R_i)\cup \tau(R_j)=X\}}=f^{ij}_X(R_N)$. This, together with $\sum_{(i,j)\in N^{(2)}}\beta_{ij}=1$, shows that $\varphi$ is a convex combination of the bi-dictatorial rules.
\end{proof}

\begin{remark}\label{rem_star}
   In view of Lemma \ref{lem_uk_1} and the fact that bi-dictatorial rules are strategy-proof on the $\de$ domain (Theorem 3.2 of \cite{barbera2001strategy}), we can conclude that random bi-dictatorial rules are strategy-proof on the $\de$ domain for any number of agents. Moreover, it follows from the definition of a random bi-dictatorial rule that random bi-dictatorial rules are unanimous. \hfill $\square$ 
\end{remark}

We now turn to our results in this section. We begin with a crucial theorem establishing that any unanimous and strategy-proof PSCC satisfies the tops-only property. Furthermore, at any given preference profile, the PSCC assigns positive probability exclusively to subsets that have a cardinality of at most two, provided they are subsets of the union of the agents' top-ranked sets.

\begin{theorem} \label{de_tops-only}
Let $\varphi: \de^n \to \Delta(\mathcal{A})$ be a unanimous and strategy-proof PSCC. Then $\varphi$ is tops-only, and for all profiles $R_N \in \mathcal{D}_E^n$, $\varphi_X(R_N) > 0$ implies $X \subseteq \tau(R_N)$ with $|X| \leq 2$.
\end{theorem}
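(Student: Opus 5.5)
I would follow the architecture of the base case of Theorem \ref{du_n}, adapted to $\de$. The first step is support. Let $\tau(R_i)=\{a_i\}$ and $T=\{a_1,\dots,a_n\}$. I aim to show that $\varphi_X(R_N)>0$ forces $X\subseteq T$ and $|X|\le 2$. Tops-onlyness is proved separately in the third step.

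\textbf{Step 1: support inside $T$.} I proceed by induction on the number of agents who deviate from a unanimous profile. Starting from a profile where everyone reports $R^{a}$, move agents one at a time to their true preferences. For the support claim I switch agents, one at a time, to auxiliary CEUCEP preferences. These have the form $\bar R^{xy}\equiv\{x\}\{x,y\}\{y\}\cdots$. I also use their multi-alternative analogues, in which the sets contained in the current top-union are ranked above every set that contains an alternative outside it. Such a preference exists in $\de$: give the alternatives in the top-union utilities near $1$ and all other alternatives very negative utilities. Then any set meeting the complement has an average below every set inside the top-union.

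Let $S$ be the current top-union. Strategy-proofness applied to the upper contour set consisting of the subsets of $S$ shows the following. When an agent with such a preference is added, the mass on $\{X\mid X\subseteq S\cup\{x_i\}\}$ cannot drop below $1$. Otherwise the deviator could report a preference whose top lies in $S$ and recover full mass on that upper contour set, which would be a manipulation. Having established this for the auxiliary profiles, I transfer it to arbitrary profiles with the same tops. The tool is the fact that $\varphi_{U^-}$ is preserved across preferences with the same relevant upper contour sets, exactly as in Claims \ref{claim dictator 1}--\ref{claim dictator 2}.

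\textbf{Step 2: no sets of size $\ge 3$.} This is the step I expect to be the main obstacle, since the structure differs from $\du$. Under equal probabilities, $\{x,y,z\}$ can never be ranked above all of $\{x\},\{x,y\},\{x,z\}$ and below both $\{y\},\{z\}$ simultaneously in a way that isolates it. I would exploit the constraint that $\{x,y,z\}$ always lies between its pair averages. Given any $x$-top agent and a set $X\ni x$ with $|X|\ge3$, I construct a utility $v$ with two properties: $X$ sits just below the doubleton $\{x,w\}$ that the other agents' tops force, and the \emph{other} agent can report a preference placing $X$ near the bottom of the relevant contour. Then I run the argument of Claim \ref{claim dictator 3}: with preferences of type $\hat R$ and $\breve R$, I push the mass on $X$ to zero at one profile and propagate this equality via Claim \ref{claim dictator 2}-style invariance. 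The first thing I would try is the three-alternative case $T=\{a,b,c\}$ with $X=T$. There I take $v(a)=1$ and $v(b),v(c)$ chosen so that $T$ and $\{b,c\}$ are indifferent or reversed across two reports. Then I show $\varphi_T$ must equal both a quantity forced to be zero by unanimity on a sub-coalition and its complement. Afterwards I induct on $|X|$, using the fact that the average of $X$ moves monotonically when extreme elements are removed.

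\textbf{Step 3: tops-only.} Once the support is contained in the singletons and doubletons of $T$, I argue that each agent $i$ can be replaced by any preference with the same top without changing the outcome. Take $R_i,R_i'$ with common top $\{a\}$. The relevant sets are $\{a\}$, $\{a,y\}$, $\{y\}$, and $\{y,z\}$ for $y,z\in T$. I choose an intermediate preference $R_i''$ that agrees with $R_i$ on the ordering of these sets in one half and with $R_i'$ in the other. Applying strategy-proofness in both directions gives equality of the cumulative probabilities on every upper contour set restricted to this finite family. Hence the probabilities themselves coincide, because the family can be strictly linearly ordered in two ways whose contour sets jointly separate every element. Chaining $R_i\to R_i''\to R_i'$ and iterating over agents yields tops-onlyness.
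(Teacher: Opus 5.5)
Your proposal has genuine gaps: Steps 2 and 3 do not go through, and the transfer half of Step 1 is missing. Only the auxiliary-profile part of Step 1 is established.

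\textbf{What works.} Start from unanimity at one of the tops, say $x_1$ (not an arbitrary $a$). Let each newcomer report a preference in which the subsets of $S\cup\{x_i\}$ form a strict top segment. Then the reversion argument shows that these particular auxiliary profiles put all mass on subsets of $T$.

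\textbf{Step 1, transfer.} The transfer to an arbitrary profile with the same tops cannot be done with the tool you cite. For an arbitrary $R_i$ with top $\{x_i\}$, the family of subsets of $T$ is not an upper contour set, so agent $i$'s constraints give no equality on it. The device that performs the transfer in Lemma \ref{lem_1} works only because the support there consists of the agents' own tops, whose probabilities are invariant across same-top preferences. Doubletons $\{x_i,x_j\}$ are nobody's top, so that device tells you $\varphi_{\{x_i\}}$ at the new profile but not where the doubleton mass goes.

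\textbf{Step 3.} This fails for the same reason, in sharper form. If $R_i$ and $R_i'$ share the top but order the supported sets differently, agent $i$'s two incentive constraints give equality only on families that are upper contour sets of \emph{both}. Elsewhere they give opposite one-sided inequalities, and these are compatible with moving mass. Concretely, take $A=\{a,b,c\}$ with utilities $(10,6,0)$ and $(10,4,0)$ on $(a,b,c)$. They give $\{a\}\{a,b\}\{b\}\{a,b,c\}\{a,c\}\{b,c\}\{c\}$ and $\{a\}\{a,b\}\{a,c\}\{a,b,c\}\{b\}\{b,c\}\{c\}$. Once $\{a,b,c\}$ carries no mass, agent $i$'s constraints reduce to three conditions:
\begin{itemize}
\item $\varphi_X(R_i,R_{-i})=\varphi_X(R_i',R_{-i})$ for $X\notin\{\{b\},\{a,c\}\}$;
\item equality of $\varphi_{\{\{b\},\{a,c\}\}}$ at the two profiles;
\item $\varphi_{\{b\}}(R_i,R_{-i})\ge\varphi_{\{b\}}(R_i',R_{-i})$.
\end{itemize}
A shift of mass from $\{b\}$ to $\{a,c\}$ is not excluded. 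No intermediate $R_i''$ helps, since each link of the chain still reverses some pair of supported sets. The missing idea is an argument that uses the incentives of the \emph{other} agents, as agent 2's constraint is used in Claim \ref{claim dictator 2}. Moreover, Step 3 presupposes the support result at every profile, which Step 1 only delivers at the auxiliary ones.

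\textbf{Step 2.} The preferences $\hat R$ and $\breve R$ behind Claim \ref{claim dictator 3} do not exist in $\de$. If $\{x\}$ is top and $\{y\}$ is bottom, then for every $z\notin\{x,y\}$ we have $\frac{v(y)+v(z)}{2}<\frac{v(x)+v(y)}{2}<\frac{v(x)+v(z)}{2}$. So $\{x,y\}$ can be neither second nor second-to-last.

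This obstacle is structural, not technical. The conclusion of Claim \ref{claim dictator 3}, $\varphi_{\{x,y\}}(\bar R^{xy},\bar R^{yx})=0$, is false on $\de$: the bi-dictatorship of agents 1 and 2 puts mass one there. Any argument eliminating sets with at least three elements must therefore exploit a feature of equal weights that separates triples from pairs. You gesture at this (a triple's average lies between its pair averages). But you never specify the preferences, the deviating agents, or the chain of equalities that forces $\varphi_X=0$. The three-alternative sketch (``a quantity forced to be zero by unanimity on a sub-coalition and its complement'') is not an argument that can be checked.

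As it stands, the bound $|X|\le 2$, the extension of the support result to arbitrary profiles, and tops-onlyness all remain unproved.
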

Due to its length and technical nature, the proof of this theorem is relegated to the supplementary material (Appendix \ref{appen_supple}). In view of Theorem \ref{de_tops-only}, if one agent changes her preference, the outcome may change only if the top-ranked set is changed. We now investigate how the change happens when an agent changes her preference to a preference with a different top-ranked set. To do so, we begin with a lemma that asserts the existence of a particular kind of preference relation within this domain. Formally, the lemma establishes that for any pair of distinct alternatives $a, b \in A$, there exists a strict preference $\bar{R} \in \mathcal{D}_E$ under which the singleton $\{a\}$, the doubleton $\{a, b\}$, and the singleton $\{b\}$ are ranked as the first, second, and third most-preferred sets, respectively. Furthermore, for every remaining alternative $x \in A \setminus \{a, b\}$, the doubleton sets $\{a, x\}$ and $\{b, x\}$ are ranked consecutively in $\bar{R}$. This specific preference structure will play a crucial role in the proof of the main theorem of this section. The formal proof of the lemma can be found in the supplementary material (Appendix \ref{appen_supple}).

\begin{lemma}\label{de_lem_5}
 For distinct $a,b\in A$, there exists a strict preference $\bar{R}\in \de$ such that
    \begin{enumerate}
        \item [(i)] $\bar{R}\equiv \{a\}\{a,b\}\{b\}\cdots$ and
        \item [(ii)] for any $x\in A\setminus\{a,b\}$, 
        \ $\{a,x\}$ and $\{b,x\}$ are adjacent in $\bar{R}$, i.e., there is no $W\in \mathcal{A}$ such that $ \{a,x\}\bar{P}W\bar{P}\{b,x\}$.
    \end{enumerate}
\end{lemma}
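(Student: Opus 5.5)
The plan is to build a utility function $v$ by perturbation. Start from a degenerate configuration in which $a$ and $b$ have the same utility and everything else is generic. Then split $v(a)$ and $v(b)$ by a tiny amount $\delta>0$ and use continuity to keep all the required orderings. Strictness of $\bar R$ will come from choosing all values generically, which avoids finitely many linear equalities.

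\textbf{Step 1 (the $\delta=0$ configuration).} Fix $v(a)=1$ and choose $v(x)\in(0,\tfrac14)$ for $x\in A\setminus\{a,b\}$ generically. Consider the limiting utility with $v(b)=v(a)=1$.

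For any $W\in\mathcal A$, the average $\frac{1}{|W|}\sum_{w\in W}v(w)$ has the form $\frac{c+\sum_{y\in W\setminus\{a,b\}}v(y)}{|W|}$ with $c=|W\cap\{a,b\}|\in\{0,1,2\}$.

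Compare this with the value $\frac{1+v(x)}{2}$ shared by $\{a,x\}$ and $\{b,x\}$. Equality is a linear equation in the free values $(v(y))_{y\ne a,b}$, possibly with a constant term. This equation is identically satisfied only when $|W|=2$, $c=1$ and $W\setminus\{a,b\}=\{x\}$, i.e.\ only for $W\in\{\{a,x\},\{b,x\}\}$. For instance, $W=\{a,b,x,y\}$ gives $\frac{2+v(x)+v(y)}{4}=\frac{1+v(x)}{2}$, i.e.\ $v(y)-v(x)=0$, a nontrivial equation.

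Genericity therefore gives a positive gap $g>0$. Every $W\notin\{\{a,x\},\{b,x\}\}$ has average at distance at least $g$ from $\frac{1+v(x)}{2}$, for every $x$.

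\textbf{Step 2 (split $a$ and $b$).} Set $v(b)=1-\delta$ with $\delta>0$ small. Every set average moves by at most $\delta$.

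Condition (ii) then holds as soon as $\delta<g/2$. The averages of $\{a,x\}$ and $\{b,x\}$ lie within $\delta/2$ of $\frac{1+v(x)}{2}$. Every other $W$ stays at distance at least $g-\delta>\delta/2$ from that value, so no $W$ can fall strictly between the two.

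For (i), the top three sets are $\{a\}$ with average $1$, $\{a,b\}$ with $1-\delta/2$, and $\{b\}$ with $1-\delta$. Every other set $W$ has average at most $\max\{\frac{2+\frac14}{3},\frac{1+\frac14}{2},\tfrac14\}<\tfrac34<1-\delta$ for $\delta<\tfrac14$. This bound holds because $W$ contains at most two elements of utility near $1$, and any additional element has utility below $\tfrac14$.

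\textbf{Step 3 (strictness).} The preference $\bar R$ must be strict, i.e.\ all $2^m-1$ averages must be distinct. Each coincidence of two distinct sets' averages is a nontrivial linear equation in the vector $(v(y))_{y\in A}$. So choose $\delta$ and the values $v(x)$ outside this finite union of hyperplanes, which is possible since all constraints above are open conditions. This also ensures Assumption~\ref{ass_1}.

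The main obstacle is Step 1. A naive choice with values near $0$ fails, because sets like $\{a,b,y,z\}$ have average near $\tfrac12$, which can land between $\{a,x\}$ and $\{b,x\}$. The fix is to take $\delta$ much smaller than the generic separation $g$, and the care lies in checking that the only sets whose average coincides \emph{identically} with $\frac{1+v(x)}{2}$ at $\delta=0$ are $\{a,x\}$ and $\{b,x\}$.
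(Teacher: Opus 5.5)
Your argument is correct. There are two small points of presentation.

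First, $\frac{2+\frac14}{3}=\frac34$, so the strict inequality in your bound for (i) fails as displayed. This does not matter. Every $v(x)<\frac14$ strictly, so each set other than $\{a\},\{a,b\},\{b\}$ in fact has average strictly below $\frac34$. In any case $\frac34<1-\delta$ for $\delta<\frac14$, so (i) is unaffected.

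Second, your claim in Step 3 that ``all constraints above are open conditions'' needs the gap $g$ to be continuous in $(v(x))_{x\neq a,b}$. It is: $g$ is a minimum of finitely many continuous functions, and it is positive off the Step~1 hyperplanes. Hence the admissible parameters $(\delta,(v(x))_{x\neq a,b})$ form a nonempty open set, and no finite union of hyperplanes can cover it.

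The paper defers its proof of this lemma to the supplementary material, so I cannot compare your argument with it step by step. The existence results it does prove in the text (Appendix~\ref{appen_2}, for $\du$) write down explicit numerical utilities and assessments and check each required inequality by hand.

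Your route is softer and more structural. You identify why adjacency is possible at all: in the degenerate configuration $v(a)=v(b)$, the only sets whose average coincides \emph{identically} with $\frac{1+v(x)}{2}$ are $\{a,x\}$ and $\{b,x\}$. You then split $a$ and $b$ by less than half the resulting generic gap. This avoids any case analysis over set sizes, works uniformly in $m$, and explains your observation that naive small values fail. The price is that it is nonconstructive, since the admissible $\delta$ depends on an unquantified $g$. An explicit construction, by contrast, yields a concrete preference that can be verified directly.
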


In the following lemma, we demonstrate the change in outcome when an agent changes her top-ranked set.

\begin{lemma}\label{de_lem_6}
	Consider $R_N\in\de^n$ and $R_i'\in \de$ such that $\tau(R_i)=\{p\}$ and $\tau(R_i')=\{r\}$ for some $p\neq r$. Then,
	\begin{enumerate}
				\item [(i)] $\varphi_{\{\{p\},\{p,r\},\{r\}\}}(R_N)=\varphi_{\{\{p\},\{p,r\},\{r\}\}}(R_i',R_{-i})$,
                \item [(ii)]$\varphi_X(R_N)=\varphi_X(R_i',R_{-i})$ for all $X\notin \{\{p,w\},\{r,w\}\mid w\in \tau(R_{-i})\cup \{p,r\}\}$,
				 and
				\item [(iii)] $\varphi_{\{\{p,w\},\{r,w\}\}}(R_N)=\varphi_{\{\{p,w\},\{r,w\}\}}(R_i',R_{-i})$ for all $w\in \tau(R_{-i})\setminus \{p,r\}$.
			\end{enumerate}
		
	\end{lemma}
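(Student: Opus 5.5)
By Theorem \ref{de_tops-only}, $\varphi$ is tops-only. So I may replace agent $i$'s preferences by convenient ones with the same tops. I also know that at both $R_N$ and $(R_i',R_{-i})$ all probability sits on singletons and doubletons contained in the respective unions of tops. The plan is to pass through the preference $\bar R$ of Lemma \ref{de_lem_5}, built for the pair $(p,r)$, together with its mirror $\bar R'$ built for $(r,p)$. By tops-only, $\varphi(R_N)=\varphi(\bar R,R_{-i})$ and $\varphi(R_i',R_{-i})=\varphi(\bar R',R_{-i})$. It therefore suffices to compare the outcomes at $(\bar R,R_{-i})$ and $(\bar R',R_{-i})$, using strategy-proofness in both directions.

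For (i), I would apply the strategy-proofness inequality for agent $i$ with true preference $\bar R$ and deviation $\bar R'$, at the upper contour set $\{\{p\},\{p,r\},\{r\}\}$ of $\bar R$. This gives $\varphi_{\{\{p\},\{p,r\},\{r\}\}}(\bar R,R_{-i})\ge \varphi_{\{\{p\},\{p,r\},\{r\}\}}(\bar R',R_{-i})$. The same three sets form the top upper contour set of $\bar R'$, so the reverse inequality also holds, and (i) follows.

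For (ii) and (iii), the support restriction from Theorem \ref{de_tops-only} means only sets in $\mathcal{S}=\{\{x\},\{x,y\}: x,y\in \tau(R_{-i})\cup\{p,r\}\}$ matter. Sets in $\mathcal{S}$ that contain neither $p$ nor $r$ are evaluated identically under $\bar R$ and $\bar R'$ relative to most other sets, but not exactly. So I would exploit the freedom in the construction of Lemma \ref{de_lem_5}. I want to choose $\bar R$ and $\bar R'$ with the same utilities on $A\setminus\{p,r\}$, differing only by swapping the values of $p$ and $r$ (close to each other). Then every set not containing exactly one of $p,r$ occupies the same position relative to every other such set in both orders. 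Each pair $\{p,w\},\{r,w\}$ is adjacent in both orders, by Lemma \ref{de_lem_5}(ii), and the two members swap places. The singletons $\{p\},\{r\}$ and the set $\{p,r\}$ are handled by (i).

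Next I would walk down $\bar R$. For each set $X$, take the upper contour set $U(X,\bar R)$. Whenever $X$ is not one of the swapped pairs, $U(X,\bar R)$ and $U(X,\bar R')$ coincide. Applying strategy-proofness both ways, with $\bar R$ true deviating to $\bar R'$ and vice versa, gives equality of $\varphi$ on these nested upper contour sets. Taking differences of consecutive upper contour sets yields equality of individual probabilities for every $X$ outside the pairs $\{\{p,w\},\{r,w\}\}$. This is (ii). For each pair, the upper contour set ending just after the adjacent pair is common to both orders, which gives equality of the pair's total probability. This is (iii).

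The main obstacle is ensuring that $\bar R$ and $\bar R'$ can indeed be taken as exact mirrors. That is, the swap of $p$ and $r$ must leave all comparisons among sets not separated by the swap unchanged, with no ties introduced. Lemma \ref{de_lem_5} only asserts existence. I expect to revisit its construction and take utilities with $v(p)$ and $v(r)$ very close and extremal, with generic values elsewhere, so that the symmetric construction works.
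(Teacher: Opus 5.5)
Your proposal is correct and follows the paper's proof. The paper also takes agent $i$'s preference to be the one from Lemma \ref{de_lem_5}, builds $R_i'$ by swapping $v(p)$ and $v(r)$, and applies strategy-proofness in both directions to upper contour sets restricted to singletons and doubletons. The obstacle you flag is not real: with equal weights, the swap makes $R_i'$ exactly the image of $R_i$ under the transposition of $p$ and $r$, so strictness and adjacency carry over and you need neither closeness nor extremality of $v(p),v(r)$. The coincidence of upper contour sets holds only after intersecting with $\mathcal{S}$, since three-element sets containing exactly one of $p,r$ need not be adjacent.
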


\begin{proof}  Note that by Theorem \ref{de_tops-only}, $\varphi$ is tops-only, and 
 \begin{equation}\label{eq_de_1}
\begin{aligned}
\varphi_X(R_N)>0 
&\implies X\subseteq \tau(R_N)\ \text{and}\ |X|\leq 2, \text{ and}\\
\varphi_X(R_i',R_{-i})>0 
&\implies X\subseteq \tau(R_i',R_{-i})\ \text{and}\ |X|\leq 2.
\end{aligned}
\end{equation}
 Now, in view of tops-onlyness of $\varphi$, we may assume the following structure of $R_i$, 
    \begin{enumerate}
    \item [(a)]  $R_i$ is a strict preference,
        \item [(b)] $R_i\equiv \{p\}\{p,r\}\{r\}\cdots$ and
        \item [(c)] for any $x\in A\setminus\{p,r\}$, 
        \ $\{p,x\}$ and $\{r,x\}$ are adjacent in $R$.
    \end{enumerate}
The existence of such a preference is shown in Lemma \ref{de_lem_5}. Let $v$ be the utility function corresponding to $R_i$. Again, by tops-onlyness, we may assume that $v'$ (the utility function corresponding to $R_i'$) is of the following form
\[v'(x) = 
  \begin{cases} 
   v(x) & \text{if } x\notin \{p,r\}, \\
  v(r) & \text{if } x= p,\\
  v(p) & \text{if } x=r.
  \end{cases}
\]
This means $R_i'$, will have the same structure as in $R_i$ except for sets involving either $p$ or $r$ (not both). Formally,
\begin{enumerate}
    \item [(a)] $R_i'$ is a strict preference,
        \item [(b)] $R_i'\equiv \{r\}\{p,r\}\{p\}\cdots$ and
        \item [(c)] for any $x\in A\setminus\{p,r\}$, 
        \ $\{r,x\}$ and $\{p,x\}$ are adjacent in $R_i'$.
    \end{enumerate}

    Let $\mathcal{A}_{1,2}$ be the set of all elements of $\mathcal{A}$ with cardinality one or two. Note that by the above construction of $R'_i$ from $R_i$, the relative position of a set $Z\in \mathcal{A}_{1,2}$, not involving $p$ and $r$, remains the same in both preferences with respect to all other sets in $\mathcal{A}_{1,2}$. Thus, for any $Z\in \mathcal{A}_{1,2}$  
\begin{equation}\label{de_eq_0.1}
U(Z,R_i)\cap\mathcal{A}_{1,2}=U(Z,R'_j)\cap{\mathcal{A}_{1,2}}.
\end{equation}
 Moreover, for $w\in A\setminus \{p,r\}$, the relative position of $\{p,w\}$ (or $\{r,w\}$) in $\mathcal{A}_{1,2}$ change with respect to only $\{r,w\}$ (or $\{p,w\}$). Thus, for any $w\in A\setminus \{p,r\}$
 \begin{equation}\label{de_eq_0.2}
   U(\{r,w\},R_i)\cap{\mathcal{A}_{1,2}}=U(\{p,w\},R'_i)\cap{\mathcal{A}_{1,2}} \text{ and } U^{-}(\{p,w\},R_i)\cap{\mathcal{A}_{1,2}}=U^-(\{r,w\},R'_i)\cap{\mathcal{A}_{1,2}}.
\end{equation}

  We now complete the proof of the lemma. For (a), note that this directly follows by applying strategy-proofness as $R_i\equiv \{p\}\{p,r\}\{r\}\cdots$  and 
    $R_i'\equiv \{r\}\{p,r\}\{p\}\cdots$. For (b), assume $X\notin \{\{p,w\},\{r,w\}\mid w\in \tau(R_{-i})\cup \{p,r\}\}$. If $X$ is such $|X|\geq 3$, by (\ref{eq_de_1}), $\varphi_X(R_N)=0=\varphi_X(R_i',R_{-i})$.

     So, assume that $|X|\leq 2$. By (\ref{eq_de_1}), as  probabilities are allotted only to sets in $\mathcal{A}_{1,2}$, by strategy-proofness and (\ref{de_eq_0.1}), we have $$\varphi_{U(X,R_i)}(R_N)=\varphi_{U(X,R_i)}(R_i',R_{-i}).$$ Moreover, as $I(X,R_i)=I(X,R_i')=X$, (\ref{de_eq_0.1}) implies $U^-(X,\bar{R}_i)\cap {\mathcal{A}_{1,2}}=U^-(X,R_i')\cap {\mathcal{A}_{1,2}}$. Thus, again by strategy-proofness, $$\varphi_{U^-(X,R_i)}(R_N)=\varphi_{U^-(X,R_i)}(R_i',R_{-i}).$$ Combining these two above equations together, we have $\varphi_{X}(R_N)= \varphi_{X}(R'_i,R_{-i})$.

 Finally, we show (c). Consider $w\in A\setminus \{p,r\}$. By (\ref{eq_de_1}), as  probabilities are allotted only to sets in $\mathcal{A}_{1,2}$, strategy-proofness and (\ref{de_eq_0.2}) together imply that $$\varphi_{U(\{r,w\},R_i)}(R_N)=\varphi_{U(\{r,w\},R_i)}(R_i',R_{-i}) \text{ and } \varphi_{U^-(\{x,w\},R_i)}(R_N)=\varphi_{U^-(\{x,w\},R_i)}(R_i',R_{-i}).$$ As $U(\{r,w\},R_i)=U^-(\{p,w\},R_i) \cup \{\{p,w\},\{r,w\}\}$, the two equations above imply that $$\varphi_{\{\{p,w\},\{r,w\}\}}(R_N)=\varphi_{\{\{p,w\},\{r,w\}\}}(R_i',R_{-i}).$$ This completes the proof of the lemma.  \end{proof}

With the underlying structure established by Theorem \ref{de_tops-only} and Lemma \ref{de_lem_6}, we proceed to characterize the entire class of unanimous and strategy-proof PSCCs on the $\de$ domain. We structure our results according to the number of agents. This organization is motivated by the fact that our findings for $n \geq 4$ agents further depend on the number of alternatives. 

\subsection{The case \texorpdfstring{$n\leq 3$}{n <= 3}}

Our next result characterizes the unanimous and strategy-proof PSCCs as random bi-dictatorial rules when there are at most three agents.

\begin{theorem}\label{de_2}
Let $n\leq 3$ and $\varphi:\de^n\to \Delta \mathcal{A}$ be a PSCC. Then $\varphi$ is unanimous and strategy-proof if and only if $\varphi$ is a random bi-dictatorial rule.
\end{theorem}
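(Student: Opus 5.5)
The sufficiency direction is Remark \ref{rem_star}: random bi-dictatorial rules are unanimous and strategy-proof for any $n$. For necessity, the plan is to use Theorem \ref{de_tops-only} to view $\varphi$ as a function of the tops vector $(a_1,\dots,a_n)\in A^n$, with $\tau(R_i)=\{a_i\}$. At every profile it is supported on singletons and doubletons of $\{a_1,\dots,a_n\}$.

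\textbf{Step 1: extract the parameters.} I would define the parameters from profiles with few distinct tops, then show they are constant.
\begin{itemize}
\item For $n=2$: set $\beta_{11}=\varphi_{\{a\}}(a,b)$, $\beta_{22}=\varphi_{\{b\}}(a,b)$ and $\beta_{12}=\varphi_{\{a,b\}}(a,b)$.
\item Independence of the pair $(a,b)$: by Lemma \ref{de_lem_6}(ii), changing agent 1's top from $a$ to $c\notin\{a,b\}$ leaves $\varphi_{\{b\}}$ unchanged. Part (iii) with $w=b$ gives $\varphi_{\{a,b\}}+\varphi_{\{c,b\}}$ constant. Since $\{a,b\}$ and $\{c,b\}$ each receive zero at the profile where they are not subsets of the tops, the doubleton mass is preserved. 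The singleton mass then follows because total mass is one.
\item Chaining such changes, the three numbers do not depend on $(a,b)$ with $a\ne b$, and unanimity handles $a=b$.
\end{itemize}
For $n=3$, I would likewise define $\beta_{ii}$ and $\beta_{ij}$ from profiles with three distinct tops $(a,b,c)$: $\beta_{ii}$ is the mass on $\{a_i\}$ and $\beta_{ij}$ the mass on $\{a_i,a_j\}$. Since $|X|\le2$, every supported set is of one of these forms. The same swapping argument, using Lemma \ref{de_lem_6}(ii),(iii) with $m\ge 3$ alternatives and moving one agent at a time to a fresh alternative when possible, shows these numbers are independent of the labels.

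\textbf{Step 2: profiles with coinciding tops.} The real content is the profiles where some tops coincide, e.g.\ $(a,a,b)$ for $n=3$. I want to show that $\varphi_{\{a\}}=\beta_{11}+\beta_{22}+\beta_{12}$, $\varphi_{\{b\}}=\beta_{33}$ and $\varphi_{\{a,b\}}=\beta_{13}+\beta_{23}$. Start from $(a,c,b)$ with $c$ distinct and move agent 2 from $c$ to $a$, applying Lemma \ref{de_lem_6} with $p=c$, $r=a$.
\begin{itemize}
\item Part (ii) fixes $\varphi_{\{b\}}=\beta_{33}$.
\item Part (i) fixes the total on $\{\{a\},\{a,c\},\{c\}\}$, which equals $\beta_{11}+\beta_{12}+\beta_{22}$ before the move and becomes $\varphi_{\{a\}}$ after it.
\item Part (iii) with $w=b$ fixes $\varphi_{\{c,b\}}+\varphi_{\{a,b\}}$, giving $\beta_{23}+\beta_{13}$ after the move.
\end{itemize}
The fully coincident case is unanimity, and the case $(a,b,b)$ and its permutations are symmetric. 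When $m=3$ a fresh alternative $c$ still exists for profiles with at most two distinct tops, so the argument goes through. For $n=2$ this step reduces to Step 1.

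\textbf{Main obstacle.} I expect the hardest part to be the label-independence in Step 1 for $n=3$ when $m=3$. Here every profile with three distinct tops uses all three alternatives, so one cannot move an agent to a fresh alternative. Instead I would pass through a coincident profile: compare $(a,b,c)$ with $(b,b,c)$ via agent 1, and then $(b,b,c)$ with $(b,a,c)$ via agent 2. Lemma \ref{de_lem_6}(i),(iii) applied along both legs should pin down how mass on $\{a,c\}$ versus $\{b,c\}$ is reassigned between agents. The delicate point is showing that individual doubleton masses, not just sums, are transported consistently. This requires combining (iii) for both $w$ values with the zero-support constraints of Theorem \ref{de_tops-only}. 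If that is not enough, I would invoke strategy-proofness directly with a preference from Lemma \ref{de_lem_5} to separate the two doubletons.
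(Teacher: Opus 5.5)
Your sufficiency direction, the case $n=2$, Step~2, and the label-independence for $n=3$ when $m\ge 4$ are sound. With a fresh alternative, Lemma \ref{de_lem_6}(i)--(iii) plus the zero-support part of Theorem \ref{de_tops-only} transport every singleton and doubleton mass individually, which is simpler than the paper's chains. The gap is the case you flag, $n=3$ and $m=3$, which is inside the theorem's scope, and the route you sketch there cannot close it.

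Write profiles by their tops. Along $(a,b,c)\to(b,b,c)\to(b,a,c)$, Lemma \ref{de_lem_6} gives only three conservation laws: the totals on $\{\{a\},\{a,b\},\{b\}\}$, on $\{c\}$, and on $\{\{a,c\},\{b,c\}\}$ agree at $(a,b,c)$ and $(b,a,c)$. At $(b,b,c)$ all merged mass sits on $\{b\}$ and $\{b,c\}$. Hence every other strategy-proofness inequality between profiles on this path, for any preferences (including those of Lemma \ref{de_lem_5}), holds automatically once these sums match.

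A concrete assignment shows this. Give both $(a,b,c)$ and $(b,a,c)$ the same lottery: $0.1$ on each singleton, $0.3$ on $\{a,b\}$, $0.3$ on $\{a,c\}$ and $0.1$ on $\{b,c\}$. At $(b,b,c)$ put $0.5,0.1,0.4$ on $\{b\},\{c\},\{b,c\}$. Nothing on this path is violated, yet the doubleton of agents 1 and 3 has mass $0.3$ at one profile and $0.1$ at the other. Also, ``(iii) for both $w$ values'' is unavailable, since with $m=3$ there is only one $w$.

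The missing idea is the paper's Claim \ref{de_cl_1}. First show that singleton masses are label-independent by routing through a \emph{different} coincident profile, where Lemma \ref{de_lem_6}(ii) freezes the singleton and (i) together with zero support moves it intact. For example, $\varphi_{\{a\}}(a,b,c)=\varphi_{\{a\}}(a,c,c)=\varphi_{\{b\}}(b,c,c)=\varphi_{\{b\}}(b,a,c)$: agent 2 moves $b\to c$, then agent 1 moves $a\to b$, then agent 2 moves $c\to a$. Do the same for agents 2 and 3.

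Then isolate each doubleton with a chain that merges its two supporters via (i), and subtract the known singletons:
\begin{itemize}
\item Your leg equations then yield $\varphi_{\{a,b\}}(a,b,c)=\varphi_{\{a,b\}}(b,a,c)$.
\item Compare $(a,b,c)\to(a,b,a)$ with $(b,a,c)\to(b,a,b)$. Use $\varphi_{\{a\}}(a,b,a)=\varphi_{\{b\}}(b,a,b)$, which follows from the same singleton argument through $(a,c,a)$ and $(b,c,b)$. This yields $\varphi_{\{a,c\}}(a,b,c)=\varphi_{\{b,c\}}(b,a,c)$.
\item The remaining doubleton follows from the sum.
\end{itemize}
The paper runs this argument uniformly for all $m\ge 3$ (Claim \ref{de_cl_1} and Cases 1--2.2.2) and never needs a fresh alternative.
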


\begin{proof} 
The if part of the theorem follows from Remark \ref{rem_star}. For the only-if part, we provide the proof for $n=3$. For $n=2$, the arguments are similar and much less complicated. To show that $\varphi$ is a random bi-dictatorial rule, we first define the set of parameters, $\{\alpha_{pq}\}_{(p,q)\in N^{(2)}}$. For simplifying the notations, in the rest of the proof, we will be denoting a profile in terms of its top-ranked sets of agents. That is, we write  $(\bar{R}_1,\bar{R}_2,\bar{R}_3)\equiv(\{p\},\{q\},\{r\})$ where $\tau(\bar{R}_1)=\{p\}$, $\tau(\bar{R}_2)=\{q\}$, and $\tau(\bar{R}_3)=\{r\}$. Note that as $\varphi$ is tops-only, this identification is sufficient for our purpose. Consider three distinct alternatives $a$, $b$, and $c$ (this is possible as $|A|\geq 3$) and a preference profile $(\{a\},\{b\},\{c\})$ in $\de^3$. Then, define for distinct agents $i,j$
$$\alpha_{ii}\coloneqq\varphi_{\{a\}}(\{a\},\{b\},\{c\})$$ and $$\alpha_{ij}\coloneqq\varphi_{\{a,b\}}(\{a\},\{b\},\{c\}).$$
Next, we show that $\varphi$ is the random bi-dictatorial rule with respect to the parameters $\{\alpha_{pq}\}_{(p,q)\in N^{(2)}}$. We start with showing that $\sum_{(i,j)\in N^{(2)}}\alpha_{ij}=1$. Consider the preference profile $(\{a\},\{b\},\{c\})\in \de^3$. Therefore, by the definition of $\{\alpha_{pq}\}_{(p,q)\in N^{(2)}}$, we have 
\begin{align*}
\alpha_{11}&=\varphi_{\{a\}}(\{a\},\{b\},\{c\}), &
\alpha_{22}&=\varphi_{\{b\}}(\{a\},\{b\},\{c\}), &
\alpha_{33}&=\varphi_{\{c\}}(\{a\},\{b\},\{c\}), \\
\alpha_{12}&=\varphi_{\{a,b\}}(\{a\},\{b\},\{c\}), &
\alpha_{13}&=\varphi_{\{a,c\}}(\{a\},\{b\},\{c\}), &
\alpha_{23}&=\varphi_{\{b,c\}}(\{a\},\{b\},\{c\}).
\end{align*}
As by Theorem \ref{de_tops-only}, $\varphi_X(\{a\},\{b\},\{c\})=0$ for all $X\notin \{\{a\},\{b\},\{c\},\{a,b\},\{b,c\},\{a,c\}\}$, hence,
$\sum_{(i,j)\in N^{(2)}}\alpha_{ij}=\sum_{X\in \mathcal{A}}\varphi_X(\{a\},\{b\},\{c\})=1$.

\vspace{2mm}

We now show that for all $(\{x_1\},\{x_2\},\{x_3\})\in \de^3$ and all $X\in \mathcal{A}$,
\begin{equation}\label{de_feq}
\varphi_X(\{x_1\},\{x_2\},\{x_3\})=\sum_{\{(i,j)\in N^{(2)} \mid \{x_i\}\cup \{x_j\}=X\}}\alpha_{ij}.
\end{equation}
Note that if $(\{x_1\},\{x_2\},\{x_3\})$ is a unanimous profile, then in view of unanimity, (\ref{de_feq}) holds. So, we need to consider two possibilities, either $|\{\{x_1\},\{x_2\},\{x_3\}\}|=3$ or $|\{\{x_1\},\{x_2\},\{x_3\}\}|=2$.  We first consider $|\{\{x_1\},\{x_2\},\{x_3\}\}|=3$. Fix distinct $x,y,z\in A$ and consider the profile $(\{x\},\{y\},\{z\})\in \de^3$. Recall that by the definition of $\{\alpha_{pq}\}_{(p,q)\in N^{(2)}}$, (\ref{de_feq}) holds for the profile $(\{a\},\{b\},\{c\})$. We start with a claim which shows that $\varphi(\{x\},\{y\},\{z\})$ satisfies (\ref{de_feq}) for all $X\notin \{\{x,y\},\{y,z\},\{x,z\}\}$. We prove it for any three distinct $p,q,r$ so that it can be used for the remaining proof. 
\begin{claim}\label{de_cl_1}
    For any distinct $\{p\},\{q\},\{r\}$, $\varphi(\{p\},\{q\},\{r\})$ satisfies (\ref{de_feq}) for all $X\notin \{\{p,q\},\{q,r\},\{p,r\}\}$.
\end{claim}
\textbf{Proof of the claim:} In view of Theorem \ref{de_tops-only}, it is equivalent to show that $\varphi(\{p\},\{q\},\{r\})$ satisfies (\ref{de_feq}) for all $X\in \{\{p\},\{q\},\{r\}\}$. Further, as $\varphi(\{a\},\{b\},\{c\})$ satisfies (\ref{de_feq}) for all $X\in \mathcal{A}$, without loss of generality, this is equivalent to show that $\varphi_{\{p\}}(\{p\},\{q\},\{r\})=\varphi_{\{a\}}(\{a\},\{b\},\{c\})$. We distinguish two cases. First, assume that $p\notin \{b,c\}$, and consider two profiles $(\{a\},\{b\},\{c\})$ and $(\{p\},\{b\},\{c\})$. If $p=a$ then we have $\varphi_{\{p\}}(\{p\},\{b\},\{c\})=\varphi_{\{a\}}(\{a\},\{b\},\{c\})$, else by Lemma \ref{de_lem_6} (i), we have $\varphi_{\{\{a\},\{a,p\},\{p\}\}}(\{p\},\{b\},\{c\})=\varphi_{\{\{a\},\{a,p\},\{p\}\}}(\{a\},\{b\},\{c\})$. Moreover as $p\notin \{b,c\}$ and $a,b,c$ are distinct, Theorem \ref{de_tops-only} yields $\varphi_{\{\{a\},\{a,p\}\}}(\{p\},\{b\},\{c\})=\varphi_{\{\{a,p\},\{p\}\}}(\{a\},\{b\},\{c\})=0$. Hence, $\varphi_{\{p\}}(\{p\},\{b\},\{c\})=\varphi_{\{a\}}(\{a\},\{b\},\{c\})$. Now, as $p\notin \{b,c\}$ and $p,q,r$ are distinct, by Lemma \ref{de_lem_6} (ii), we have $\varphi_{\{p\}}(\{p\},\{b\},\{c\})=\varphi_{\{p\}}(\{p\},\{q\},\{c\})=\varphi_{\{p\}}(\{p\},\{q\},\{r\})$. Combining this with the previous equation gives us $\varphi_{\{p\}}(\{p\},\{q\},\{r\})=\varphi_{\{a\}}(\{a\},\{b\},\{c\})$. 

Now, assume that $p\in \{b,c\}$ and, without loss of generality, assume that $p=b$. Note that this means $b\notin \{q,r\}$. We will show that $\varphi_{\{a\}}(\{a\}\{b\},\{c\})=\varphi_{\{b\}}(\{b\}\{q\},\{r\})$. As $a,b,c$ are distinct, we have
\begin{align*}
  \varphi_{\{a\}}(\{a\},\{b\},\{c\})&=  \varphi_{\{a\}}(\{a\},\{c\},\{c\}) \hspace{20mm}(\text{applying Lemma \ref{de_lem_6} (ii)})\\
  &=\varphi_{\{\{a\},\{a,b\},\{b\}\}}(\{a\},\{c\},\{c\})\hspace{5mm}(\text{as $b\notin \{a,c\}$, by Theorem \ref{de_tops-only}, $\varphi_{\{\{b\},\{b,c\}\}}(\{a\},\{c\},\{c\})=0$)}\\
  &=\varphi_{\{\{a\},\{a,b\},\{b\}\}}(\{b\},\{c\},\{c\})\hspace{5mm}(\text{applying Lemma \ref{de_lem_6} (i)})\\
  &=\varphi_{\{b\}}(\{b\},\{c\},\{c\})\hspace{20mm}(\text{as $a\notin \{b,c\}$, by Theorem \ref{de_tops-only}, $\varphi_{\{\{a\},\{a,b\}\}}(\{b\},\{c\},\{c\})=0$)}\\
  &=\varphi_{\{b\}}(\{b\},\{c\},\{r\})\hspace{20mm}(\text{applying Lemma \ref{de_lem_6} (ii) as $b\notin \{c,r\}$})\\
  &=\varphi_{\{b\}}(\{b\},\{q\},\{r\})\hspace{20mm}(\text{applying Lemma \ref{de_lem_6} (ii) as $b\notin \{q,r\}$}).
\end{align*}
This completes the proof of the claim.
\hfill $\square$

Next we show that $\varphi(\{x\},\{y\},\{z\})$ satisfies (\ref{de_feq}) for all $X\in \{\{x,y\},\{y,z\},\{x,z\}\}$ as well. Without loss of generality, it is enough to show that $\varphi_{\{x,y\}}(\{x\},\{y\},\{z\})=\varphi_{\{a,b\}}(\{a\},\{b\},\{c\})$. Similar to the proof of Claim \ref{de_cl_1}, we consider a few cases.

\vspace{2mm}
\textbf{Case 1:} $\{a,b\}=\{x,y\}$ \\
If $a=x$ and $b=z$, we have $\varphi_{\{a,b\}}(\{a\},\{b\},\{c\})=\varphi_{\{x,y\}}(\{x\},\{y\},\{c\})$. Moreover, as $c$ is distinct from $a,b$, it follows that $c\notin \{x,y\}$. This, together with the fact that $z$ is distinct from $x,y$ and Lemma \ref{de_lem_6} (ii), implies $\varphi_{\{x,y\}}(\{x\},\{y\},\{c\})=\varphi_{\{x,y\}}(\{x\},\{y\},\{z\})$. Thus, $\varphi_{\{x,y\}}(\{x\},\{y\},\{z\})=\varphi_{\{a,b\}}(\{a\},\{b\},\{c\})$. Now, assume that $a=y$ and $b=x$. We have
\begin{align*}
    \varphi_{\{\{a\},\{a,b\},\{b\}\}}(\{a\},\{b\},\{c\})&=\varphi_{\{\{a\},\{a,b\},\{b\}\}}(\{b\},\{b\},\{c\})\hspace{20mm}(\text{applying Lemma \ref{de_lem_6} (i))}\\
    &=\varphi_{\{\{a\},\{a,b\},\{b\}\}}(\{b\},\{a\},\{c\})\hspace{20mm}(\text{applying Lemma \ref{de_lem_6} (i))}.
\end{align*}
However, as by Claim \ref{de_cl_1}, $\varphi_{\{a\}}(\{a\},\{b\},\{c\})=\varphi_{\{b\}}(\{b\},\{a\},\{c\})$ and $\varphi_{\{b\}}(\{a\},\{b\},\{c\})=\varphi_{\{a\}}(\{b\},\{a\},\{c\})$, from the above equation, we have $\varphi_{\{a,b\}}(\{a\},\{b\},\{c\})=\varphi_{\{a,b\}}(\{b\},\{a\},\{c\})$. Further, as $c\notin \{a,b\}$ and $z\notin \{x,y\}=\{a,b\}$, Lemma \ref{de_lem_6} (ii) implies $\varphi_{\{a,b\}}(\{b\},\{a\},\{c\})=\varphi_{\{a,b\}}(\{b\},\{a\},\{z\})$. Combining all these, we have $\varphi_{\{a,b\}}(\{a\},\{b\},\{c\})=\varphi_{\{a,b\}}(\{b\},\{a\},\{z\})$. This completes the proof for Case 1. 

\vspace{2mm}
\textbf{Case 2:} $\{a,b\}\neq \{x,y\}$ \\
Without loss of generality, we assume that $x\notin \{a,b\}$. We further distinguish two cases.

\vspace{2mm}
\textbf{Case 2.1:} $c\notin \{x,y\}$ \\
By Lemma \ref{de_lem_6} (iii), $\varphi_{\{\{a,b\},\{x,b\}\}}(\{a\},\{b\},\{c\})=\varphi_{\{\{a,b\},\{x,b\}\}}(\{x\},\{b\},\{c\})$. However, as $x\notin \{a,b,c\}$ and $a\notin \{x,b,c\}$, by Theorem \ref{de_tops-only}, we have $\varphi_{\{x,b\}}(\{a\},\{b\},\{c\})=\varphi_{\{a,b\}}(\{x\},\{b\},\{c\})=0$. Hence, $\varphi_{\{a,b\}}(\{a\},\{b\},\{c\})=\varphi_{\{x,b\}}(\{x\},\{b\},\{c\})$. Now, if $b=y$, we have $\varphi_{\{a,b\}}(\{a\},\{b\},\{c\})=\varphi_{\{x,y\}}(\{x\},\{y\},\{c\})$, else, by Lemma \ref{de_lem_6} (iii), $\varphi_{\{\{x,b\},\{x,y\}\}}(\{x\},\{b\},\{c\})=\varphi_{\{\{x,b\},\{x,y\}\}}(\{x\},\{y\},\{c\})$. This, together with $y\notin \{x,b,c\}$ and $b\notin \{x,y,c\}$, implies  $\varphi_{\{x,y\}}(\{x\},\{b\},\{c\})=\varphi_{\{x,b\}}(\{x\},\{y\},\{c\})=0$. Thus, we have $\varphi_{\{x,b\}}(\{x\},\{b\},\{c\})=\varphi_{\{x,y\}}(\{x\},\{y\},\{c\})$. Finally, as $c\notin \{x,y\}$ and $z\notin \{x,y\}$, Lemma \ref{de_tops-only} (b) implies $\varphi_{\{x,y\}}(\{x\},\{y\},\{c\})=\varphi_{\{x,y\}}(\{x\},\{y\},\{z\})$. Thus, combining all these we have, $\varphi_{\{a,b\}}(\{a\},\{b\},\{c\})=\varphi_{\{x,y\}}(\{x\},\{y\},\{z\})$. This completes the proof for Case 2.1. 

\vspace{2mm}
\textbf{Case 2.2:} $c\in \{x,y\}$ \\
WLG assume that $c=x$. We consider two subcases under this.

\vspace{2mm}
\textbf{Case 2.2.1:} $a\neq y$ \\
As $c=x$, we have to show that $\varphi_{\{a,b\}}(\{a\},\{b\},\{c\})=\varphi_{\{c,y\}}(\{c\},\{y\},\{z\})$.  Note that as $a,b,c$ and $x,y,z$ are distinct, combining these with $a\neq y$ and $c=x$ yield $a\notin \{c,y\}$ and $z\notin \{c,y\}$. Therefore, 
\begin{align*}
     &\varphi_{\{\{a\},\{a,b\},\{b\}\}}(\{a\},\{b\},\{c\})\\&=\varphi_{\{\{a\},\{a,b\},\{b\}\}}(\{b\},\{b\},\{c\})\hspace{20mm}(\text{applying Lemma \ref{de_lem_6} (i))}\\
    &=\varphi_{\{b\}}(\{b\},\{b\},\{c\})\hspace{5mm}(\text{as $a\notin \{b,c\}$, by Theorem \ref{de_tops-only}, $\varphi_{\{\{a\},\{a,b\}\}}(\{b\},\{b\},\{c\})=0$)}\\
    &=\varphi_{\{b\}}(\{b\},\{b\},\{a\})\hspace{5mm}(\text{applying Lemma \ref{de_lem_6} (ii) and the fact that $b\notin \{a,c\}$)}\\
     &=\varphi_{\{\{c\},\{b,c\},\{b\}\}}(\{b\},\{b\},\{a\})\hspace{5mm}(\text{as $c\notin \{b,a\}$, by Theorem \ref{de_tops-only}, $\varphi_{\{\{c\},\{c,b\}\}}(\{b\},\{b\},\{a\})=0$)}\\
    &=\varphi_{\{\{c\},\{b,c\},\{b\}\}}(\{c\},\{b\},\{a\})\hspace{20mm}(\text{applying Lemma \ref{de_lem_6} (i))}\\
    &=\varphi_{\{\{c\},\{b,c\},\{b\}\}}(\{c\},\{c\},\{a\})\hspace{20mm}(\text{applying Lemma \ref{de_lem_6} (i))}\\
    &=\varphi_{\{c\}}(\{c\},\{c\},\{a\})\hspace{5mm}(\text{as $b\notin \{a,c\}$, by Theorem \ref{de_tops-only}, $\varphi_{\{\{b\},\{b,c\}\}}(\{c\},\{c\},\{a\})=0$)}\\
     &=\varphi_{\{\{c\},\{c,y\},\{y\}\}}(\{c\},\{c\},\{a\})\hspace{5mm}(\text{as $y\notin \{a,c\}$, by Theorem \ref{de_tops-only}, $\varphi_{\{\{y\},\{c,y\}\}}(\{c\},\{c\},\{a\})=0$)}\\
       &=\varphi_{\{\{c\},\{c,y\},\{y\}\}}(\{c\},\{y\},\{a\})\hspace{20mm}(\text{applying Lemma \ref{de_lem_6} (i))}\\
       &=\varphi_{\{\{c\},\{c,y\},\{y\}\}}(\{c\},\{y\},\{z\})\hspace{5mm}(\text{applying Lemma \ref{de_lem_6} (ii) as $a\notin \{c,y\}$ and $z\notin \{c,y\}$)}.
\end{align*}
However, as by Claim \ref{de_cl_1}, $\varphi_{\{a\}}(\{a\},\{b\},\{c\})=\varphi_{\{c\}}(\{c\},\{y\},\{z\})$ and $\varphi_{\{b\}}(\{a\},\{b\},\{c\})=\varphi_{\{y\}}(\{c\},\{y\},\{z\})$, from the above equation we have, $\varphi_{\{a,b\}}(\{a\},\{b\},\{c\})=\varphi_{\{c,y\}}(\{c\},\{y\},\{z\})$. This completes the proof for Case 2.2.1.

\vspace{2mm}
\textbf{Case 2.2.2:} $a=y$ \\
As $c=x$ and $a=y$, we have to show that $\varphi_{\{a,b\}}(\{a\},\{b\},\{c\})=\varphi_{\{c,a\}}(\{c\},\{a\},\{z\})$.  Note that as $x,y,z$ are distinct, $a=y$, and $c=x$, combining all these, we have $a\notin \{c,z\}$ and $c\notin \{a,z\}$. Thus, 
\begin{align*}
     \varphi_{\{\{a\},\{a,b\},\{b\}\}}(\{a\},\{b\},\{c\})&=\varphi_{\{\{a\},\{a,b\},\{b\}\}}(\{a\},\{a\},\{c\})\hspace{5mm}(\text{applying Lemma \ref{de_lem_6} (i))}\\
    &=\varphi_{\{a\}}(\{a\},\{a\},\{c\})\hspace{20mm}(\text{as $b\notin \{a,c\}$, by Theorem \ref{de_tops-only}, $\varphi_{\{\{b\},\{a,b\}\}}(\{a\},\{a\},\{c\})=0$)}\\
    &=\varphi_{\{a\}}(\{a\},\{a\},\{z\})\hspace{20mm}(\text{applying Lemma \ref{de_lem_6} (ii) as  $a\notin \{c,z\}$)}\\
     &=\varphi_{\{\{a\},\{a,c\},\{c\}\}}(\{a\},\{a\},\{z\})\hspace{5mm}(\text{as $c\notin \{a,z\}$, by Theorem \ref{de_tops-only}, $\varphi_{\{\{c\},\{a,c\}\}}(\{a\},\{a\},\{z\})=0$)}\\
    &=\varphi_{\{\{a\},\{a,c\},\{c\}\}}(\{c\},\{a\},\{z\})\hspace{5mm}(\text{applying Lemma \ref{de_lem_6} (i))}.
\end{align*}
Again, as by Claim \ref{de_cl_1}, $\varphi_{\{a\}}(\{a\},\{b\},\{c\})=\varphi_{\{c\}}(\{c\},\{a\},\{z\})$ and $\varphi_{\{b\}}(\{a\},\{b\},\{c\})=\varphi_{\{a\}}(\{c\},\{a\},\{z\})$, the above equation yields $\varphi_{\{a,b\}}(\{a\},\{b\},\{c\})=\varphi_{\{c,a\}}(\{c\},\{a\},\{z\})$. This completes the proof for Case 2.2.2.

As these cases are exhaustive, we have shown that $\varphi(\{x\},\{y\},\{z\})$ satisfies (\ref{de_feq}) for all $X\in \{\{x,y\},\{y,z\},\{x,z\}\}$. This, together with Claim \ref{de_cl_1}, implies $\varphi(\{x\},\{y\},\{z\})$ satisfies (\ref{de_feq}) for all $X\in \mathcal{A}$.

We now consider profiles $(\{x_1\},\{x_2\},\{x_3\})$ such that $|\{\{x_1\},\{x_2\},\{x_3\}\}|=2$. Without loss of generality, assume that $x_1=x_2\neq x_3$. For simplifying the notations, we write $x_1=x_2=x$ and $x_3=z$. Consider $y\notin \{x,z\}$, and by what we have already shown, $\varphi(\{x\},\{y\},\{z\})$ satisfies (\ref{de_feq}) for all $X\in \mathcal{A}$. Now, applying Lemma \ref{de_lem_6}, we have
\begin{equation}\label{de_eq_1}
    \varphi_{\{\{x\},\{x,y\},\{y\}\}}(\{x\},\{y\},\{z\})=\varphi_{\{\{x\},\{x,y\},\{y\}\}}(\{x\},\{x\},\{z\}),
\end{equation}
\begin{equation}\label{de_eq_2}
  \varphi_{\{z\}}(\{x\},\{y\},\{z\})=\varphi_{\{y\}}(\{x\},\{x\},\{z\}),   \text{ and}
\end{equation}
\begin{equation}\label{de_eq_3}
    \varphi_{\{\{x,z\},\{y,z\}\}}(\{x\},\{y\},\{z\})=\varphi_{\{\{x,z\},\{y,z\}\}}(\{x\},\{x\},\{z\}).
\end{equation}
Moreover, as $\varphi(\{x\},\{y\},\{z\})$ satisfies (\ref{de_feq}) for all $X\in \mathcal{A}$, (\ref{de_eq_1}), (\ref{de_eq_2}), and (\ref{de_eq_3}) yield
\begin{align}
   & \varphi_{\{\{x\},\{x,y\},\{y\}\}}(\{x\},\{x\},\{z\})=\alpha_{11}+\alpha_{12}+\alpha_{22},\nonumber\\
   & \varphi_{\{z\}}(\{x\},\{x\},\{z\})=\alpha_{33}\nonumber, \text{ and }\\
   & \varphi_{\{\{x,z\},\{y,z\}\}}(\{x\},\{x\},\{z\})=\alpha_{13}+\alpha_{23}.\label{de_eq_4}
\end{align}
However, as $y\notin \{x,z\}$, by Theorem \ref{de_tops-only}, $\varphi_{\{\{x,y\},\{y\},\{y,z\}\}}(\{x\},\{x\},\{z\})=0$. Thus, (\ref{de_eq_4}) becomes
\begin{align*}
   & \varphi_{\{x\}}(\{x\},\{x\},\{z\})=\alpha_{11}+\alpha_{12}+\alpha_{22},\nonumber\\
   & \varphi_{\{z\}}(\{x\},\{x\},\{z\})=\alpha_{33}\nonumber, \text{ and }\\
   & \varphi_{\{x,z\}}(\{x\},\{x\},\{z\})=\alpha_{13}+\alpha_{23}.
\end{align*}
As $\varphi_{X}(\{x\},\{x\},\{z\})=0$ for all $X\notin \{\{x\},\{x,z\},\{z\}\}$, this means $\varphi(\{x\},\{y\},\{z\})$ satisfies (\ref{de_feq}) for all $X\in \mathcal{A}$. Hence, the proof of the theorem is complete. \end{proof}

\subsection{The general case \texorpdfstring{$n\geq 4$}{n >= 4}}\label{sec:de_n}

 In this section, we present our main results for settings with $n \geq 4$ agents on the $\mathcal{D}_E$ domain.  Because the characterization varies significantly depending on the number of alternatives, we organize our results into two subsections: the first addresses the $m = 3$ case, and the second covers $m \geq 4$.

\subsubsection{The case when \texorpdfstring{$m=3$}{m = 3}}\label{de_m_3}
In this section, we assume $m=3$ and explore the structure of unanimous and strategy-proof rules on the $\de$ domain. Recall that Theorem \ref{de_2} establishes that for $n=3$, a PSCC is unanimous and strategy-proof if and only if it is a random bi-dictatorial rule. Next, we demonstrate that this characterization no longer holds when $m=3$ and $n \geq 4$. The following example illustrates the existence of a unanimous and strategy-proof PSCC on the $\de$ domain for $n=4$ and $m=3$ that is not a random bi-dictatorial rule.

\begin{example}\label{eg_2}
   Let $A = \{a, b, c\}$ and $N = \{1, 2, 3, 4\}$. We construct a PSCC ${\varphi}$ that satisfies unanimity and strategy-proofness but is not random bi-dictatorial. For simplicity, we additionally require ${\varphi}$ to satisfy anonymity and neutrality.\footnote{Anonymity requires that the outcome of the PSCC be invariant to permutations of agents: for any profile $R_N$ and any permutation $\sigma$ of the set of agents $N$, $\varphi(R_N^\sigma) = \varphi(R_N)$, where $R_N^\sigma=(R_1^\sigma,\ldots,R_n^\sigma)$ denotes the profile obtained by relabeling the preferences of agents according to $\sigma$. Neutrality requires that the outcome be invariant to permutations of alternatives: for any profile $R_N$ and any permutation $\pi$ of the set of alternatives $A$, $\varphi(\pi(R_N)) = \pi(\varphi(R_N))$, where $\pi(R_N)=(\pi(R_1),\ldots,\pi(R_n))$ is the profile obtained by replacing each alternative $x \in A$ with $\pi(x)$ in every agent's preference relation.} Note that by Theorem \ref{de_tops-only}, $\varphi$ is tops-only. Therefore, whenever needed, we identify each preference profile by the top-ranked sets of the agents. Furthermore, since $\varphi$ satisfies anonymity and neutrality, the specific identities of the agents and the alternatives are irrelevant; only the count of the top-ranked sets matters. Consequently, together with unanimity, it suffices to specify the outcome for the following three profiles: $(\{a\}, \{a\}, \{b\}, \{c\})$, $(\{a\}, \{a\}, \{a\}, \{c\})$, and $(\{a\}, \{a\}, \{c\}, \{c\})$. The corresponding outcomes are detailed in the tables below.

\begin{table}[htbp]
\centering

\begin{subtable}[t]{0.3\textwidth}
\centering
\begin{tabular}{||c c||}
\hline
Set & Outcome \\
\hline\hline
$\{a\}$ & 0.27 \\
$\{b\}$ & 0.09 \\
$\{c\}$ & 0.09 \\
$\{a,b\}$ & 0.23 \\
$\{a,c\}$ & 0.23 \\
$\{b,c\}$ & 0.09 \\
$\{a,b,c\}$ & 0 \\
\hline
\end{tabular}
\caption{Outcome at $(\{a\},\{a\},\{b\},\{c\})$} \label{de_tab_1}
\end{subtable}
\hfill
\begin{subtable}[t]{0.3\textwidth}
\centering
\begin{tabular}{||c c||}
\hline
Set & Outcome \\
\hline\hline
$\{a\}$ & 0.59 \\
$\{b\}$ & 0 \\
$\{c\}$ & 0.09 \\
$\{a,b\}$ & 0 \\
$\{a,c\}$ & 0.32 \\
$\{b,c\}$ & 0 \\
$\{a,b,c\}$ & 0 \\
\hline
\end{tabular}
\caption{Outcome at $(\{a\},\{a\},\{a\},\{c\})$} \label{de_tab_2}
\end{subtable}
\hfill
\begin{subtable}[t]{0.3\textwidth}
\centering
\begin{tabular}{||c c||}
\hline
Set & Outcome \\
\hline\hline
$\{a\}$ & 0.27 \\
$\{b\}$ & 0 \\
$\{c\}$ & 0.27 \\
$\{a,b\}$ & 0 \\
$\{a,c\}$ & 0.46 \\
$\{b,c\}$ & 0 \\
$\{a,b,c\}$ & 0 \\
\hline
\end{tabular}
\caption{Outcome at $(\{a\},\{a\},\{c\},\{c\})$} \label{de_tab_3}
\end{subtable}

\caption{Outcome of $\varphi$ at different profiles} \label{de_tab}
\end{table}
We first show that $\varphi$ satisfies strategy-proofness. As $\varphi$ is tops-only, an agent cannot change the outcome without changing the top-ranked set of her preference. Moreover, by neutrality, it's enough to consider the change $\{a\}\to \{b\}$, and by anonymity, it's enough to argue for agent 1 only. Consider $(R_1,R_2,R_3,R_4)\in \de^4$ and $R_1'\in \de$ with $\tau(R_1)=\{a\}$ and $\tau(R_1')=\{b\}$. From the outcomes presented in Table \ref{de_tab}, it can be easily seen that for such a change, $\varphi$ satisfies the following
\begin{enumerate}
    \item [(i)] $\varphi_X(R_1,R_2,R_3,R_4)=\varphi_X(R'_1,R_2,R_3,R_4)$ for all $X\in \{\{c\},\{a,b,c\}\}$,
    \item [(ii)] $\varphi_{\{\{a\},\{a,b\},\{b\}\}}(R_1,R_2,R_3,R_4)=\varphi_{\{\{a\},\{a,b\},\{b\}\}}(R'_1,R_2,R_3,R_4)$, and 
    \item [(iii)] $\varphi_{\{\{a,c\},\{b,c\}\}}(R_1,R_2,R_3,R_4)=\varphi_{\{\{a,c\},\{b,c\}\}}(R'_1,R_2,R_3,R_4)$.
\end{enumerate}
 Moreover, it can be seen (from Table \ref{de_tab}) that $\varphi_{\{a\}}(R_1,R_2,R_3,R_4)>\varphi_{\{a\}}(R'_1,R_2,R_3,R_4)$, $\varphi_{\{\{a\},\{a,b\}\}}(R_1,R_2,R_3,R_4)>\varphi_{\{\{a\},\{a,b\}\}}(R'_1,R_2,R_3,R_4)$, and $\varphi_{\{a,c\}}(R_1,R_2,R_3,R_4)>\varphi_{\{a,c\}}(R'_1,R_2,R_3,R_4)$. Now as $\tau(R_1)=\{a\}$, it must hold that $\{a\}R_1\{a,b\}R_1\{b\}$ and $\{a,c\}R_1\{b,c\}$. Combining these observations, we have that agent 1 cannot manipulate by changing her top-ranked set from $\{a\}$ to $\{b\}$. 

 We now move to show that $\varphi$ is not a random bi-dictatorial rule. Suppose it is a random bi-dictatorial rule with respect to a set of parameters $\{\beta_{ij}\}_{(i,j)\in N^{(2)}}$. From Table \ref{de_tab_1}, we have $\beta_{13}+\beta_{23}=0.23$. Consider two more profiles, $(\{b\},\{a\},\{c\},\{a\})$ and  $(\{a\},\{b\},\{c\},\{a\})$. By anonymity and neutrality, the outcomes at these two profiles will be the same as the outcome in Table \ref{de_tab_1}. This means $\beta_{13}=\varphi_{\{b,c\}}(\{b\},\{a\},\{c\},\{a\})=0.09$ and $\beta_{23}=\varphi_{\{b,c\}}(\{a\},\{b\},\{c\},\{a\})=0.09$. But this contradicts that $\beta_{13}+\beta_{23}=0.23$. This completes the verification that $\varphi$ is not a random bi-dictatorial rule. \hfill $\square$ \end{example}

 We now proceed to characterize the class of unanimous and strategy-proof PSCCs for settings with $m=3$ alternatives. The following remark restates Lemma \ref{de_lem_6} for the $m=3$ case and introduces additional constraints on the outcome when an agent deviates to a preference with a different top-ranked set. These constraints will be crucial for the characterization result that follows.

   \begin{remark}\label{de_m=3_2}
	Recall that Lemma \ref{de_lem_6} describes how the outcome changes when an agent switches to a preference with a different top-ranked set. In the special case of three alternatives, exploiting the full force of strategy-proofness yields additional restrictions beyond those stated in Lemma \ref{de_lem_6}. The conditions below provide a reformulation of Lemma \ref{de_lem_6} for $m=3$ together with several further implications of strategy-proofness. Consider a profile $R_N \in \de^n$ and a deviating preference $R_i' \in \de$ such that $\tau(R_i) = \{p\}$ and $\tau(R_i') = \{r\}$ for distinct alternatives $p, r \in A$. Let $w \in A \setminus \{p, r\}$ denote the remaining third alternative. Then,
	\begin{enumerate}
	\item [(i)]  $\varphi_{\{\{p\},\{p,r\},\{r\}\}}(R_N)=\varphi_{\{\{p\},\{p,r\},\{r\}\}}(R_i',R_{-i})$ with \begin{enumerate}
	    \item [(a)] $\varphi_{\{p\}}(R_N)\geq\varphi_{\{p\}}(R_i',R_{-i})$, and
        \item [(b)] $\varphi_{\{\{p\},\{p,r\}\}}(R_N)\geq \varphi_{\{\{p\},\{p,r\}\}}(R_i',R_{-i})$,
	\end{enumerate}
                \item [(ii)]$\varphi_X(R_N)=\varphi_X(R_i',R_{-i})$ for all $X\in \{\{w\},\{a,b,c\}\}$,
				 and
				\item [(iii)] $\varphi_{\{\{p,w\},\{r,w\}\}}(R_N)=\varphi_{\{\{p,w\},\{r,w\}\}}(R_i',R_{-i})$ with
                 $\varphi_{\{p,w\}}(R_N)\geq\varphi_{\{r,w\}}(R_i',R_{-i})$. \hfill $\square$
			\end{enumerate} 
	\end{remark}
% \begin{proof}
%     In view of the tops-onlyness of $\varphi$ (from Lemma \ref{lem_m=3_1}), we may take 
%     \begin{align*}
%      & R_i\equiv \{p\}\{p,r\}\{r\}\{p,r,w\}\{p,w\}\{r,w\}\{w\}, \text{ and } \\
%      &R_i'\equiv \{r\}\{p,r\}\{p\}\{p,r,w\}\{r,w\}\{p,w\}\{w\}.
%     \end{align*}
%     It is easy to verify the existence of such preferences in the $\de$ domain. All the statements of (i), (ii), and (iii) now follow from the strategy-proofness of $\varphi$.
% \end{proof}
As the unanimous and strategy-proof PSCCs are tops-only, it's easier to represent the outcomes at different profiles as the functions of coalitions supporting different alternatives. Accordingly, for each profile $R_N\in \mathcal{D}_E^n$ and alternative $x\in A$, let $N_x(R_N):=\{i\in N:\tau(R_i)=\{x\}\}$ denote the coalition of agents whose top alternative is $x$.
% For a unanimous and strategy-proof PSCC $\varphi$, by Proposition \ref{prop_m=3_1}, we have $\varphi$ is tops-only. This, together with the above notation, implies that the profiles may be identified as three tuples, i.e., for $R_N\in \de^n$, we may write $R_N\equiv (N_a(R_N),N_b(R_N),N_c(R_N))$.
In view of Remark \ref{de_m=3_2} and the above notation, one can immediately see that the following observations hold for any two profiles $R_N,R'_N\in \de^n$:
\begin{align}\label{de_m=3_eq_0}
& \text{(i) } N_x(R_N)=\emptyset \implies \varphi_{\{x,z\}}(R_N)=0 \text{ for all } x,z\in A, \nonumber \\ 
& \text{(ii) }  \varphi_{\{a,b,c\}}(R_N)=0 \text{ and } \nonumber \\
& \text{(iii) } N_x(R_N)=N_x(R'_N) \implies \varphi_{\{x\}}(R_N)=\varphi_{\{x\}}(R'_N) \text{ for all } x\in A.
\end{align}
In what follows, we prove a lemma that shows (\ref{de_m=3_eq_0})-(iii) holds more generally in the sense that for two alternatives $x,y\in A$, the probabilities remain the same if the supports for $x$ and $y$ do not change across two profiles.  
\begin{lemma}\label{de_m=3_3}
    Let $x,y\in A$. Suppose $R_N$ and $R_N'$ are such that $N_x(R_N)=N_y(R'_N)$. Then $\varphi_{\{x\}}(R_N)=\varphi_{\{y\}}(R'_N)$.
\end{lemma}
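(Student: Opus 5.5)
If $x=y$, the claim is exactly observation (\ref{de_m=3_eq_0})-(iii), so I will assume $x\neq y$ and let $z$ be the third alternative, $A=\{x,y,z\}$. Write $S:=N_x(R_N)=N_y(R'_N)$. The idea is to route both sides through auxiliary profiles. In these profiles all agents outside $S$ report $\{z\}$, and the agents in $S$ move one at a time from $\{x\}$ to $\{y\}$. Remark \ref{de_m=3_2}(i) keeps the total mass on the block $\{\{x\},\{x,y\},\{y\}\}$ unchanged at each such step, and Theorem \ref{de_tops-only} removes the unwanted sets at the two endpoints.

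\textbf{Degenerate cases.} If $S=\emptyset$, then $\{x\}\not\subseteq\tau(R_N)$ and $\{y\}\not\subseteq\tau(R'_N)$. Theorem \ref{de_tops-only} then gives $\varphi_{\{x\}}(R_N)=0=\varphi_{\{y\}}(R'_N)$. If $S=N$, unanimity gives both probabilities equal to $1$. So from now on I assume $\emptyset\neq S\neq N$.

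\textbf{Auxiliary profiles.} By tops-onlyness I identify profiles with their top-ranked sets.
\begin{itemize}
\item $Q^0$: every $i\in S$ reports $\{x\}$ and every $j\notin S$ reports $\{z\}$.
\item $Q^k$: list $S=\{i_1,\ldots,i_s\}$; $Q^k$ is obtained from $Q^0$ by switching $i_1,\ldots,i_k$ from $\{x\}$ to $\{y\}$.
\end{itemize}
In $Q^s$, every agent in $S$ reports $\{y\}$ and every other agent reports $\{z\}$. Since $N_x(Q^0)=S=N_x(R_N)$, observation (\ref{de_m=3_eq_0})-(iii) gives $\varphi_{\{x\}}(R_N)=\varphi_{\{x\}}(Q^0)$. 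Similarly $N_y(Q^s)=S=N_y(R'_N)$ gives $\varphi_{\{y\}}(R'_N)=\varphi_{\{y\}}(Q^s)$.

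\textbf{Chaining the block mass.} Each step $Q^{k-1}\to Q^k$ is a single agent changing her top from $\{x\}$ to $\{y\}$. Remark \ref{de_m=3_2}(i), with $p=x$ and $r=y$, gives
$$\varphi_{\{\{x\},\{x,y\},\{y\}\}}(Q^{k-1})=\varphi_{\{\{x\},\{x,y\},\{y\}\}}(Q^k),$$
so the block mass at $Q^0$ equals the block mass at $Q^s$. At $Q^0$ no agent tops $y$, so Theorem \ref{de_tops-only} gives $\varphi_{\{x,y\}}(Q^0)=\varphi_{\{y\}}(Q^0)=0$, and the block mass at $Q^0$ is $\varphi_{\{x\}}(Q^0)$. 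At $Q^s$ no agent tops $x$, so likewise the block mass at $Q^s$ is $\varphi_{\{y\}}(Q^s)$. Combining these,
$$\varphi_{\{x\}}(R_N)=\varphi_{\{x\}}(Q^0)=\varphi_{\{y\}}(Q^s)=\varphi_{\{y\}}(R'_N).$$

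\textbf{Main obstacle.} At the intermediate profiles $Q^k$ with $0<k<s$, both $x$ and $y$ are supported, so $\{x,y\}$ may carry positive and uncontrolled probability. The argument handles this by tracking only the aggregate mass of the block through Remark \ref{de_m=3_2}(i), never the individual probabilities. It then uses the endpoints $Q^0$ and $Q^s$, where one of $x,y$ is unsupported, to isolate the singleton. The all-$\{z\}$ complement is what makes both endpoints clean, which is why the case $S=N$ has to be treated separately.
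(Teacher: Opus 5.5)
Your proof is correct and follows the paper's argument essentially step for step. You reduce via (\ref{de_m=3_eq_0})-(iii) to profiles in which everyone outside $S$ tops $\{z\}$, and move the agents of $S$ one at a time from $\{x\}$ to $\{y\}$, with Remark \ref{de_m=3_2}-(i) preserving the mass on $\{\{x\},\{x,y\},\{y\}\}$. You then kill the extraneous sets at the two endpoints; the paper cites (\ref{de_m=3_eq_0})-(i) for this step, while you cite Theorem \ref{de_tops-only} directly.

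One small point: treating $S=\emptyset$ and $S=N$ separately is harmless but unnecessary. The chain argument already covers both, since nobody tops $y$ at $Q^0$ even when $S=N$. So your closing comment that $S=N$ must be handled separately is not accurate.
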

\begin{proof}
    If $x=y$, the proof follows directly from (\ref{de_m=3_eq_0})-(iii). So, we assume that $x\neq y$. Let $w\in A\setminus \{x,y\}$. Again, using (\ref{de_m=3_eq_0})-(iii), we may further assume that $N_w(R_N)=N\setminus N_x(R_N)$ and $N_w(R'_N)=N\setminus N_y(R'_N)$. Change the preferences of the agents in $N_x(R_N)$, one by one, to a preference with $\{y\}$ at the top. Suppose the profile we obtain after these changes is $\bar{R}_N$.
     At every such change, by Remark \ref{de_m=3_2}-(i), the total probability of $\{x\}$, $\{x,y\}$, and $\{y\}$ will remain the same, implying $$\varphi_{\{\{x\}, \{x,y\}, \{y\}\}}(R_N)=\varphi_{\{\{x\}, \{x,y\}, \{y\}\}}(\bar{R}_N).$$
     However, as $N_y(R_N)=N_x(\bar{R}_N)=\emptyset$, by (\ref{de_m=3_eq_0})-(i), $\varphi_{\{\{x,y\},\{y\}\}}(R_N)=\varphi_{\{\{x,y\},\{x\}\}}(\bar{R}_N)=0$. This, together with the previous equation, implies that $\varphi_{\{x\}}(R_N)=\varphi_{\{y\}}(\bar{R}_N)$. It only remains to show that $\varphi_{\{y\}}(\bar{R}_N)=\varphi_{\{y\}}(R'_N)$. This follows as by the construction of $\bar{R}_N$, $N_y(\bar{R}_N)=N_x(R_N)$ and by the assumption of the lemma, $N_y(R'_N)=N_x(R_N)$, implying $N_y(\bar{R}_N)=N_y(R'_N)$. \end{proof}

We now define a new class of PSCCs, called coalition-weighted rules. These rules are described in terms of a set function $v:2^N\to[0,1]$, which we call a coalition weight function. The role of $v$ is to record how much probability is assigned to an alternative as a function of the coalition of agents who support it. More precisely, for a unanimous and strategy-proof PSCC, Lemma \ref{de_m=3_3} implies that the probability assigned to a singleton $\{x\}$ depends only on the coalition of agents whose top-ranked set is $\{x\}$. The function $v$ captures exactly this dependence. The four conditions imposed on $v$ have natural interpretations. Normalization corresponds to unanimity: the empty coalition has weight zero, while the grand coalition has weight one. Monotonicity says that the weight of a coalition cannot decrease when additional agents join it; this restriction is induced by strategy-proofness. Supermodularity guarantees that the probability assigned to a doubleton set is non-negative. Finally, partition balance ensures that the probabilities assigned to all singleton and doubleton sets sum to one at every profile. Thus, the conditions on $v$ are exactly those needed to obtain a well-defined PSCC. We now give the formal definition of a coalition weight function.

\begin{definition}\label{defn_tbc}
A function $v:2^N\to[0,1]$ is a \emph{coalition weight function} if it satisfies the following conditions.
\begin{enumerate}
    \item [(i)] \textbf{Normalization:} $v(\emptyset)=0,\quad v(N)=1.$

    \item [(ii)] \textbf{Monotonicity:} for all $S,T\subseteq N$,
    $S\subseteq T \implies v(S)\leq v(T).$

    \item [(iii)] \textbf{Supermodularity:} for all $S,T\subseteq N$, $v(S\cup T)+v(S\cap T)\geq v(S)+v(T).$

    \item [(iv)] \textbf{Partition Balance:} for every three element partition $(S,T,U)$ of $N$,
    $$v(S\cup T)+v(S\cup U)+v(T\cup U)=1+v(S)+v(T)+v(U).$$
\end{enumerate}
\end{definition}

We are now ready to formally define coalition-weighted rules.

\begin{definition}
    A PSCC $\varphi:\de^n \to \Delta \mathcal{A}$ is called a coalition-weighted rule if there exists a coalition weight function $v$ such that for every profile $R_N\in \de^n$ and every $x\in A$,
$$\varphi_{\{x\}}(R_N)=v(N_x(R_N)),$$
for every distinct $x,y\in A$,
$$\varphi_{\{x,y\}}(R_N)=v(N_x(R_N)\cup N_y(R_N))-v(N_x(R_N))-v(N_y(R_N)), \text{ and }$$
$$\varphi_{\{a,b,c\}}(R_N)=0.$$
\end{definition}
We first show that the rule is well-defined. As $v:2^N\to[0,1]$, $\varphi_x(R_N)\geq 0$ for all $x\in A$ and $R_N\in\de^n$. Further, as $N_x(R_N)\cap N_y(R_N)=\emptyset$ for all $x,y\in A$ and all $R_N\in \de^n$, by supermodularity of $v$, $v(N_x(R_N)\cup N_y(R_N))\geq v(N_x(R_N))+v(N_y(R_N))$, implying $\varphi_{\{x,y\}}(R_N)\geq 0$. Finally, for all profiles $R_N\in \de$, as $\varphi_{\{a,b,c\}}(R_N)=0$, we have
\begin{align*}
    &\varphi_{a}(R_N)+\varphi_{b}(R_N)+\varphi_{c}(R_N)+\varphi_{\{a,b\}}(R_N)+\varphi_{\{b,c\}}(R_N)+\varphi_{\{a,c\}}(R_N)\\
    =&v(N_a(R_N))+v(N_b(R_N))+v(N_c(R_N))+[v(N_a(R_N)\cup N_b(R_N))-v(N_a(R_N))-v(N_b(R_N))]\\
    &+ [v(N_b(R_N)\cup N_c(R_N))-v(N_b(R_N))-v(N_c(R_N))]+[v(N_a(R_N)\cup N_c(R_N))-v(N_a(R_N))-v(N_c(R_N))]\\
    =&v(N_a(R_N)\cup N_b(R_N))+v(N_b(R_N)\cup N_c(R_N))+v(N_a(R_N)\cup N_c(R_N))-v(N_a(R_N))-v(N_b(R_N))-v(N_c(R_N))\\
    =&1 \hspace{20mm} (\text{by the partition balance of $v$}).
\end{align*}
Thus, every coalition-weighted rule is a valid PSCC. A natural question that arises is how coalition-weighted rules relate to random bi-dictatorial rules. To answer this question, we first demonstrate that the PSCC $\varphi$ introduced in Example \ref{eg_2} is, in fact, a coalition-weighted rule. Since $\varphi$ has already been shown not to be a random bi-dictatorial rule, this example suggests that coalition-weighted rules constitute a broader class.

\begin{example}
    Recall that in Example \ref{eg_2}, the PSCC $\varphi$ is anonymous and neutral. Thus, if $\varphi$ is a coalition-weighted rule, the corresponding coalition weight function $v$ will depend only on coalition size. Let $v:\{0,1,2,3,4\}\to [0,1]$ be such that 
    \begin{align*}
v(0)&=0, &
v(1)&=0.09, &
v(2)&=0.27, \\
v(3)&=0.59, &
v(4)&=1. &
&
\end{align*}
The reader may verify that the above $v$ function satisfies the four conditions in Definition \ref{defn_tbc}. To make the presentation simpler, we denote a profile by the top-ranked sets of the agents. We now compute the outcome of the PSCC $\varphi^v$ corresponding to the above $v$ at the different profiles shown in Table \ref{de_tab}. At the profile $(\{a\},\{a\},\{b\},\{c\})$, we have
\[
|N_a(\{a\},\{a\},\{b\},\{c\})|=2,\qquad |N_b(\{a\},\{a\},\{b\},\{c\})|=1,\qquad |N_c(\{a\},\{a\},\{b\},\{c\})|=1.
\]
By the definition of a coalition-weighted rule, we have
\[
\varphi^v_{\{a\}}(\{a\},\{a\},\{b\},\{c\})=v(2)=0.27,
\quad
\varphi^v_{\{b\}}(\{a\},\{a\},\{b\},\{c\})=v(1)=0.09,
\quad
\varphi^v_{\{c\}}(\{a\},\{a\},\{b\},\{c\})=v(1)=0.09,
\]
\[
\varphi^v_{\{a,b\}}(\{a\},\{a\},\{b\},\{c\})=v(3)-v(2)-v(1)=0.23,
\qquad
\varphi^v_{\{a,c\}}(\{a\},\{a\},\{b\},\{c\})=v(3)-v(2)-v(1)=0.23,
\]
\[
\varphi^v_{\{b,c\}}(\{a\},\{a\},\{b\},\{c\})=v(2)-v(1)-v(1)=0.09.
\]
Thus the outcome of $\varphi^v$ matches with the outcome of $\varphi$ at the profile $(\{a\},\{a\},\{b\},\{c\})$, presented in Table \ref{de_tab_1} of Example \ref{eg_2}. Next consider the profile $(\{a\},\{a\},\{a\},\{c\})$. Using similar calculation above, we have 
\[
\varphi^v_{\{a\}}=v(3)=0.59,
\qquad
\varphi^v_{\{c\}}=v(1)=0.09,
\]
\[
\varphi^v_{\{a,c\}}=v(4)-v(3)-v(1)=0.32.
\]
This reproduces the outcome of $\varphi$ in Table \ref{de_tab_2}. We leave it to the reader to verify that the same holds for the outcome presented in Table \ref{de_tab_3}. \hfill $\square$ \end{example}

The next theorem characterizes the unanimous and strategy-proof PSCCs on the $\de$ domain for $m=3$ as coalition-weighted rules. Together with Remark \ref{rem_star}, which establishes that every random bi-dictatorial rule is unanimous and strategy-proof on the $\de$ domain, the theorem implies that every random bi-dictatorial rule is a coalition-weighted rule. Combined with Example \ref{eg_2}, which provides a coalition-weighted rule that is not random bi-dictatorial, we obtain that the class of random bi-dictatorial rules is a strict subclass of the class of coalition-weighted rules.

\begin{theorem}\label{de_m=3_4}
    Let $m=3$ and $n\geq 4$. Then a PSCC $\varphi: \de^n \to \Delta \mathcal{A}$ is unanimous and strategy-proof if and only if it is a coalition-weighted rule.
\end{theorem}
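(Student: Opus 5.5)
Both directions turn on tops-onlyness (Theorem \ref{de_tops-only}), which lets us identify a profile with the ordered partition $(N_a,N_b,N_c)$ of $N$.

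\emph{Only-if direction.} Let $\varphi$ be unanimous and strategy-proof. I will define $v(S):=\varphi_{\{x\}}(R_N)$ for any profile with $N_x(R_N)=S$. Lemma \ref{de_m=3_3} makes this independent of $x$ and of the profile. I then need the two displayed formulas and the four axioms.

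\begin{enumerate}
\item $\varphi_{\{a,b,c\}}=0$ is (\ref{de_m=3_eq_0})-(ii).
\item \textbf{Doubleton formula.} Fix a profile and distinct $x,y$ with third alternative $w$. Move every agent in $N_y$, one at a time, to a preference with top $\{x\}$. By Remark \ref{de_m=3_2}-(i), each move preserves $\varphi_{\{\{x\},\{x,y\},\{y\}\}}$. At the final profile $N_y=\emptyset$, so by (\ref{de_m=3_eq_0})-(i) this mass sits entirely on $\{x\}$, where it equals $v(N_x\cup N_y)$. Hence $\varphi_{\{x\}}+\varphi_{\{y\}}+\varphi_{\{x,y\}}=v(N_x\cup N_y)$, i.e.\ $\varphi_{\{x,y\}}=v(N_x\cup N_y)-v(N_x)-v(N_y)$.
\item \textbf{Normalization.} $v(N)=1$ by unanimity. $v(\emptyset)=0$ because $\varphi_{\{x\}}=0$ whenever $x$ is nobody's top (Theorem \ref{de_tops-only}).
\item \textbf{Partition balance.} Sum the six singleton and doubleton probabilities at a profile with $(N_a,N_b,N_c)=(S,T,U)$; the total is $1$. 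Partitions with an empty block reduce to this identity using $v(\emptyset)=0$ and $v(N)=1$.
\item \textbf{Monotonicity.} By Remark \ref{de_m=3_2}-(i)(a), an agent moving from top $\{p\}$ to $\{r\}$ weakly lowers $\varphi_{\{p\}}$. So $v(S\cup\{i\})\ge v(S)$ for $i\notin S$, and iterating gives monotonicity.
\item \textbf{Supermodularity.} This is the step I expect to be the main obstacle, since a naive argument only yields the disjoint case. For disjoint $S,T$ it is immediate: nonnegativity of $\varphi_{\{x,y\}}$ at a profile with $N_x=S$, $N_y=T$ gives $v(S\cup T)\ge v(S)+v(T)$. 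For general $S,T$, I would first try to show $v(S\cup\{i\})-v(S)$ is nondecreasing in $S$, i.e.\ $v(S\cup\{i,j\})-v(S\cup\{j\})\ge v(S\cup\{i\})-v(S)$, since increasing marginals is equivalent to supermodularity.
\begin{itemize}
\item To get this, take a profile with $N_x=S$, $N_y=\{i\}$, $N_w=$ rest, and let $i$ deviate $y\to x$.
\item Remark \ref{de_m=3_2}-(iii) gives $\varphi_{\{y,w\}}\ge\varphi_{\{x,w\}}'$.
\item Combining this with Remark \ref{de_m=3_2}-(i)(b) and rewriting both sides via the doubleton formula together with partition balance should yield exactly an inequality of this marginal form. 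Where $j$ enters, place $j$ in $N_w$ and use the complement form of partition balance.
\item If the algebra does not close, the fallback is to use the deviation between profiles $(S,\{i\},R)$ and $(S\cup\{i\},\emptyset,R)$ with $j\in R$.
\end{itemize}
\end{enumerate}

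\emph{If direction.} Let $\varphi$ be the coalition-weighted rule for a coalition weight function $v$. Unanimity follows from $v(N)=1$ and $v(\emptyset)=0$. For strategy-proofness, the rule is tops-only by definition, so it suffices to check a deviation by agent $i$ from top $p$ to top $r$ at a fixed profile; let $w$ denote the remaining alternative. I will verify the conditions of Remark \ref{de_m=3_2} from the axioms:
\begin{itemize}
\item $\varphi_{\{w\}}$ is unchanged, because $N_w$ is unchanged.
\item $\varphi_{\{\{p\},\{p,r\},\{r\}\}}=v(N_p\cup N_r)$ is unchanged.
\item The (a)/(b) and (iii) inequalities follow from monotonicity and supermodularity.
\end{itemize}
Conditions (i)--(iii) are then sufficient for stochastic dominance. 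Because any CEUCEP preference with top $p$ ranks $\{p\}\succ\{p,r\}\succ\{r\}$ and $\{p,w\}\succ\{r,w\}$, and no mass goes to $\{a,b,c\}$, the upper-contour checks reduce to these inequalities. This mirrors the argument of Example \ref{eg_2}.
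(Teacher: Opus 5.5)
Everything outside supermodularity is sound and matches the paper: defining $v$ through Lemma \ref{de_m=3_3}, the doubleton formula obtained by moving $N_y$ onto $x$ under Remark \ref{de_m=3_2}-(i), normalization, monotonicity, partition balance, and the if-direction reduction to the conditions of Remark \ref{de_m=3_2}. The gap is supermodularity, and it cannot be fixed by algebra.

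\textbf{The inequality you quote is false.} The inequality $\varphi_{\{y,w\}}(R_N)\ge\varphi_{\{x,w\}}(R'_N)$ does not follow from strategy-proofness; the printed inequality in Remark \ref{de_m=3_2}-(iii) is a slip. What an agent with top $\{p\}$ guarantees is $\varphi_{\{p,w\}}(R_N)\ge\varphi_{\{p,w\}}(R_i',R_{-i})$, because $\{p,w\}P_i\{r,w\}$. Combined with the conserved sum $\varphi_{\{x,w\}}+\varphi_{\{y,w\}}$, your inequality amounts to $\varphi_{\{x,w\}}(R_N)\le 0$. This fails in Example \ref{eg_2}: with $N_x=\{1\}$, $N_y=\{i\}=\{2\}$, $N_w=\{3,4\}$ it reads $0.23\ge 0.46$.

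\textbf{The correct inequality gives nothing here.} Once $i$ leaves, $N_y=\emptyset$, so every constraint from your deviation reduces to nonnegativity. Your fallback uses the same pair of profiles read in the other direction. It yields only $v(N)-v(N\setminus\{i\})\ge v(S\cup\{i\})-v(S)$, which compares marginals with the maximal coalition and nothing else.

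\textbf{These constraints do not imply supermodularity.} Take $n=4$ and the size-dependent $v$ with values $0,0,\tfrac12,\tfrac12,1$. It satisfies normalization, monotonicity, partition balance, superadditivity on disjoint sets, and the fallback inequality. Yet $v(3)-v(2)<v(2)-v(1)$, and the induced rule is manipulable: at $(\{a\},\{a\},\{b\},\{c\})$, agent $1$ switching to $\{b\}$ raises $\varphi_{\{a,c\}}$ from $0$ to $\tfrac12$.

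\textbf{The missing idea.} The deviator must move to an alternative already supported by the remaining agents, while the third alternative carries exactly the increment coalition. For $C\subseteq D\subseteq N\setminus\{i\}$, take $N_x=C\cup\{i\}$, $N_w=D\setminus C$, $N_y=N\setminus(D\cup\{i\})$, and let $i$ move from $x$ to $y$. Then $N_w$ is unchanged, and by your doubleton formula $\varphi_{\{x,w\}}(R_N)\ge\varphi_{\{x,w\}}(R'_N)$ becomes
\[
v(D\cup\{i\})-v(C\cup\{i\})-v(D\setminus C)\ \ge\ v(D)-v(C)-v(D\setminus C),
\]
which is exactly increasing marginals. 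For your one-step version, put $N_w=\{j\}$ alone and place every agent outside $S\cup\{i,j\}$ on $y$. Your fallback is the degenerate case $N\setminus(D\cup\{i\})=\emptyset$. This is the paper's argument, and with it the rest of your proof goes through.
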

\begin{proof}
    (\textit{Only-if part}) Let $\varphi$ be a PSCC satisfying unanimity and strategy-proofness. We will construct a coalition weight function $v$ such that $\varphi$ is a coalition-weighted rule w.r.t. $v$. The construction is as follows: for any $S\subseteq N$
    \begin{align*}
        v(S)=\varphi_{\{a\}}(R_N) \text{ where $R_N\in \de^n$ is such that $N_a(R_N)=S$.}
    \end{align*}
    This is well-defined as $\varphi$ is tops-only and by (\ref{de_m=3_eq_0})-(iii), $\varphi_{\{a\}}(\widebar{R}_N)=\varphi_{\{a\}}(\widehat{R}_N)$ if $N_a(\widebar{R}_N)=N_a(\widehat{R}_N)$. We now show that $\varphi$ is the coalition-weighted rule w.r.t. $v$, i.e., the following holds.
  \begin{enumerate}
      \item [(A)] for $x\in A$ and $R_N\in \de^n$, $v(N_x(R_N))=\varphi_{\{x\}}(R_N)$, and
      \item [(B)] for distinct $x,y\in A$ and $R_N\in \de^n$, $v(N_x(R_N)\cup N_y(R_N))-v(N_x(R_N))-v(N_y(R_N))=\varphi_{\{x,y\}}(R_N)$. 
  \end{enumerate}
  For (A), fix $R_N\in \de^n$ and $x\in A$. Note that by Lemma \ref{de_m=3_3}, $\varphi_{\{x\}}(R_N)=\varphi_{\{a\}}(\bar{R}_N)$ where $\bar{R}_N\in \de^n$ is such that $N_x(R_N)=N_a(\bar{R}_N)$. Thus, by definition, $v(N_x(R_N))=v(N_a(\bar{R}_N))=\varphi_{\{a\}}(\bar{R}_N)=\varphi_{\{x\}}(R_N)$. This shows (A). For (B), fix $R_N\in \de^n$ and distinct $x,y\in A$. Consider another profile $\bar{R}_N\in \de^n$ such that $N_x(\bar{R}_N)=N_x(R_N)\cup N_y(R_N)$, $N_y(\bar{R}_N)=\emptyset$, and $N_w(\bar{R}_N)=N_w(R_N)$ where $w\in A\setminus \{x,y\}$. By changing the preferences of agents in $N_x(R_N)$ to preferences with $\{y\}$ at the top and applying Remark \ref{de_m=3_2}-(i) at each step, we have
  \begin{equation}\label{de_m=3_eq_1}
    \varphi_{\{\{x\},\{x,y\},\{y\}\}}(\bar{R}_N)=\varphi_{\{\{x\},\{x,y\},\{y\}\}}(R_N).
  \end{equation}
  Now by (A), we have
\begin{align*}
    &\varphi_{\{x\}}(R_N)=v(N_x(R_N)) \text{ and } \varphi_{\{y\}}(R_N)=v(N_y(R_N))\\
    & \varphi_{\{x\}}(\bar{R}_N)=v(N_x(R_N)\cup N_y(R_N)).
\end{align*}
Moreover, by Remark \ref{de_m=3_2}, as $N_y(\bar{R}_N)=\emptyset$, $\varphi_{\{y\}}(\bar{R}_N)=0$ and $\varphi_{\{x,y\}}(\bar{R}_N)=0$. Combining these observations in (\ref{de_m=3_eq_1}), we have 
\begin{align*}
    &v(N_x(R_N)\cup N_y(R_N))=v(N_x(R_N))+v(N_y(R_N))+\varphi_{\{x,y\}}(R_N)\\
    \implies & \varphi_{\{x,y\}}(R_N)=v(N_x(R_N)\cup N_y(R_N))-v(N_x(R_N))-v(N_y(R_N)).
\end{align*}
Thus, (B) is shown. 
    
We now show that $v$ is a valid coalition weight function, i.e., $v$ satisfies the conditions in Definition \ref{defn_tbc}. Normalization follows as  $N_a(R_N)=\emptyset$ implies $\varphi_{\{a\}}(R_N)=0$, (by (\ref{de_m=3_eq_0})-(i)), and $N_a(R_N)=N$ implies $\varphi_{\{a\}}(R_N)=1$ (by unanimity). Monotonicity follows from repeated application of Remark \ref{de_m=3_2}-(i)-(a) and \ref{de_m=3_2}-(ii). We now prove the supermodularity of $v$. We first show that marginal increments are increasing: for every $C\subseteq D\subseteq N\setminus\{i\}$,
\begin{equation}\label{de_m=3_eq_2}
    v(D\cup\{i\})-v(D)\geq v(C\cup\{i\})-v(C).
\end{equation}
Fix $C\subseteq D\subseteq N\setminus\{i\}$. Let $x,y,z$ be the three alternatives. Consider a
profile $R_N$ such that
$$N_x(R_N)=C\cup\{i\},\qquad N_z(R_N)=D\setminus C,\qquad
N_y(R_N)=N\setminus(D\cup\{i\}).$$
Let $R'_N$ be the profile obtained from $R_N$ by changing agent $i$'s top from $x$ to $y$.
Then
$$N_x(R'_N)=C,\qquad N_z(R'_N)=D\setminus C,\qquad
N_y(R'_N)=N\setminus D.$$
By Remark \ref{de_m=3_2}-(iii), $\varphi_{\{x,z\}}(R_N)\geq \varphi_{\{x,z\}}(R'_N)$. Using (B), we have
$\varphi_{\{x,z\}}(R_N)= v(D\cup\{i\})-v(C\cup\{i\})-v(D\setminus C)$, and $\varphi_{\{x,z\}}(R'_N) = v(D)-v(C)-v(D\setminus C)$. Therefore,
\begin{align*}
&v(D\cup\{i\})-v(C\cup\{i\})-v(D\setminus C)
\geq v(D)-v(C)-v(D\setminus C) \\
\implies & v(D\cup\{i\})-v(D) \geq v(C\cup\{i\})-v(C).
\end{align*}
This proves (\ref{de_m=3_eq_2}). We now derive supermodularity. Let $S,T\subseteq N$ and $C=S\cap T$. Further, let $T\setminus S=\{i_1,\ldots,i_k\}$.
Define $B_r=\{i_1,\ldots,i_r\}$ for $r=0,1,\ldots,k$, with $B_0=\emptyset$. Applying (\ref{de_m=3_eq_2}) with
$C\cup B_{r-1}\subseteq S\cup B_{r-1}$ and $i=i_r$, we get
$$ v(S\cup B_r)-v(S\cup B_{r-1}) \geq v(C\cup B_r)-v(C\cup B_{r-1})$$ for every $r=1,\ldots,k$. Summing these inequalities for $r=1,\ldots,k$ yields
\begin{align*}
   & v(S\cup T)-v(S) \geq v(T)-v(S\cap T)\\
   \implies & v(S\cup T)+v(S\cap T)\geq v(S)+v(T).
\end{align*}
Thus, $v$ is supermodular.

Finally, we show that $v$ satisfies the partition balance. Fix a partition of $N$, say $(S,T,U)$ and let $R_N\in \de^n$ be such that $(N_a(R_N),N_b(R_N),N_c(R_N))=(S,T,U)$. Since probabilities sum to one and $\varphi_{\{a,b,c\}}(R_N)=0$ (from (\ref{de_m=3_eq_0})-(ii)), we have
\begin{align*}
    &\varphi_{\{a\}}(R_N)+\varphi_{\{b\}}(R_N)+\varphi_{\{c\}}(R_N)+\varphi_{\{a,b\}}(R_N)+\varphi_{\{b,c\}}(R_N)+\varphi_{\{a,c\}}(R_N)=1\\
    \implies & v(S\cup T)+v(S\cup U)+v(T\cup U)=
    1+v(S)+v(T)+v(U). 
\end{align*}

\noindent (\textit{If part}) Consider a coalition-weighted rule $\varphi$ w.r.t. a coalition weight function $v$. Unanimity of $\varphi$ follows directly from the normalization property of $v$, i.e., $v(N)=1$. To see strategy-proofness, consider two profiles $R_N$ and $R_N'$ where only agent $i$ changes her preference from $R_i$ to $R_i'$. As $\varphi$ is tops-only, the outcome will not change unless $R_i$ and $R_i'$ have different top-ranked sets. Suppose $\tau(R_i)=\{x\}$ and $\tau(R_i')=\{y\}$ for some distinct $x,y\in A$. Let $w$ be the third alternative. This means
\begin{align*}
N_x(R'_N)&=N_x(R_N)\setminus \{i\}, &
N_y(R'_N)&=N_y(R_N)\cup \{i\}, \text{ and } &
N_w(R'_N)&=N_w(R_N). 
\end{align*}
As $R_i\in \de$ with $\tau(R_i)=\{x\}$, we must have $\{x\}P_i\{x,y\}P_i\{y\}$ and $\{x,w\}P_i\{y,w\}$. For other sets in $\mathcal{A}$, as $N_w(R_N)=N_w(R_N')$, we have $v(N_w(R_N))=v(N_w(R_N'))$, implying $\varphi_{\{w\}}(R_N)=\varphi_{\{w\}}(R'_N)$. Moreover, by the definition of a coalition-weighted rule, $\varphi_{\{a,b,c\}}(R_N)=\varphi_{\{a,b,c\}}(R'_N)=0$. Therefore, it is enough to show the following to claim that agent $i$ cannot manipulate at $R_N$ via $R_i'$.
\begin{itemize}
    \item $\varphi_{\{\{x\},\{x,y\},\{y\}\}}(R_N)=\varphi_{\{\{x\},\{x,y\},\{y\}\}}(R'_N)$ with 
    \begin{itemize}
        \item $\varphi_{\{x\}}(R_N)\geq\varphi_{\{x\}}(R'_N)$ and
        \item $\varphi_{\{\{x\},\{x,y\}\}}(R_N)\geq\varphi_{\{\{x\},\{x,y\}\}}(R'_N)$, and
    \end{itemize}
    
    \item $\varphi_{\{x,w\}}(R_N)\geq\varphi_{\{x,w\}}(R'_N)$ and $\varphi_{\{\{x,w\},\{y,w\}\}}(R_N)=\varphi_{\{\{x,w\},\{y,w\}\}}(R'_N)$.
\end{itemize}
 First, by the monotonicity of $v$, we have $v(N_x(R_N))\geq v(N_x(R'_N))$, implying $\varphi_{\{x\}}(R_N)\geq\varphi_{\{x\}}(R'_N)$. Further, as $N_x(R_N)\cup N_y(R_N)=N_x(R_N')\cup N_x(R_N')$, we have $\varphi_{\{\{x\},\{x,y\},\{y\}\}}(R_N)=\varphi_{\{\{x\},\{x,y\},\{y\}\}}(R'_N)$. This, together with $\varphi_{\{y\}}(R'_N)\geq\varphi_{\{y\}}(R_N)$ (as $v(N_y(R'_N))\geq v(N_y(R_N))$), implies that $$\varphi_{\{\{x\},\{x,y\}\}}(R_N)\geq\varphi_{\{\{x\},\{x,y\}\}}(R'_N).$$
We now show that $\varphi_{\{x,w\}}(R_N)\geq\varphi_{\{x,w\}}(R'_N)$ and $\varphi_{\{\{x,w\},\{y,w\}\}}(R_N)=\varphi_{\{\{x,w\},\{y,w\}\}}(R'_N)$. To see $\varphi_{\{x,w\}}(R_N)\geq\varphi_{\{x,w\}}(R'_N)$, consider $C=N_x(R_N)$ and $D=N_x(R'_N)\cup N_w(R'_N)$. This means $C\cup D=N_x(R_N)\cup N_w(R_N)$ (as $N_w(R'_N)=N_w(R_N)$) and $C\cap D=N_x(R'_N)$. Applying supermodularity of $v$ for $C$ and $D$, we have
    \begin{align*}
    & v(C\cup D)+v(C\cap D)\geq v(C)+v(D)\\
    \implies & v(N_x(R_N)\cup N_w(R_N))+v(N_x(R'_N))\geq v(N_x(R_N))+v(N_x(R'_N)\cup N_w(R'_N))\\
    \implies & v(N_x(R_N)\cup N_w(R_N))-v(N_x(R_N))\geq v(N_x(R'_N)\cup N_w(R'_N))-v(N_x(R'_N))
\end{align*}
As $N_w(R'_N)=N_w(R_N)$, by subtracting $v(N_w(R_N))$ and  $v(N_w(R'_N))$ from both sides of the above inequality, we have 
\begin{align*}
    & v(N_x(R_N)\cup N_w(R_N))-v(N_x(R_N))-v(N_w(R_N))\geq v(N_x(R'_N)\cup N_w(R'_N))-v(N_x(R'_N))-v(N_w(R'_N))\\
    \implies & \varphi_{\{x,w\}}(R_N)\geq\varphi_{\{x,w\}}(R'_N).
\end{align*}
For $\varphi_{\{\{x,w\},\{y,w\}\}}(R_N)=\varphi_{\{\{x,w\},\{y,w\}\}}(R'_N)$, note that 
\begin{align*}
     \varphi_{\{\{x,w\},\{y,w\}\}}(R_N)=&v(N_x(R_N)\cup N_w(R_N))-v(N_x(R_N))-v(N_w(R_N))\\
     &+v(N_y(R_N)\cup N_w(R_N))-v(N_y(R_N))-v(N_w(R_N)) \\
     =&1-v(N_x(R_N)\cup N_y(R_N))-v(N_w(R_N)) \hspace{10mm}(\text{by the partition balance of $v$})\\
      =&1-v(N_x(R'_N)\cup N_y(R'_N))-v(N_w(R_N)) \hspace{10mm}(\text{as $N_x(R_N)\cup N_y(R_N)=N_x(R'_N)\cup N_y(R'_N)$})\\
      =&1-v(N_x(R'_N)\cup N_y(R'_N))-v(N_w(R'_N)) \hspace{10mm}(\text{as $N_w(R_N)=N_w(R'_N)$})\\
      =&\varphi_{\{\{x,w\},\{y,w\}\}}(R'_N).
\end{align*}
This completes the sufficiency part and the proof of the theorem.
\end{proof}

\subsubsection{The case when \texorpdfstring{$m\geq4$}{m >= 4}}\label{de_mgeq3}
In this section, we assume $m \geq 4$ and delve into the structure of unanimous and strategy-proof rules on the $\de$ domain. The following theorem establishes that the class of unanimous and strategy-proof PSCCs, unlike in the case $m=3$, collapses back to the family of random bi-dictatorial rules when $m\geq 4$.

\begin{theorem} \label{de_n}
Let $m\geq 4$ and $n\geq 4$. Then a PSCC $\varphi:\de^n\to \Delta \mathcal{A}$ is unanimous and strategy-proof if and only if $\varphi$ is a random bi-dictatorial rule.
\end{theorem}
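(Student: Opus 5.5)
The \emph{if} part is Remark \ref{rem_star}. For the \emph{only-if} part, I would repeat the coalition-weight construction of Theorem \ref{de_m=3_4}. The aim is to show that the extra alternative available when $m\geq 4$ forces the coalition weight function to be \emph{quadratic}, meaning its Möbius transform vanishes on all coalitions of size at least three. A quadratic weight function is exactly a random bi-dictatorship.

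\textbf{Step 1: a weight function $v$ and the pair formula.} Fix $a\in A$ and set $v(S)=\varphi_{\{a\}}(R_N)$ for any $R_N$ with $N_a(R_N)=S$.
\begin{itemize}
\item[(a)] $v$ is well defined, and $\varphi_{\{x\}}(R_N)=v(N_x(R_N))$ for every $x$. By Theorem \ref{de_tops-only}, $\varphi$ is tops-only, so only the top-ranked sets of a profile matter. Then Lemma \ref{de_lem_6}(ii), applied to agents outside $N_x$, leaves $\varphi_{\{x\}}$ unchanged. Lemma \ref{de_lem_6}(i), together with the vanishing of sets outside $\tau(R_N)$, relabels the supported alternative exactly as in Lemma \ref{de_m=3_3}.
\item[(b)] For distinct $x,y$,
\[
\varphi_{\{x,y\}}(R_N)=v(N_x\cup N_y)-v(N_x)-v(N_y).
\]
Move the agents of $N_y$ one at a time to top $\{x\}$. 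By Lemma \ref{de_lem_6}(i), the total mass on $\{\{x\},\{x,y\},\{y\}\}$ is conserved at each move. Once $N_y=\emptyset$, Theorem \ref{de_tops-only} kills $\{y\}$ and $\{x,y\}$.
\item[(c)] Properties of $v$. Normalization follows from unanimity. Monotonicity and supermodularity follow as in Theorem \ref{de_m=3_4}, using Lemma \ref{de_lem_6}(iii) together with strategy-proofness in place of Remark \ref{de_m=3_2}. At minimum, nonnegativity of $\varphi_{\{x,y\}}$ gives superadditivity on disjoint coalitions.
\end{itemize}

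\textbf{Step 2: balance identities.} Summing probabilities at a profile whose supporting coalitions form a partition $(S_1,\dots,S_k)$ of $N$, with $k\le m$ and possibly empty parts, gives
\[
\sum_{p<q} v(S_p\cup S_q)-(k-2)\sum_p v(S_p)=1.
\]
Since $m\ge 4$, this holds for both $k=3$ and $k=4$.

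\textbf{Step 3: the third-order difference vanishes.} This is the main step and the expected obstacle. I want to prove that for all pairwise disjoint $S,T,U$,
\[
\Delta(S,T,U):=v(S\cup T\cup U)-v(S\cup T)-v(S\cup U)-v(T\cup U)+v(S)+v(T)+v(U)=0 .
\]
Put $W=N\setminus(S\cup T\cup U)$.
\begin{itemize}
\item Take the $k=4$ identity for $(S,T,U,W)$.
\item Subtract the $k=3$ identities for the three-part coarsenings obtained by merging one of $S,T,U$ with $W$. These are the partitions $(S\cup W,T,U)$, $(S,T\cup W,U)$ and $(S,T,U\cup W)$.
\item Add suitable $k=3$ identities in which $W$ is kept separate and two of $S,T,U$ are merged, e.g.\ $(S\cup T,U,W)$.
\end{itemize}
The target is a linear combination in which every term involving $W$ cancels and what remains is a multiple of $\Delta(S,T,U)$. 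I would first try the combination with coefficients $(+1;-1,-1,-1;+1,+1,+1)$ and check the cancellation term by term, adjusting coefficients if it fails. If no pure combination of partition identities works, the fallback is to use the inequalities from Lemma \ref{de_lem_6}(iii) (the monotonicity of $\varphi_{\{p,w\}}$) with a fourth alternative, trapping $\Delta$ between $\ge 0$ and $\le 0$. For $m=3$, Example \ref{eg_2} has $\Delta(\{1\},\{2\},\{3\})=0.59-3(0.27)+3(0.09)=0.05\neq 0$. So the fourth alternative, i.e.\ the $k=4$ identity, must be where the argument bites.

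\textbf{Step 4: conclude.} With $v(\emptyset)=0$ and all third differences zero, induction on $|S|$ gives
\[
v(S)=\sum_{i\in S}v(\{i\})+\sum_{\{i,j\}\subseteq S,\,i<j}\bigl(v(\{i,j\})-v(\{i\})-v(\{j\})\bigr).
\]
Set $\beta_{ii}=v(\{i\})$ and $\beta_{ij}=v(\{i,j\})-v(\{i\})-v(\{j\})$ for $i<j$.
\begin{itemize}
\item $\beta_{ij}\ge 0$ by Step 1(b)--(c), since it is the probability of a doubleton.
\item $\sum_{(i,j)\in N^{(2)}}\beta_{ij}=v(N)=1$.
\end{itemize}
Finally, substitute into Step 1. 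Then $\varphi_{\{x\}}$ equals the sum of $\beta_{ij}$ over pairs inside $N_x$, and $\varphi_{\{x,y\}}$ equals the sum over pairs with one member in $N_x$ and the other in $N_y$. This is exactly the random bi-dictatorial rule with parameters $\{\beta_{ij}\}_{(i,j)\in N^{(2)}}$.
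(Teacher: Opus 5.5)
Your architecture works and differs from the paper's proof. However, Step 3 is where $m\ge 4$ enters, and you have not carried it out; the combination you propose to try first fails. Write $E_4$ for the Step 2 identity of $(S,T,U,W)$ and $F(X,Y,Z)$ for a three-part identity. With coefficients $(+1;-1,-1,-1;+1,+1,+1)$, the terms $v(S\cup W),v(T\cup W),v(U\cup W)$ survive with coefficient $3$, $v(W)$ survives with $-5$, and the three triples containing $W$ survive with $-1$. Using $E_4$ once more, the combination only says $3\Delta(S,T,U)=\Delta(S,T,W)+\Delta(S,U,W)+\Delta(T,U,W)$, which does not isolate $\Delta(S,T,U)$.

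The missing observation is the following. Let $F$ be the coarsening of $(S,T,U,W)$ that merges two parts $P,Q$, and let $R,R'$ be the other two parts. Then the difference of $F$ and $E_4$ is exactly
\[
\Delta(P,Q,R)+\Delta(P,Q,R')=0 .
\]
Take the merges $\{S,T\}$, $\{S,U\}$, $\{S,W\}$. Adding the first two relations and subtracting the third gives $2\Delta(S,T,U)=0$. Concretely, $-E_4+F(S\cup T,U,W)+F(S\cup U,T,W)-F(S\cup W,T,U)$ cancels every term involving $W$ and leaves $2\Delta(S,T,U)=0$. The symmetric version of your ansatz, with coefficients $(-3;-1,-1,-1;+2,+2,+2)$, gives $6\Delta(S,T,U)=0$. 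This is the computation in the paper's proof of Theorem \ref{thm_mgeq4}, done there with $U=\{k\}$, where $h_k(A,B)=\Delta(A,B,\{k\})$.

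With this fix, Steps 1, 2 and 4 go through as you wrote them. Step 1(c) is not even needed, since $\beta_{ij}\ge 0$ follows directly from $\varphi_{\{x,y\}}\ge 0$ via Step 1(b).

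Once repaired, your argument is a genuinely different proof. The paper inducts on $n$, with Theorem \ref{de_2} as the base. Cloning two agents gives $(n-1)$-agent rules $g^{ij}$, which are random bi-dictatorial by hypothesis. Lemmas \ref{lem_fc_1} and \ref{lem_fc_2} then fix the parameters at profiles where $i$ and $j$ are sole supporters. Lemma \ref{de_lem_7} inducts on the sizes of the supporting coalitions, using a fourth alternative $d$ and Lemma \ref{de_lem_6} to peel off one supporter at a time.

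You instead extend the only-if half of Theorem \ref{de_m=3_4} to $m\ge 4$, which needs only Theorem \ref{de_tops-only} and Lemma \ref{de_lem_6}. You then collapse the coalition weight function algebraically, in effect composing that extension with Theorem \ref{thm_mgeq4}. Your route avoids both the induction on agents and the appeal to the $n\le 3$ case, so it works verbatim for every $n\ge 2$. It also isolates exactly how $m\ge 4$ is used: a profile with four supporting coalitions. The paper's route stays closer to the incentive constraints and reuses the small-$n$ characterization.
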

\begin{proof}
The if part of the theorem follows from Remark \ref{rem_star}. We prove the only-if part of the theorem. Recall that by Theorem \ref{de_tops-only}, $\varphi$ is tops-only and for any profile $R_N\in \de^n$, $\varphi_X(R_N)>0$ implies $X\in \tau(R_N)$ with $|X|\leq 2$. We use an induction on the number of agents $n$ to prove the theorem.  Assume that the theorem holds for all sets with $k<n$ agents. For $i,j\in N$ with $i<j$, let $N^{ij} = N\setminus {\{j\}}$ and define the PSCC $g^{ij} :
\de^{n-1} \to \Delta A$ for the set of agents in
$N^{ij}$ as follows: for all $R_{N^{ij}}=(R_i,R_{-\{i,j\}}) \in \de^{n- 1}$,
\begin{equation*}
g^{ij}(R_i,R_{-\{i,j\}}) = \varphi(R_i,R_i,R_{-\{i,j\}}).
\end{equation*}
\begin{claim}\label{cl_1}
    For all distinct $i,j\in N$, $g^{ij}$ is a random bi-dictatorial rule.
\end{claim}
\noindent \textbf{Proof of the claim:} Fix distinct $i,j\in N$. In view of the induction hypothesis, it is enough to show that $g^{ij}$ is unanimous and strategy-proof. The proof of unanimity and strategy-proofness of $g^{ij}$ follows from the exact same arguments used in Claim \ref{du_cl_1}. We omit the arguments here. \hfill $\square$

We now make use of Lemma \ref{de_lem_6} to identify the parameters of the $\varphi$ to show that $\varphi$ is a random bi-dictatorial rule. We do that in the following two lemmas.

\begin{lemma}\label{lem_fc_1}
    Let $R_N$ and $R_N'$ be two preference profiles such that $\tau(R_i)=\{a\}$ and $\tau(R_i')=\{x\}$ for some $i\in N$, and $\tau(R_j)\neq\{a\}$ and $\tau(R_j')\neq \{x\}$ for all $j\neq i$. Then $\varphi_{\{a\}}(R_N)=\varphi_{\{x\}}(R_N')$.
\end{lemma}
\begin{proof}
    Consider $b\in A\setminus \{a,x\}$. As $\tau(R_j)\neq\{a\}$ for all $j\neq i$, in view of Lemma \ref{de_lem_6}, $\varphi_{\{a\}}(R_N)=\varphi_{\{a\}}(R_i,\bar{R}_{-i})$ where $\tau(\bar{R}_j)=\{b\}$ for all $j\neq i$. Similarly, as $\tau(R'_j)\neq\{x\}$ for all $j\neq i$, in view of Lemma \ref{de_lem_6}, $\varphi_{\{x\}}(R'_N)=\varphi_{\{x\}}(R'_i,\bar{R}_{-i})$. We show that $\varphi_{\{a\}}(R_i,\bar{R}_{-i})= \varphi_{\{x\}}(R'_i,\bar{R}_{-i})$. Now, again applying Lemma \ref{de_lem_6} for profiles $(R_i,\bar{R}_i)$ and $(R_i',\bar{R}_i)$, we have 
    \begin{equation}\label{fq_0}
        \varphi_{\{\{a\},\{a,x\},\{x\}\}}(R_i,\bar{R}_{-i})= \varphi_{\{\{a\},\{a,x\},\{x\}\}}(R'_i,\bar{R}_{-i}).
    \end{equation}
    However, as $a\notin \tau(R'_i,\bar{R}_{-i})$ and $x\notin \tau(R_i,\bar{R}_{-i})$, by Theorem \ref{de_tops-only}, $\varphi_{\{\{a,x\},\{x\}\}}(R_i,\bar{R}_{-i})=\varphi_{\{\{a\},\{a,x\}\}}(R'_i,\bar{R}_{-i})=0$. Thus, (\ref{fq_0}) implies $\varphi_{\{a\}}(R_i,\bar{R}_{-i})= \varphi_{\{x\}}(R'_i,\bar{R}_{-i})$, completing the proof of the lemma. 
\end{proof}

\begin{lemma}\label{lem_fc_2}
    Let $R_N$ and $R_N'$ be two preference profiles such that for  distinct $i,j\in N$, $(\tau(R_i),\tau(R_j))=(\{a\},\{b\})$ and $(\tau(R'_i),\tau(R'_j))=(\{x\},\{y\})$, and for all $k\notin \{i,j\}$, $\tau(R_k)\notin \{\{a\},\{b\}\}$ and $\tau(R'_k)\notin \{\{x\},\{y\}\}$. Then $\varphi_{\{a,b\}}(R_N)=\varphi_{\{x,y\}}(R_N')$.
\end{lemma}
\begin{proof}
By Lemma \ref{de_lem_6}, $\varphi_{\{\{a\},\{a,b\},\{b\}\}}(R_N)=\varphi_{\{\{a\},\{a,b\},\{b\}\}}(R_j, R_j,R_{-\{i,j\}})$. Moreover, as for all $k\notin \{i,j\}$, $\tau(R_k)\notin \{\{a\},\{b\}\}$, again by Theorem \ref{de_tops-only}, $\varphi_{\{\{a\},\{a,b\}\}}(R_j, R_j,R_{-\{i,j\}})=0$, implying $\varphi_{\{\{a\},\{a,b\},\{b\}\}}(R_N)=\varphi_{\{b\}}(R_j, R_j,R_{-\{i,j\}})$. Similarly, $\varphi_{\{\{x\},\{x,y\},\{y\}\}}(R'_N)=\varphi_{\{y\}}(R_j', R_j',R'_{-\{i,j\}})$. Now as $g^{ij}$ is a random bi-dictatorial rule and for all $k\notin \{i,j\}$, $\tau(R_k)\notin \{\{a\},\{b\}\}$ and $\tau(R'_k)\notin \{\{x\},\{y\}\}$, we have 
\begin{align*}
    \varphi_{\{b\}}(R_j, R_j,R_{-\{i,j\}})&= g_{\{b\}}(R_j,R_{-\{i,j\}})\\
    &=g_{\{y\}}(R'_j,R'_{-\{i,j\}})\\
    &=\varphi_{\{y\}}(R_j', R_j',R'_{-\{i,j\}}).
\end{align*}
Thus, we have $\varphi_{\{\{a\},\{a,b\},\{b\}\}}(R_N)=\varphi_{\{\{x\},\{x,y\},\{y\}\}}(R'_N)$. However, as for all $k\notin \{i,j\}$, $\tau(R_k)\notin \{\{a\},\{b\}\}$ and $\tau(R'_k)\notin \{\{x\},\{y\}\}$, by Lemma \ref{lem_fc_1}, $\varphi_{\{a\}}(R_N)=\varphi_{\{x\}}(R'_N)$ and $\varphi_{\{y\}}(R_N)=\varphi_{\{b\}}(R'_N)$, this implies  $\varphi_{\{a,b\}}(R_N)=\varphi_{\{x,y\}}(R'_N)$. \end{proof}

We are now set to define a set of parameters $\{\alpha_{pq}\}_{(p,q)\in N^{(2)}}$, taking values in $[0,1]$, and show that $\varphi$ is random bi-dictatorial rule with respect to the parameters $\{\alpha_{pq}\}_{(p,q)\in N^{(2)}}$. Consider two distinct agents $i,j\in N$ and two distinct alternatives $a,b\in A$. Further, consider a preference profile $R_N$ such that $(\tau(R_i),\tau(R_j))=(\{a\},\{b\})$ and for all $k\notin \{i,j\}$, $\tau(R_k)\notin \{\{a\},\{b\}\}$. Then, define 
$$\alpha_{ii}\coloneqq\varphi_{\{a\}}(R_N)$$ and $$\alpha_{ij}\coloneqq\varphi_{\{a,b\}}(R_N).$$
The next lemma shows that $\varphi$ is the random bi-dictatorial rule with respect to the parameters $\{\alpha_{pq}\}_{(p,q)\in N^{(2)}}$. Note that in this lemma we will, for the first time, make use of the fact that $m\geq 4$.
\begin{lemma}\label{de_lem_7} 
    For all $R_N\in \de^n$ and all $X\in \mathcal{A}$, 
    \begin{equation}\label{fq_1}
         \varphi_X(R_N)=\sum_{\{(i,j)\in N^{(2)} \mid \tau(R_i)\cup \tau(R_j)=X\}}\alpha_{ij}.
    \end{equation}
   
    \end{lemma}

\begin{proof}
Consider a profile $\bar{R}_N\in \de^n$. In view of Theorem \ref{de_tops-only}, we have $\varphi_X(\bar{R}_N)=0$ if $|X|\geq 3$ or if $|X|\leq 2$ with $x\in X$ such that $x\notin \{\tau(\bar{R}_i)\mid i\in N\}$. Thus, it remains to show (\ref{fq_1}) holds for all $X$ with $|X|\leq 2$ and $X\subseteq  \{\tau(R_i)\mid i\in N\}$. Suppose $a,b\in A$ such that $a,b\in \{\tau(\bar{R}_i)\mid i\in N\}$. First, assume that $|\{i\in N\mid \tau(\bar{R}_i)=\{a\}\}|=|\{i\in N\mid \tau(\bar{R}_i)=\{b\}\}|=1$. Without loss of generality, we may assume that $\tau(\bar{R}_1)=\{a\}$ and $\tau(\bar{R}_2)=\{b\}$. By Lemma \ref{lem_fc_1} and the definition of $\alpha$, this implies $\varphi_{\{a\}}(\bar{R}_N)=\alpha_{11}$ and $\varphi_{\{a,b\}}(\bar{R}_N)=\alpha_{12}$. Thus, (\ref{fq_1}) holds for all  $X\in \{\{x\},\{x,y\}\}$ and all $R_N$ with $|\{i\in N\mid \tau(R_i)=\{x\}\}|=|\{i\in N\mid \tau(R_i)=\{y\}\}|=1$ where $x,y\in A$. Now consider the following induction hypothesis.

\textbf{Induction Hypothesis (IH):} Assume that (\ref{fq_1}) holds for all  $X\in \{\{x\},\{x,y\}\}$ and all $R_N$ with $|\{i\in N\mid \tau(R_i)=\{x\}\}|<k$, $|\{i\in N\mid \tau(R_i)=\{y\}\}|<k$ where $x,y\in A$ and $k\leq n$.

Suppose $|\{i\in N\mid \tau(\bar{R}_i)=\{a\}\}|=k$ and $|\{i\in N\mid \tau(\bar{R}_i)=\{b\}\}|\leq k$, and we show that (\ref{fq_1}) holds for $X=\{a\}$ and $X=\{a,b\}$ at the profile $\bar{R}_N$.
% If $|\{i\in N\mid \tau(\bar{R}_i)=\{a\}\}|$ and $|\{i\in N\mid \tau(\bar{R}_i)=\{b\}\}|$ are both less than $k$, we are done by the IH. So, assume that $|\{i\in N\mid \tau(\bar{R}_i)=\{a\}\}|=k$. 
Without loss of generality, we may further assume that $\tau(R_1)=\cdots=\tau(R_k)=\{a\}$, $\tau(R_{k+1})=\cdots=\tau(R_{k+l})=\{b\}$ for some $l\leq k$. Further, in view of Lemmas \ref{lem_fc_1} and \ref{lem_fc_2}, we may assume that all the remaining agents have $\{c\}$ as their top-ranked set for some $c\in A\setminus \{a,b\}$. To ease the presentation of the remaining proof, in view of the tops-onlyness of $\varphi$, we denote a profile as an $n$-tuple of singleton subsets of $A$, denoting the top-ranked sets of the agents. More formally, we write $(\underbrace{\{x\}, \dots, \{x\}}_{p},\underbrace{\{y\}, \dots, \{y\}}_{q}, \{z\}, \dots, \{z\})$ to denote a profile $R_N$ where $\tau(R_1)=\cdots=\tau(R_p)=\{x\}$, $\tau(R_{p+1})=\cdots=\tau(R_{p+q})=\{y\}$, and the remaining agents have $\{z\}$ as their top-ranked set. In terms of this new notation, $\bar{R}_N\equiv (\underbrace{\{a\}, \dots, \{a\}}_{k},\underbrace{\{b\}, \dots, \{b\}}_{l}, \{c\}, \dots, \{c\})$. We distinguish two cases based on $l$.

\vspace{2mm}
\noindent \textbf{Case 1:} $l<k$ \\
Consider $d\in A\setminus \{a,b,c\}$ (as $m\geq 4$) and the profile $(\bar{R}'_k,\bar{R}_{-k})\equiv (\underbrace{\{a\}, \dots, \{a\}}_{k-1},\{d\},\underbrace{\{b\}, \dots, \{b\}}_{l}, \{c\}, \dots, \{c\})$ where $\tau(\bar{R}_k')=\{d\}$. As in this profile the number of agents who have their top-ranked sets as $\{x\}$, where $x\in \{a,b,d\}$, is less than $k$, we may use IH to find out the probabilities of $\{a\}$, $\{a,b\}$, $\{a,d\}$, $\{b,d\}$, and $\{d\}$ at this profile. We have

\begin{align}
   & \varphi_{\{a\}}(\underbrace{\{a\}, \dots, \{a\}}_{k-1},\{d\},\underbrace{\{b\}, \dots, \{b\}}_{l}, \{c\}, \dots, \{c\})=\sum_{i\in \{1,\ldots,k-1\}}\alpha_{ii}, \nonumber \\
   & \varphi_{\{a,b\}}(\underbrace{\{a\}, \dots, \{a\}}_{k-1},\{d\},\underbrace{\{b\}, \dots, \{b\}}_{l}, \{c\}, \dots, \{c\})=\sum_{i\in \{1,\ldots,k-1\} \text{ and } j\in \{k+1,\ldots,k+l\}}\alpha_{ij}, \nonumber \\
    & \varphi_{\{a,d\}}(\underbrace{\{a\}, \dots, \{a\}}_{k-1},\{d\},\underbrace{\{b\}, \dots, \{b\}}_{l}, \{c\}, \dots, \{c\})=\sum_{i\in \{1,\ldots,k-1\} }\alpha_{ik}, \nonumber \\
   & \varphi_{\{b,d\}}(\underbrace{\{a\}, \dots, \{a\}}_{k-1},\{d\},\underbrace{\{b\}, \dots, \{b\}}_{l}, \{c\}, \dots, \{c\})=\sum_{j\in \{k+1,\ldots,k+l\}}\alpha_{kj}, \text{ and } \nonumber \\
   & \varphi_{\{d\}}(\underbrace{\{a\}, \dots, \{a\}}_{k-1},\{d\},\underbrace{\{b\}, \dots, \{b\}}_{l}, \{c\}, \dots, \{c\})=\alpha_{kk}. \label{fq_2} 
\end{align}

Now, as only agent $k$ is changing her top-ranked set from $\{a\}$ to $\{d\}$ between $\bar{R}_N$ and $(\bar{R}_k',\bar{R}_{-k})$, using Lemma \ref{de_lem_6}, we have 
\begin{align}
    &\varphi_{\{\{a\},\{a,d\},\{d\}\}}(\bar{R}_N)=\varphi_{\{\{a\},\{a,d\},\{d\}\}}(\bar{R}_k',\bar{R}_{-k}), \text{ and } \nonumber\\ 
				&\varphi_{\{\{a,b\},\{b,d\}\}}(\bar{R}_N)=\varphi_{\{\{a,b\},\{b,d\}}(\bar{R}_k',\bar{R}_{-k}). \label{fq_3}
\end{align}
Moreover, as $d\notin \{\tau(\bar{R}_i)\mid i\in N\}$, by Theorem \ref{de_tops-only}, $\varphi_{\{\{d\},\{a,d\},\{d,b\}\}}(\bar{R}_N)=0$. This, together with (\ref{fq_2}) and (\ref{fq_3}), implies that 
\begin{align*}
    & \varphi_{\{a\}}(\underbrace{\{a\}, \dots, \{a\}}_{k},\underbrace{\{b\}, \dots, \{b\}}_{l}, \{c\}, \dots, \{c\})=\sum_{i\in \{1,\ldots,k\}}\alpha_{ii}, \text{ and }\\
   & \varphi_{\{a,b\}}(\underbrace{\{a\}, \dots, \{a\}}_{k},\underbrace{\{b\}, \dots, \{b\}}_{l}, \{c\}, \dots, \{c\})=\sum_{i\in \{1,\ldots,k\} \text{ and } j\in \{k+1,\ldots,k+l\}}\alpha_{ij}.  \\
\end{align*} Thus, (\ref{fq_1}) holds for $\{a\}$ and $\{a,b\}$ at $\bar{R}_N$ when $|\{i\in N\mid \tau(\bar{R}_i)=\{a\}\}|=k$ and $|\{i\in N\mid \tau(\bar{R}_i)=\{b\}\}|< k$. 

\vspace{2mm}
\noindent \textbf{Case 2:} $l=k$ \\
We have to show that (\ref{fq_1}) holds for $\{a\}$ and $\{a,b\}$ at $\bar{R}_N$. Consider the profile $(\bar{R}_{2k}',\bar{R}_{-2k})$ where $\tau(\bar{R}'_{2k})=\{d\}$ for some $d\in A\setminus \{a,b,c\}$. As $k$ many agents with top-ranked set $\{a\}$ and $k-1$ many agents with top-ranked set $\{b\}$ at the profile $(\bar{R}_{2k}',\bar{R}_{-2k})$, by Case 1, $\varphi_{\{a\}}(\underbrace{\{a\}, \dots, \{a\}}_{k},\underbrace{\{b\}, \dots, \{b\}}_{k-1}, \{d\}, \{c\}, \dots, \{c\})=\sum_{i\in \{1,\ldots,k\}}\alpha_{ii}$. Moreover, by Lemma \ref{de_lem_6}, $\varphi_{\{a\}}(\underbrace{\{a\}, \dots, \{a\}}_{k},\underbrace{\{b\}, \dots, \{b\}}_{k-1}, \{d\}, \{c\}, \dots, \{c\})=\varphi_{\{a\}}(\underbrace{\{a\}, \dots, \{a\}}_{k},\underbrace{\{b\}, \dots, \{b\}}_{k}, \{c\}, \dots, \{c\})$. Thus, we have $\varphi_{\{a\}}(\bar{R}_N)=\sum_{i\in \{1,\ldots,k\}}\alpha_{ii}$. This shows (\ref{fq_1}) holds for $\{a\}$ at $\bar{R}_N$ when $|\{i\in N\mid \tau(\bar{R}_i)=\{a\}\}|=k$ and $|\{i\in N\mid \tau(\bar{R}_i)=\{b\}\}|=k$.

We now show it for $\{a,b\}$. This proof follows a similar pattern to the proof for Case 1. We make use of both IH and Case 1 here. The assumption of the case implies $\bar{R}_N\equiv (\underbrace{\{a\}, \dots, \{a\}}_{k},\underbrace{\{b\}, \dots, \{b\}}_{k}, \{c\}, \dots, \{c\})$. Consider the profile $(\bar{R}_{2k}',\bar{R}_{-2k})$ where $\tau(\bar{R}'_{2k})=\{d\}$ for some $d\in A\setminus \{a,b,c\}$ as in the previous paragraph. This profile has $k$ many agents with top-ranked set $\{a\}$, $k-1$ many agents with top-ranked set $\{b\}$, and $1$ agent with top-ranked set $\{d\}$. Hence, by Case 1,
\begin{align}
   & \varphi_{\{a,b\}}(\underbrace{\{a\}, \dots, \{a\}}_{k},\underbrace{\{b\}, \dots, \{b\}}_{k-1}, \{d\}, \{c\}, \dots, \{c\})=\sum_{i\in \{1,\ldots,k\} \text{ and } j\in \{k+1,\ldots,2k-1\}}\alpha_{ij}, \nonumber \\
    & \varphi_{\{a,d\}}(\underbrace{\{a\}, \dots, \{a\}}_{k},\underbrace{\{b\}, \dots, \{b\}}_{k-1}, \{d\}, \{c\}, \dots, \{c\})=\sum_{i\in \{1,\ldots,k\} }\alpha_{i(2k)}.  \label{fq_4} 
\end{align}
% and by IH, 
%     \begin{align}
%    & \varphi_{\{b,d\}}(\underbrace{\{a\}, \dots, \{a\}}_{k},\underbrace{\{b\}, \dots, \{b\}}_{k-1}, \{d\}, \{c\}, \dots, \{c\})=\sum_{j\in \{k+1,\ldots,2k-1\}}\alpha_{j(2k)}, \text{ and } \nonumber \\
%    & \varphi_{\{d\}}(\underbrace{\{a\}, \dots, \{a\}}_{k},\underbrace{\{b\}, \dots, \{b\}}_{l}, \{d\}, \{c\}, \dots, \{c\})=\alpha_{(2k)(2k)}. \label{fq_5} 
% \end{align}

Now, as only agent $2k$ is changing her top-ranked set from $\{b\}$ to $\{d\}$ between $\bar{R}_N$ and $(\bar{R}_{2k}',\bar{R}_{-2k})$, using Lemma \ref{de_lem_6}, we have 
\begin{align}
	&\varphi_{\{\{a,b\},\{a,d\}\}}(\bar{R}_N)=\varphi_{\{\{a,b\},\{a,d\}}(\bar{R}_{2k}',\bar{R}_{-2k}). \label{fq_5}
\end{align}
Moreover, as $d\notin \{\tau(\bar{R}_i)\mid i\in N\}$, by Theorem \ref{de_tops-only}, $\varphi_{\{a,d\}}(\bar{R}_N)=0$. This, together with (\ref{fq_4}) and (\ref{fq_5}), implies that 
\begin{align*}
   & \varphi_{\{a,b\}}(\underbrace{\{a\}, \dots, \{a\}}_{k},\underbrace{\{b\}, \dots, \{b\}}_{k}, \{c\}, \dots, \{c\})=\sum_{i\in \{1,\ldots,k\} \text{ and } j\in \{k+1,\ldots,k+l\}}\alpha_{ij}.  \\
\end{align*}
Thus, (\ref{fq_1}) holds for $\{a,b\}$ at $\bar{R}_N$ when $|\{i\in N\mid \tau(\bar{R}_i)=\{a\}\}|=k$ and $|\{i\in N\mid \tau(\bar{R}_i)=\{b\}\}|=k$. 

The two cases together complete the verification of the Induction step. Hence, this completes the proof of the lemma. \end{proof}

As $\varphi$ is unanimous, Lemma \ref{de_lem_7} shows that $\sum_{(p,q)\in N^{(2)}}\alpha_{pq}=1$. Thus, $\sum_{(p,q)\in N^{(2)}}\alpha_{pq}=1$ is valid set of parameters and $\varphi$ is a random bi-dictatorial rule w.r.t these parameters. This completes the proof of Theorem \ref{de_n}.
\end{proof}

 \section{Discussions}\label{sec:discussions}
 \subsection{Connection with \cite{barbera2001strategy} and extreme point property}\label{sec:connections}

The DSCCs on the $\du$ and $\de$ domains were characterized by \cite{barbera2001strategy}. They established that on the $\du$ domain, a DSCC is unanimous and strategy-proof if and only if it is dictatorial. On the $\de$ domain, a DSCC satisfies these properties if and only if it is a bi-dictatorship. These results hold for $m \geq 3$ alternatives. Since random dictatorial rules are convex combinations of dictatorial DSCCs, and random bi-dictatorial rules are convex combinations of bi-dictatorial rules, our results coincide with theirs for the deterministic case on the $\du$ domain for $m \geq 3$ alternatives, and on the $\de$ domain for $m \geq 4$ (or $m=3$ and $n \leq 3$) alternatives. 
 
 However, on the $\de$ domain for $m=3$ and $n \geq 4$, we demonstrate that the class of PSCCs is strictly larger than the set of convex combinations of bi-dictatorial rules, and characterize these rules as coalition-weighted rules. In what follows, we show that the deterministic versions of these rules are bi-dictatorial rules. We call a coalition-weighted rule deterministic if it assigns a degenerate lottery at every profile; equivalently, the corresponding coalition weight function is $\{0,1\}$-valued.
  \begin{proposition}\label{prop_deterministic_cw}
       Every deterministic coalition-weighted rule is a bi-dictatorial rule.
   \end{proposition}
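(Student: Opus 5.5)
The plan is to translate determinism into a combinatorial property of the coalition weight function and then read off the two (possibly equal) bi-dictators. Let $\varphi$ be a deterministic coalition-weighted rule with coalition weight function $v$. First I would check that $v$ is $\{0,1\}$-valued. For any $S\subseteq N$, take a profile with $N_a(R_N)=S$; then $v(S)=\varphi_{\{a\}}(R_N)\in\{0,1\}$. Let $\mathcal{W}=\{S\subseteq N\mid v(S)=1\}$ be the family of ``winning'' coalitions. By normalization, $N\in\mathcal{W}$ and $\emptyset\notin\mathcal{W}$. By monotonicity, $\mathcal{W}$ is closed under supersets.

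The key structural step uses supermodularity. If $S,T\in\mathcal{W}$, then $v(S\cup T)+v(S\cap T)\geq 2$, so $v(S\cap T)=1$. Hence $\mathcal{W}$ is closed under intersections. Since $N$ is finite, $D:=\bigcap_{S\in\mathcal{W}}S$ belongs to $\mathcal{W}$. Because $\emptyset\notin\mathcal{W}$, we have $D\neq\emptyset$. Combined with upward closedness, this gives $v(S)=1$ if and only if $D\subseteq S$.

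Next I would use partition balance to show that $|D|\leq 2$. This is the step I expect to be the only real obstacle, and it is mild. Suppose $i,j,k\in D$ are distinct. Since $n\geq 4$, the partition $S=\{i\}$, $T=\{j\}$, $U=N\setminus\{i,j\}$ has three nonempty parts. None of $S$, $T$, $U$, $S\cup T$, $S\cup U$, $T\cup U$ contains $D$, because each misses one of $i,j,k$. Partition balance would then read $0=1$, a contradiction. So $D=\{i\}$ or $D=\{i,j\}$ with $i\neq j$. In the first case set $j:=i$.

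Finally I would verify that $\varphi$ coincides with the bi-dictatorial rule $f(R_N)=\tau(R_i)\cup\tau(R_j)$ by using the defining formulas.
\begin{itemize}
\item For singletons, $\varphi_{\{x\}}(R_N)=1$ exactly when $D\subseteq N_x(R_N)$, that is, when $\tau(R_i)=\tau(R_j)=\{x\}$.
\item For a doubleton $\{x,y\}$, the value $v(N_x\cup N_y)-v(N_x)-v(N_y)$ equals $1$ exactly when $D\subseteq N_x\cup N_y$ but $D$ lies in neither $N_x$ nor $N_y$. This means $\{\tau(R_i),\tau(R_j)\}=\{\{x\},\{y\}\}$; in particular $D$ must be $\{i,j\}$ with $i\neq j$.
\item The set $\{a,b,c\}$ receives probability $0$ by definition of a coalition-weighted rule.
\end{itemize}
Thus $\varphi_X(R_N)=1$ if and only if $X=\tau(R_i)\cup\tau(R_j)$, so $\varphi$ is bi-dictatorial. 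When $i=j$ it is the dictatorial special case, which the paper's definition of a bi-dictatorial rule already covers.
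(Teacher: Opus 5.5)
Your proof is correct and follows the paper's argument step for step. The shared steps are: $v$ is $\{0,1\}$-valued; winning coalitions are closed under supersets (by monotonicity) and under intersections (by supermodularity); this yields a unique minimal winning coalition $D$; partition balance forces $|D|\le 2$; and the rule then selects $\tau(R_i)\cup\tau(R_j)$. Your partition $\{i\},\{j\},N\setminus\{i,j\}$ is a concrete instance of the paper's ``split $D$ across all three parts,'' and the appeal to $n\geq 4$ is unnecessary, since $k\in N\setminus\{i,j\}$ already makes all three parts nonempty.
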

 \begin{proof} Let $\varphi$ be deterministic coalition-weighted rule and let $v$ be the corresponding coalition weight function. This means $v$ must be $\{0,1\}$-valued. Call a coalition $S$ winning if $v(S)=1$. By monotonicity, every superset of a winning coalition is winning. By supermodularity, the intersection of two winning coalitions is also winning. To see this, if $v(S)=v(T)=1$, then $$v(S\cup T)+v(S\cap T)\geq v(S)+v(T)=2.$$ Since both terms on the left are at most one, it follows that $v(S\cup T)=v(S\cap T)=1$. Thus, the family of winning coalitions is closed under intersection. Hence, there is a unique minimal winning coalition, say $D$, and
\[
        v(S)=1 \Longleftrightarrow D\subseteq S.
\]

 We show that $D$ must contain either a single agent or two agents. To see this, note that the partition balance condition rules out $|D|\geq 3$. If $|D|\geq 3$, choose a partition $(S,T,U)$ of $N$ that splits the members of $D$ across all three parts. Then none of $S, T, U, S\cup T, S\cup U, T\cup U$ contains $D$. Hence, all these coalitions have weight zero, and we get a contradiction to the partition balance condition. Normalization rules out $D=\emptyset$. Therefore, $D$ consists of either one agent or two agents.

If $D=\{i\}$, then $v(S)=1$ if and only if $i\in S$. The induced rule selects $\tau(R_i)$ at every profile, which is a dictatorship. If $D=\{i,j\}$ with $i\neq j$, then $v(S)=1$ if and only if $\{i,j\}\subseteq S$. At any profile, if agents $i$ and $j$ have the same top-ranked singleton, say $\{x\}$, then $N_x(R_N)$ contains $D$, and the rule selects $\{x\}$. If they have different top-ranked singletons, say $\{x\}$ and $\{y\}$, then neither $N_x(R_N)$ nor $N_y(R_N)$ contains $D$, but $N_x(R_N)\cup N_y(R_N)$ does. Hence
\[
        \varphi_{\{x,y\}}(R_N)=1.
\]
Thus, in all cases, the rule selects $\tau(R_i)\cup \tau(R_j)$. This is exactly a bi-dictatorial rule. Therefore, for $m=3$, deterministic coalition-weighted rules are bi-dictatorial rules.
 \end{proof}
 
 In the existing literature, very few domains have been identified where probabilistic rules are not merely convex combinations of deterministic ones (see \cite{peters2021unanimous} and \cite{chatterji2022probabilistic}). While those known cases typically persist for three or more agents regardless of the number of alternatives, our findings reveal that the $\de$ domain is uniquely sensitive to the specific value of $m$. Given this anomaly for $m=3$ and $n \geq 4$, characterizing the extreme points of the class of unanimous and strategy-proof PSCCs on the $\de$ domain remains an intriguing open question.

\subsection{Coalition-weighted rules for more than 3 alternatives}\label{sec:rcw_mgeq4}
 In this section, we define a coalition-weighted rule for $m\geq 4$ alternatives. This is in the same spirit as the coalition-weighted rules defined for $m=3$ alternatives. We show that for $m\geq 4$, every coalition-weighted rule becomes a random bi-dictatorial rule. To proceed, consider the following definitions.

\begin{definition}\label{defn_cwf}
A function $v:2^N\to[0,1]$ is a \emph{coalition weight function} if it satisfies the following conditions.
\begin{enumerate}
    \item [(i)] \textbf{Normalization:} $v(\emptyset)=0,\quad v(N)=1.$

    \item [(ii)] \textbf{Monotonicity:} for all $S,T\subseteq N$,
    $S\subseteq T \implies v(S)\leq v(T).$

    \item [(iii)] \textbf{Supermodularity:} for all $S,T\subseteq N$, $v(S\cup T)+v(S\cap T)\geq v(S)+v(T).$

    \item [(iv)] \textbf{Partition Balance:} for every partition $(S_1,\ldots,S_m)$ of $N$,
    $$\sum_{k<l} v(S_k\cup S_l)=1+(m-2)\sum_{p=1}^m v(S_p).$$
\end{enumerate}
\end{definition}

Observe that the formulation of the coalition weight function $v$ in Definition \ref{defn_cwf} is identical to its counterpart in Definition \ref{defn_tbc}, with the sole exception of the fourth condition, Partition Balance. Structurally, the partition balance condition serves as the restriction ensuring that only singletons and doubletons are assigned positive probabilities at any given preference profile. Furthermore, setting $m=3$ in the generalized partition balance condition of Definition \ref{defn_cwf} directly collapses back to the baseline partition balance requirement of Definition \ref{defn_tbc}. This consistency across dimensions allows us to now formally define the class of coalition-weighted rules.

% Observe that this definition of a coalition weight function $v$ is identical to the coalition weight function defined in Definition \ref{defn_tbc}, except for the fourth condition of Partition Balance. Recall that the partition balance condition is to ensure that only singletons and doubletons receive positive probabilities at any profile. Further, if we put $m=3$ in the partition balance condition of Definition \ref{defn_cwf}, we get back the partition balance condition of Definition \ref{defn_tbc}. We are now ready to formally define random coalition-weighted rules.

\begin{definition}
    A PSCC $\varphi:\de^n \to \Delta \mathcal{A}$ is called a coalition-weighted rule if there exists a coalition weight function $v$ such that for every profile $R_N\in \de^n$ and every $x\in A$,
$$\varphi_{\{x\}}(R_N)=v(N_x(R_N)),$$
for every distinct $x,y\in A$,
$$\varphi_{\{x,y\}}(R_N)=v(N_x(R_N)\cup N_y(R_N))-v(N_x(R_N))-v(N_y(R_N)), \text{ and }$$
$$\varphi_{X}(R_N)=0 \text{ for all } X \text{ with }|X|\geq 3.$$
\end{definition}

% We now formally show that for $m\geq 4$, every random coalition-weighted rule is a random bi-dictatorial rule. This, together with Theorem \ref{de_n}, yields another representation of unanimous and strategy-proof PSCCs for $m\geq 4$. A PSCC is unanimous and strategy-proof if and only if it is a random coalition-weighted rule.  

% We now formally demonstrate that for $m \ge 4$, the class of random coalition-weighted rules collapses uniquely into the class of random bi-dictatorial rules. Combined with Theorem \ref{de_n}, this equivalence provides an alternative structural representation of all unanimous and strategy-proof PSCCs on larger choice sets. Specifically, we show that a PSCC is unanimous and strategy-proof if and only if it can be represented as a random coalition-weighted rule.

We now formally show that for $m \ge 4$, the class of coalition-weighted rules collapses to the class of random bi-dictatorial rules. Combined with Theorem \ref{de_n}, this equivalence yields an alternative representation of unanimous and strategy-proof PSCCs. Specifically, a PSCC is unanimous and strategy-proof if and only if it is a coalition-weighted rule.
\begin{theorem}\label{thm_mgeq4}
    Let $m\geq 4$. Then every coalition-weighted rule is a random bi-dictatorial rule. 
\end{theorem}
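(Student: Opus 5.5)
The plan is to avoid a direct combinatorial decomposition of $v$ into pairwise weights and instead route through the characterizations already proved. Let $\varphi$ be a coalition-weighted rule with coalition weight function $v$ as in Definition \ref{defn_cwf}, and let $m\geq 4$. I would show that $\varphi$ is unanimous and strategy-proof on $\de^n$. Then Theorem \ref{de_n} (for $n\geq 4$) or Theorem \ref{de_2} (for $n\leq 3$, valid for all $m\geq 3$) immediately gives that $\varphi$ is random bi-dictatorial.

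\emph{Well-definedness and unanimity.} Non-negativity of the probabilities follows from supermodularity applied to the disjoint coalitions $N_x(R_N)$ and $N_y(R_N)$. Summing to one is exactly the generalized partition balance applied to the partition $(N_x(R_N))_{x\in A}$, with empty parts allowed. Unanimity follows from $v(N)=1$ and $v(\emptyset)=0$.

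\emph{Strategy-proofness.} The rule is tops-only, so consider agent $i$ switching her top from $\{x\}$ to $\{y\}$; let $R_N'$ be the new profile. The coalitions $N_w$ for $w\notin\{x,y\}$ do not change. Hence $\varphi_X$ is unchanged for every $X$ avoiding both $x$ and $y$, and every $X$ with $|X|\geq 3$ keeps probability $0$.

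\emph{The block $\{x\},\{x,y\},\{y\}$.} The total $\varphi_{\{\{x\},\{x,y\},\{y\}\}}=v(N_x\cup N_y)$ is unchanged. Monotonicity gives $\varphi_{\{x\}}(R_N)\geq\varphi_{\{x\}}(R_N')$ and $\varphi_{\{y\}}(R_N)\leq\varphi_{\{y\}}(R_N')$.

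\emph{The pairs $\{x,w\},\{y,w\}$ for each $w\notin\{x,y\}$.} Supermodularity, applied exactly as in the proof of Theorem \ref{de_m=3_4}, gives $\varphi_{\{x,w\}}(R_N)\geq \varphi_{\{x,w\}}(R_N')$. If I can also show that
\[
\varphi_{\{\{x,w\},\{y,w\}\}}(R_N)=\varphi_{\{\{x,w\},\{y,w\}\}}(R_N') \quad\text{for each } w\notin\{x,y\},
\]
then the move from $\varphi(R_N)$ to $\varphi(R_N')$ is a composition of transfers of mass from $\{x\}$ to $\{x,y\}$ to $\{y\}$, and from $\{x,w\}$ to $\{y,w\}$. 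Since $\tau(R_i)=\{x\}$ and $R_i\in\de$, we have $\{x\}P_i\{x,y\}P_i\{y\}$ and $\{x,w\}P_i\{y,w\}$. Every transfer therefore moves probability from a better set to a worse one, so $\varphi(R_N)$ first-order stochastically dominates $\varphi(R_N')$ according to $R_i$, as required.

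\emph{Main obstacle.} The hard step is the per-$w$ invariance displayed above. For $m=3$ it came from a single use of partition balance. For $m\geq 4$, comparing the partition balance equations at $R_N$ and $R_N'$ only yields the invariance summed over all $w$. To isolate a single $w$, I would apply partition balance to coarsened partitions. Writing $Z$ for the union of the remaining coalitions, consider the four-part partition $(N_x,N_y,N_w,Z)$ padded with empty parts, and the three-part partitions obtained by merging $Z$ into one of the other blocks. Subtracting these equations should produce the identity
\[
v(S\cup T\cup U)-v(S\cup T)-v(S\cup U)-v(T\cup U)+v(S)+v(T)+v(U)=0
\]
for pairwise disjoint $S,T,U$, i.e.\ vanishing third-order differences. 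Applying this identity with $S=\{i\}$, and with $T,U$ drawn from $N_x\setminus\{i\}$, $N_y$, $N_w$, and $Z$, should give precisely the invariance of $v(N_x\cup N_w)+v(N_y\cup N_w)-v(N_x)-v(N_y)$ when $i$ moves from $N_x$ to $N_y$.

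\emph{Fallback.} If the bookkeeping with empty parts fails, I would instead use this vanishing-third-difference identity directly. It implies that $v(S)=\sum_{\{i,j\}\subseteq S,\, i\leq j}\beta_{ij}$ for suitable $\beta_{ij}\geq 0$, obtained by Möbius inversion and truncated at order two. From this formula the random bi-dictatorial representation can be read off directly.
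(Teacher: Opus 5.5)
Your proof is correct, but your main line is a detour the paper avoids. The paper never verifies strategy-proofness. It applies partition balance to the four-block partition $(A,B,\{k\},D)$ and to the three partitions obtained by merging $\{k\}$ into $A$, $B$ or $D$. This yields exactly your vanishing third difference with $S=\{k\}$, namely $\Delta_k v(A\cup B)-\Delta_k v(A)=\Delta_k v(B)-v(\{k\})$. Telescoping marginal contributions then gives $v(S)=\sum_{\{i,j\}\subseteq S}\alpha_{ij}$, from which the bi-dictatorial weights are read off. That is your fallback, with telescoping in place of M\"obius inversion. Your remark that supermodularity makes the off-diagonal weights non-negative is a point the paper leaves implicit.

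One bookkeeping remark on your derivation. In the four equations built from $(N_x,N_y,N_w,Z)$ by merging $Z$, the value $v(N_x\cup N_y\cup N_w)$ never appears. What comes out is the identity for triples containing the merged block $Z$. To get the instance $S=\{i\}$, run the manipulation on a partition in which $\{i\}$ is the merged block, exactly as the paper does. This is harmless, since partition balance holds for every partition.

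On what each route buys: the key lemma is the same in both, and once it is in hand pair-additivity takes two lines. Your main route instead spends it on an FOSD verification and then leans on Theorems \ref{de_n} and \ref{de_2}. Those rest on the long tops-only argument of Theorem \ref{de_tops-only} in the supplement. In exchange you get a direct proof that coalition-weighted rules are strategy-proof for $m\geq 4$, the analogue of the if-part of Theorem \ref{de_m=3_4}. The paper's route is purely algebraic, so Theorem \ref{thm_mgeq4} stands as a fact about coalition weight functions alone, independent of Theorem \ref{de_n} and of the supplementary material.
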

\begin{proof}
    Let $\varphi$ be a coalition-weighted rule with respect to the coalition weight function $v$. We show that $v$ is pair-additive, i.e., there exists $\{\alpha_{ij}\}_{(i,j)\in N^{(2)}}$ such that for all $S\subseteq N$, 
    \begin{equation}\label{rcw_0}
        v(S) =  \sum_{\substack{\{i, j\} \subseteq S \\ (i,j)\in N^{(2)}}} \alpha_{ij}.
    \end{equation}
    Define $\{\alpha_{ij}\}_{(i,j)\in N^{(2)}}$ as follows:
    \begin{align*}
         \alpha_{ii}:=v(\{i\})  \hspace{20mm}   \alpha_{ij}:=v(\{i,j\})-v(\{i\})-v(\{j\}).
    \end{align*}
 For an agent $i\in N$ and $W\subseteq N\setminus \{i\}$, let $\Delta_iv(W)$ denote the first order marginal contribution of $i$ for the coalition $W$. Formally, $\Delta_iv(W)=v(W\cup \{i\})-v(W)$. We first prove a claim. 
   \begin{claim}\label{cl_rcw_1}
       For all distinct $i,j\in N$ and $W\subseteq N\setminus \{i,j\}$, $$\Delta_i v(W \cup \{j\}) - \Delta_i v(W) = \alpha_{ij}.$$
   \end{claim}
   \textbf{Proof of the claim:}
   Let $(X_1, X_2, X_3, X_4)$ be a partition of $N$. Consider a partition $(S_1, \ldots, S_m)$ of $N$ such that $S_l=X_l$ for $l\leq 4$ and $S_l=\emptyset$ for all $4<l\leq m$. Applying the partition balance condition for $(S_1, \ldots, S_m)$, we have
   \begin{align}
   & \sum_{1 \leq k < l \leq 4} v(X_k \cup X_l) + (m - 4) \sum_{p=1}^{4} v(X_p) = 1 + (m - 2) \sum_{p=1}^{4} v(X_p) \nonumber \\
  \implies & \sum_{1 \le k < l \le 4} v(X_k \cup X_l) = 1 + 2 \sum_{p=1}^{4} v(X_p).\label{rcw_1}
   \end{align}
  Let $k \in N$ be a fixed agent, and consider a partition $(A,B,\{k\},D)$ of $N$. Applying (\ref{rcw_1}) to $(X_1,X_2,X_3,X_4)=(A,B,\{k\},D)$, we have 
  \begin{equation}\label{rcw_2}
v(A \cup B) + v(A \cup \{k\}) + v(A \cup D) + v(B \cup \{k\}) + v(B \cup D) + v(D \cup \{k\})= 1 + 2[v(A) + v(B) + v(\{k\}) + v(D)]
\end{equation}
Similarly, applying (\ref{rcw_1}) to $(X_1,X_2,X_3,X_4)=(A\cup \{k\},B,\emptyset,D)$, we have 
\begin{equation*}
v(A \cup B \cup \{k\}) + v(A \cup \{k\}) + v(A \cup D \cup \{k\}) + v(B) + v(B \cup D) + v(D) = 1 + 2[v(A \cup \{k\}) + v(B) + v(D)].
\end{equation*}
Simplifying the above by canceling $v(A\cup \{k\})$ from both the sides we have 
\begin{equation}\label{rcw_3}
v(A \cup B \cup \{k\}) + v(A \cup D \cup \{k\}) + v(B) + v(B \cup D) + v(D)
= 1 + v(A \cup \{k\}) + 2v(B) + 2v(D)
\end{equation}
Now subtracting (\ref{rcw_2}) from (\ref{rcw_3}), we have
\begin{align*}
& [v(A \cup B \cup \{k\}) - v(A \cup B)] + [v(A \cup D \cup \{k\}) - v(A \cup D)] - v(A \cup \{k\}) - v(B \cup \{k\}) - v(D \cup \{k\}) + v(B) + v(D)  \\
&= v(A \cup \{k\}) - 2v(A) - 2v(\{k\})
\end{align*}
Further, substituting the $\Delta_k$ in the above equation and arranging the terms, it yields
\begin{align}\label{rcw_4}
&[\Delta_k v(A \cup B) - \Delta_k v(A) - \Delta_k v(B) + v(\{k\})] + [\Delta_k v(A \cup D) - \Delta_k v(A) - \Delta_k v(D) + v(\{k\})] = 0
\end{align}
Define the function $h_k(X, Y) = \Delta_k v(X \cup Y) - \Delta_k v(X) - \Delta_k v(Y) + v(\{k\})$. By definition, this function is symmetric in its arguments: $h_k(X, Y) = h_k(Y, X)$. Thus, (\ref{rcw_4}) simplifies to:
\begin{equation}\label{rcw_5}
h_k(A, B) + h_k(A, D) = 0
\end{equation}
Similar to (\ref{rcw_5}), if we merge $k$ into either $B$ or $D$ and use the symmetry of $h_k$, we get
\begin{align}\label{rcw_6}
   & h_k(A, B) + h_k(B, D) = 0 \text{ and }\nonumber \\
   & h_k(A, D) + h_k(B, D) = 0.
\end{align}
Note that (\ref{rcw_5}) and (\ref{rcw_6}), together imply that $h_k(A,B)=0$. Therefore, by the definition of $h_k$,
\begin{equation}\label{rcw_7}
    \Delta_k v(A \cup B) - \Delta_k v(A) = \Delta_k v(B) - v(\{k\}).
\end{equation}
We now complete the proof of the claim. Consider distinct $i,j\in N$ and $W\subseteq N\setminus \{i,j\}$. Putting $A=W$, $B=\{j\}$ and $k=i$ in (\ref{rcw_7}), we have
\begin{align*}
    \Delta_i v(W \cup \{j\}) - \Delta_i v(W) = &\Delta_i v(\{j\}) - v(\{i\})\nonumber\\
 =  &\; v(\{i, j\}) - v(\{j\}) - v(\{i\}) \hspace{20mm} (\text{as } \Delta_i v(\{j\}) = v(\{i, j\}) - v(\{j\}))\\
 = &\; \alpha_{ij}. \hspace{128mm}  \square
\end{align*}

We now prove another claim.
\begin{claim}\label{cl_rcw_2}
    For all $X\subseteq N$ and all $i\notin N$,
    $$\Delta_i v(X) = v(\{i\}) + \sum_{j \in X} \alpha_{ij}$$
\end{claim}
\textbf{Proof of the claim:} Let $X \subseteq N \setminus \{i\}$ be a subset of agents, and let its elements be ordered arbitrarily as $X = \{j_1, j_2, \dots, j_k\}$. We can express the marginal contribution of agent $i$ to $X$, $\Delta_i v(X)$, as a telescoping sum starting from the empty set: $$\Delta_i v(X) = \Delta_i v(\emptyset) + \sum_{r=1}^k \left[ \Delta_i v(\{j_1, \dots, j_r\}) - \Delta_i v(\{j_1, \dots, j_{r-1}\}) \right].$$
By applying Claim \ref{cl_rcw_1}, the $r^{th}$  difference term in the sum is equal to $\alpha_{i j_r}$. Additionally, by the definition of the marginal operator and normalization ($v(\emptyset)=0$), we have $\Delta_i v(\emptyset) = v(\{i\}) - v(\emptyset) = v(\{i\})$. Substituting these into the telescoping sum yields: $$\hspace{50mm} \Delta_i v(X) = v(\{i\}) + \sum_{r=1}^k \alpha_{i j_r} = v(\{i\}) + \sum_{j \in X} \alpha_{ij}. \hspace{55mm} \square $$

We are now ready to show that $v$ is pair-additive, i.e., (\ref{rcw_0}) holds. Consider $S= \{i_1, i_2, \dots, i_p\} \subseteq N$ with $i_1<\cdots<i_p$. We can write $v(S)$ as a telescoping sum of marginal contributions, adding one agent at a time: $$v(S) = v(\emptyset) + \sum_{k=1}^p \Delta_{i_k} v(\{i_1, \dots, i_{k-1}\}).$$ Further, using $v(\emptyset) = 0$ and Claim \ref{cl_rcw_2} for each marginal contribution in the above sum yield: \begin{align*}
    v(S) =& \sum_{k=1}^p \left( v(\{i_k\}) + \sum_{r=1}^{k-1} \alpha_{i_r i_k} \right)\\
    =&  \sum_{k=1}^p v(\{i_k\}) + \sum_{k=1}^p \sum_{r=1}^{k-1} \alpha_{i_r i_k} \\
    =&  \sum_{k=1}^p \alpha_{i_ki_k} + \sum_{k=1}^p \sum_{r=1}^{k-1} \alpha_{i_r i_k} \\
     =& \sum_{\substack{\{i, j\} \subseteq S \\ (i,j)\in N^{(2)}}} \alpha_{ij}.
    \end{align*}
    Thus, $v$ is pair-additive. We now show that $\varphi$ is random bi-dictatorial with respect to the parameters $\{\alpha_{ij}\}_{(i,j)\in N^{(2)}}$. Note that as $v(N)=1$, by pair-additivity of $v$, $\sum_{(i,j)\in N^{(2)}}\alpha_{ij}=1$, implying $\{\alpha_{ij}\}_{(i,j)\in N^{(2)}}$ is a valid set of parameters for a random bi-dictatorial rule. Consider a profile $R_N\in \de^n$ and a singleton set $\{a\}$. By the definition of $\varphi$ 
    \begin{align*}
        \varphi_{\{a\}}(R_N)=& v(N_a(R_N))\\
        =& \sum_{\substack{\{i, j\} \subseteq N_a(R_N) \\ (i,j)\in N^{(2)}}} \alpha_{ij}\hspace{10mm} (\text{by pair additivity of $v$})\\
        =& \sum_{\substack{(i,j)\in N^{(2)} \\ \tau(R_i)\cup \tau(R_j) =\{a\}}}\alpha_{ij}.
    \end{align*}
    Now consider distinct $a,b\in A$. Again, by the definition of $\varphi$
\begin{align*}
        \varphi_{\{a,b\}}(R_N)=& v(N_a(R_N)\cup N_n(R_N))-v(N_a(R_N))-v(N_b(R_N))\\
        =& \sum_{\substack{\{i, j\} \subseteq N_a(R_N)\cup N_b(R_N) \\ (i,j)\in N^{(2)}}} \alpha_{ij}-\sum_{\substack{\{i, j\} \subseteq N_a(R_N) \\ (i,j)\in N^{(2)}}} \alpha_{ij}-\sum_{\substack{\{i, j\} \subseteq N_b(R_N) \\ (i,j)\in N^{(2)}}} \alpha_{ij}\hspace{10mm} (\text{by pair additivity of $v$})\\
        =& \sum_{\substack{(i,j)\in N^{(2)} \\ \tau(R_i)\cup \tau(R_j) =\{a,b\}}}\alpha_{ij}.
    \end{align*}
    As $\varphi_X(R_N)=0$ for all $|X|\geq 3$, this means for all $X\in \mathcal{A}$, $\varphi_X(R_N)=\sum_{\substack{(i,j)\in N^{(2)} \\ \tau(R_i)\cup \tau(R_j) =X}}\alpha_{ij}$. Thus, $\varphi$ is a random bi-dictatorial rule with respect to the parameters $\{\alpha_{ij}\}_{(i,j)\in N^{(2)}}$. 
\end{proof}

\begin{appendix}

\section{Existence of preferences used in Lemma  \ref{lem_1} and Lemma \ref{du_lem_3}}\label{appen_2}

\begin{lemma}\label{du_lem_pref_exis}
    For all distinct $x,y,z\in A$, the following preferences exist in the $\du$ domain.
    \begin{enumerate}[(i)]
    \item $R_1\equiv \{x\}\{x,y\}\{y\}\cdots$ 
        \item $R_2\equiv \{x\}\cdots \{x,y\}\{y\}$
        \item $R_3\equiv \{x\}\{x,y\}\cdots \{y\}$
        \item $R_4\equiv \{x\}\{x,y\}\cdots \{x,z\} \{z\}$
        %\item $R_5\equiv \{y\}\{x,y\}\{x\}\cdots \{y,z\} \{z\}$
        \item $R_5\equiv \{x\}\{x,y\}\{x,z\}\cdots $
        \item $R_6\equiv \{x\}\{x,y\}\cdots \{z\} \{y,z\} \{y\}$
       \item  $R_7\equiv \{x\}\{x,y\}\cdots \{y\} \{y,z\} \{z\}$
    \end{enumerate}
    
\end{lemma}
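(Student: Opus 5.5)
The plan is to construct each preference explicitly as a CEUC ordering by choosing a utility function $v$ and an assessment $\lambda$. Everything rests on one elementary fact about the conditional expected utility $v^{\lambda}(X)$: it is the $\lambda$-weighted average of $v$ over $X$. It therefore lies in $[\min_{X} v,\max_{X} v]$, with equality only for singletons. Moreover, for fixed utilities it can be pushed arbitrarily close to $v(u)$, for any $u\in X$, by making $\lambda(u)$ large relative to the weights of the other members of $X$. Since $\lambda$ only needs to be positive and sum to one, we may freely choose the ratios $\lambda(u)/\lambda(u')$ and then normalize. Two kinds of control follow:
\begin{itemize}
\item[--] \emph{Utility separation.} Giving an alternative a very low utility $-M$ pushes every set containing it below any prescribed level as $M\to\infty$, with the weights held fixed.
\item[--] \emph{Weight domination.} A tiny weight on an alternative makes it almost irrelevant in every set it joins; a huge weight makes it almost decide the value of every set containing it.
\end{itemize}

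Given this, each item is a short verification.

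For (i), take $v(x)=3$, $v(y)=2$ and $v(w)=-M$ for all other $w$, with uniform $\lambda$. Then $\{x,y\}$ lies strictly between $\{x\}$ and $\{y\}$, and every set containing some $w\notin\{x,y\}$ falls below $2$ for $M$ large.

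Items (iii) and (v) are obtained in the same way. For (iii), make $v(y)$ the lowest utility and give $\lambda(x)$ overwhelming weight relative to $\lambda(y)$, so that $\{x,y\}$ sits just below $\{x\}$. For (v), add a second-highest utility for $z$ and weights with $\lambda(x)\gg\lambda(y)\gg\lambda(z)$.

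For (ii), take $v(x)$ highest, $v(y)$ lowest, and $\lambda(x)$ tiny compared with $\lambda(w)$ for every $w\neq x,y$. A set $Y\ni y$ other than $\{y\}$ and $\{x,y\}$ contains some $w\neq x$. Then $v^{\lambda}(Y)-v(y)\geq c\,\lambda(w)(v(w)-v(y))$ for a constant $c>0$ independent of $\lambda(x)$, whereas $v^{\lambda}(\{x,y\})-v(y)\to 0$ as $\lambda(x)\to 0$. So $\{x,y\}$ is second worst. Item (iv) is the same construction with the roles of $y$ and $z$ at the bottom, combined with (i)-type weights on $x,y$ at the top.

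The main obstacle is (vi) and (vii), which prescribe both the top two and the bottom three positions simultaneously.

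For (vii), I would take $v(x)\gg$ all others, then $v(y)$ second lowest and $v(z)$ lowest, with weights $\lambda(x)$ huge, $\lambda(z)$ tiny, and $\lambda(y)$ small relative to every $\lambda(w)$ with $w\notin\{x,y,z\}$. The top is then fixed: $\{x,y\}$ exceeds all other non-singleton-$x$ sets because any set containing $x$ is near $v(x)$ only through weight, and a competitor $\{x,w\}$ must be kept below $\{x,y\}$. This requires $\lambda(y)$ smaller than $\lambda(w)$ relative to the gaps $v(x)-v(y)$ and $v(x)-v(w)$, which is compatible with the bottom requirements after an explicit two-parameter scaling. The bottom is fixed as follows: every set other than $\{y\},\{y,z\},\{z\}$ contains an alternative other than $y$ and $z$ with non-negligible weight, so its value exceeds $v(y)$, while $\{y,z\}$ lies strictly between $v(z)$ and $v(y)$.

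Item (vi) is analogous, with the bottom ordering reversed: $v(y)$ lowest, $z$ just above, and $\lambda(y)$ dominating $\lambda(z)$ so that $\{y,z\}$ lies just above $\{y\}$ and below $\{z\}$.

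I expect the only delicate work to be choosing the order of limits, first $M$ and then the weight ratios, so that the top and bottom constraints hold simultaneously. I would handle this by fixing utilities first and then taking the weight ratios to be successive powers $\varepsilon,\varepsilon^2,\dots$ with $\varepsilon\to0$, and checking finitely many strict inequalities.
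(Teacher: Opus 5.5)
Your framework of explicit $(v,\lambda)$ pairs, with $v^{\lambda}$ read as a $\lambda$-weighted average and weight ratios sent to $0$, is viable. Your (ii) and (iv) are essentially the paper's constructions: $\mu(x)=\tfrac1{10}$, all other weights $1$, $v(x)$ highest, and the bottom alternative lowest. There is a genuine gap, however. The weight prescriptions in (v) and (vi) produce the wrong ordering, and (iii) and (vii) leave out the condition that actually determines the second-ranked set. The missing observation is how two sets sharing the top alternative $x$ compare:
\[
v^{\lambda}(\{x,y\})>v^{\lambda}(\{x,u\})\iff\frac{\lambda(y)\bigl(v(x)-v(y)\bigr)}{\lambda(x)+\lambda(y)}<\frac{\lambda(u)\bigl(v(x)-v(u)\bigr)}{\lambda(x)+\lambda(u)}.
\]
Enlarging $\lambda(x)$ shrinks both sides at the same rate, so it never separates $\{x,y\}$ from its competitors; the partner weights do.

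In (v), fix the utilities and take $\lambda(x)\gg\lambda(y)\gg\lambda(z)$. The left side is then of order $\lambda(y)/\lambda(x)$, while the right side with $u=z$ is of order $\lambda(z)/\lambda(x)$. So you obtain $\{x\}\{x,z\}\{x,y\}\cdots$, whichever of $y,z$ has the higher utility.

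In (vi), the bottom $\{z\}\{y,z\}\{y\}$ forces $v(y)<v(z)$, hence $v(x)-v(y)>v(x)-v(z)$. The display with $u=z$ then forces $\lambda(y)<\lambda(z)$ in every representation of $R_6$. Taking $\lambda(y)$ to dominate $\lambda(z)$ therefore makes $R_6$ unreachable. Your stated reason for that choice is also vacuous: $\{y,z\}$ lies strictly between $\{y\}$ and $\{z\}$ for every $\lambda$.

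The same display shows what is missing in (iii). There $v(y)$ is lowest, so $\{x,y\}$ beats every $\{x,w\}$ only if $\lambda(y)<\lambda(w)$ for all $w\notin\{x,y\}$. ``$\lambda(x)$ overwhelming relative to $\lambda(y)$'' does not give this. If you realize it by enlarging $\lambda(x)$ with the other weights equal, every $\{x,w\}$ ends up above $\{x,y\}$.

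In (vii) you check the competitors $\{x,w\}$ with $w\notin\{x,y,z\}$ but never $\{x,z\}$. That is the main threat once $\lambda(z)$ is tiny: you need $\lambda(y)\bigl(v(x)-v(y)\bigr)$ to lie below roughly $\lambda(z)\bigl(v(x)-v(z)\bigr)$. You never impose this ordering of $\lambda(y)$ against $\lambda(z)$, and ``$\lambda(z)$ tiny, $\lambda(y)$ small'' suggests the reverse.

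The uniform repair is to make the partner of $x$ in the second-ranked set the lightest relevant alternative:
\begin{itemize}
\item[(iii)] let $\lambda(y)\to0$;
\item[(vi), (vii)] take $\lambda(y)\ll\lambda(z)\ll$ all remaining weights;
\item[(v)] take $\lambda(x)$ large, equal weights on $A\setminus\{x\}$, and $v(x)>v(y)>v(z)>$ all other utilities. Alternatively, use the paper's uniform weights with $v(x)=5$, $v(y)=2$, $v(z)=1$, and utilities $\tfrac rm$ for the remaining alternatives $b_r$.
\end{itemize}

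The paper avoids limits entirely. It uses one distinguished small weight: $\mu(x)=\tfrac1{10}$ in (ii) and (iv), and $\mu(y)=\tfrac14$ in (vi) and (vii). The other utilities are placed in bands far from the bottom ones, so the bottom is secured by utility separation rather than by a tiny $\lambda(z)$. Item (iii) is obtained from (ii) by swapping $x,y$ and negating $v$.

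Finally, in (i), setting $v(w)=-M$ for all $w\notin\{x,y\}$ violates Assumption~\ref{ass_1} when $m\ge4$; use distinct values such as $-M-r$.
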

\begin{proof} Fix distinct $x,y,z\in A$. Let $A\setminus\{x,y,z\}=\{b_1,\dots,b_{m-3}\}$. It is enough to choose positive real numbers $\mu(a)$ for each $a\in A$ and a utility function $v$.\footnote{Note that we can normalize $\mu(a)$ for each $a \in A$ and get hold of an assesment $\lambda$ by setting $\lambda(a):=\frac{\mu(a)}{\sum_{c\in A}\mu(c)}$.}

\begin{enumerate}[(i)]
    \item The existence of $R_1$ is straightforward, hence, we omit it. 

   \item  Choose $\mu(x)=\frac{1}{10}$ and  $\mu(a)=1$ for all $a \in A \setminus \{x\}$. Define the utility function $v$ such that
\begin{itemize}
    \item $v(x)=4$,
    \item $v(y)=0$,
    \item $v(z)=2$, and 
    \item $v(b_r)=1+\frac{r}{m}$ for all $r \in \{1,\ldots,m-3\}$.
\end{itemize}

Thus, $1<v(b_r)<2$ for all $r \in \{1,\ldots,m-3\}$. Now, we compute the following:
$$
v(x)=4,\qquad
v(\{x,y\})=\frac{\frac{1}{10}\cdot 4+0}{\frac{1}{10}+1}=\frac{4}{11},
\qquad
v(y)=0.
$$

First, we show that $v(x)>v(B)$ for all $B \in \mathcal{A} \setminus \{\{x\}\}$. Consider any $B \in \mathcal{A} \setminus \{\{x\}\}$. This implies that $B \cap \{y,z,b_1,b_2,\ldots,b_{m-3}\} \neq \emptyset$. Let $a_0 \in B \cap \{y,z,b_1,b_2,\ldots,b_{m-3}\}$. Since $v(a)\leq 4$ for all $a \in B$ and $v(a_0)<4$, it follows that $v(B)<4=v(x)$.

Next, we show that $v(B)>v(\{x,y\})>v(y)$ for all $B \in \mathcal{A} \setminus \{\{y\},\{x,y\}\}$. Since $v(\{x,y\})=\dfrac{4}{11}$ and $v(y)=0$, it follows that $v(\{x,y\})>v(y)$. Consider any $B \in \mathcal{A} \setminus \{\{y\},\{x,y\}\}$. We will show that $v(B)>v(\{x,y\})$. 

If $y\notin B$, then $v(a)>1$ for all $a \in B$. Therefore, $v(B)>1>\frac{4}{11}=v(\{x,y\})>0=v(y)$.

If $y\in B$ and $x\notin B$, set $S:=B\setminus\{y\}$.
Then $S\neq\varnothing$ and every $a\in S$ satisfies $v(a)\geq v(b_1)=1+\frac{1}{m}$, and $\mu(a)=1$ for all $a\in S$. Hence,
$$
v(B)\geq
\frac{|S|\left(1+\frac{1}{m}\right)}{1+|S|}
\geq
\frac{1+\frac{1}{m}}{2}
>
\frac12
>
\frac{4}{11}
=
v(\{x,y\}).
$$

Finally, suppose $\{x,y\}\subsetneq B$. Set $T:=B\setminus\{x,y\}$.
Then $T\neq\varnothing$ and every $a\in T$ satisfies $v(a)>\frac{4}{11}=v(\{x,y\})$. Therefore,
$$
v(B)
=
\frac{\bigl(\mu(x)+\mu(y)\bigr)v(\{x,y\})+\sum_{a\in T}\mu(a)v(a)}
{\mu(x)+\mu(y)+\sum_{a\in T}\mu(a)}
>
v(\{x,y\}).
$$

Hence, $v(B)>v(\{x,y\}$. This completes the proof. 
    
 %    If $y\in B$ and $x\notin B$, assume $|B|=k\geq 2$ and we have,
 %    \begin{align*}
 %        v(B)&=\frac{\sum_{w\in B} v(w)\lambda(w)}{\sum_{w\in B}\lambda(w)}\\
 %        &\geq \frac{m\sum_{w\in B\setminus y}\lambda(w)}{\sum_{w\in B}\lambda(w)}\\
 %        &= m - \frac{m\lambda(y)}{\sum_{w\in B}\lambda(w)}\\
 %          &\geq  m - \frac{m\lambda(y)}{(k-1)m \lambda(y)+\lambda(y)}\\
 %          &=  m - \frac{m}{(k-1)m +1}\\
 %           &\geq   m - \frac{m}{m +1}> m-1.\\
 %    \end{align*}
 % If $\{x,y\} \subseteq B$, we must have $|B|>2$. This gives us
 % \begin{align*}
 %        v(B)&=\frac{\sum_{w\in B} v(w)\lambda(w)}{\sum_{w\in B}\lambda(w)}\\
 %        &\geq \frac{m\sum_{w\in B\setminus y}\lambda(w)}{\sum_{w\in B}\lambda(w)}\\
 %        &= m - \frac{m\lambda(y)}{\sum_{w\in B}\lambda(w)}\\
 %          &\geq  m - \frac{m\lambda(y)}{(k-2)m \lambda(y)+\lambda(y)+\tfrac{\lambda(y)}{m}}\\
 %          &=  m - \frac{m}{(k-2)m +1+\tfrac{1}{m}}\\
 %           &\geq   m - \frac{m}{m +1}> m-1.\\
 %    \end{align*}
   
 \item Consider the $v$ and $\mu$ in (ii) and interchange the role of $x$ and $y$ to form $v_3$ and $\mu_3$. The pair $(v_3,\mu)$ will give us a preference $R\equiv \{y\}\cdots \{x,y\}\{x\}$. Now take $(-v_3,\mu_3)$ and consider the corresponding preference. Clearly, it will have the structure $R_3\equiv \{x\}\{x,y\}\cdots \{y\}$. This completes the proof.
 
%  We deduce this from part (ii). Choose $\mu(y)=\frac{1}{10}$ and  $\mu(a)=1$ for all $a \in A \setminus \{y\}$. Define the utility function $\tilde v$ such that
% \begin{itemize}
%     \item $\tilde v(y)=4$,
%     \item $\tilde v(x)=0$,
%     \item $\tilde v(z)=2$, and 
%     \item $\tilde v(b_r)=1+\frac{r}{m}$ for all $r \in \{1,\ldots,m-3\}$.
% \end{itemize}

% Then, by the proof of Part (ii), it follows that there exists a CEUC ordering $\tilde R\in\mathcal D_U$ such that
% $$
% \tilde R \equiv\{y\}\cdots \{x,y\}\{x\}.
% $$

% Define a new utility function $v$ such that $v(a):=- \tilde v(a)$ for all $a\in A$ and consider the same $\mu$. Then for every nonempty $B\subseteq A$,
% $$
% v(B)
% =
% \frac{\sum_{a\in B}\mu(a)v(a)}{\sum_{a\in B}\mu(a)}
% =
% \frac{\sum_{a\in B}\mu(a)(- \tilde v(a))}{\sum_{a\in B}\mu(a)}
% =
% - \tilde v(B).
% $$

% Hence, the CEUC ordering $R$ induced by $(v,\mu)$ \uk{ei ``induced by" define korte hobe} satisfies
% $$
% XR Y
% \iff v(X)\geq v(Y)
% \iff - \tilde v(X)\geq - \tilde v(Y)
% \iff \tilde v(X)\leq \tilde v(Y) \iff Y \tilde R X.
% $$
% Therefore, the ordering $R$ is exactly the reverse of $\tilde R$ \uk{reverse ta define korte hobe}. This, together with the fact that $\tilde R \equiv \{y\}\cdots \{x,y\}\{x\}$ implies $R \equiv \{x\}\{x,y\}\cdots \{y\}$.
% 

\item Choose $\mu(x)=\frac{1}{10}$ and $\mu(a)=1$ for all $a \in A \setminus  \{x\}$. Define the utility function $v$ in the following manner:
\begin{itemize}
    \item $v(x)=4$,
    \item $v(y)=3$,
    \item $v(z)=0$, and
    \item $v(b_r)=1+\frac{r}{m}$ for all $r \in \{1,\dots,m-3\}$.
\end{itemize}

Thus $1<v(b_r)<2$ for all $r \in \{1,\dots,m-3\}$. Now, we compute the following:
\[
v(x)=4,\qquad
v(\{x,y\})=\frac{\frac{1}{10}\cdot 4+3}{\frac{1}{10}+1}=\frac{34}{11},
\]
\[
v(\{x,z\})=\frac{\frac{1}{10}\cdot 4+0}{\frac{1}{10}+1}=\frac{4}{11},
\qquad
v(z)=0.
\]

First, we show that $v(x)>v(\{x,y\})>v(B)$ for all $B \in \mathcal{A} \setminus \{\{x\}, \{x,y\}\}$. As computed above, it follows that $v(x)>v(\{x,y\})$. Now, we proceed to show that $v(\{x,y\})>v(B)$ for all $B \in \mathcal{A} \setminus \{\{x\}, \{x,y\}\}$. Let $B \in \mathcal{A} \setminus \{\{x\}, \{x,y\}\}$. If $x\notin B$, then $v(\{a\})\leq 3$ for all $a \in B$. Hence, $v(B)\leq 3<\frac{34}{11}=v(\{x,y\})$.

Now, suppose $x\in B$ and $B\neq\{x\}$. Define $S:=B\setminus\{x\}$. Then $S\neq\varnothing$, and since $\mu(a)=1$ for all $a\in S$,
$
\mu(S):=\sum_{a\in S}\mu(a)=|S|\geq 1$.
Also $v(a)\leq 3$ for every $a\in S$. Therefore,
$v(B)\leq \frac{\frac{1}{10}\cdot 4+3\mu(S)}{\frac{1}{10}+\mu(S)}$.
Consider the function $f: [1, \infty) \to \mathbb{R}$ such that $f(u):=\frac{\frac{2}{5}+3u}{\frac{1}{10}+u}$ for all $u\geq 1$.
Then
\[
f'(u)
=
\frac{3\left(\frac{1}{10}+u\right)-\left(\frac{2}{5}+3u\right)}
{\left(\frac{1}{10}+u\right)^2}
=
-\frac{1}{10}\cdot \frac{1}{\left(\frac{1}{10}+u\right)^2}
<0.
\]
Hence, $f$ is strictly decreasing on $[1,\infty)$. Therefore, $v(B)\leq f(\mu(S))\leq f(1)=\frac{34}{11}=v(\{x,y\})$.
Equality holds only when $\mu(S)=1$ and every element of $S$ has utility $3$, i.e.\ only when $S=\{y\}$. Therefore $v(B)< v(\{x,y\})$.

Next, we show that $v(B)>v(\{x,z\})>v(\{z\})$ for all $B \in \mathcal{A} \setminus \{\{z\}, \{x,z\}\}$. Since $v(\{x,z\})=\dfrac{4}{11}$ and $v(z)=0$, it follows that $v(\{x,z\})>v(z)$. Now, we proceed to show that $v(B)>v(\{x,z\})$ for all $B \in \mathcal{A} \setminus \{\{z\}, \{x,z\}\}$. Consider any $B \in \mathcal{A} \setminus \{\{z\}, \{x,z\}\}$. If $z\notin B$, then $v(\{a\}>1$ for all $a \in B$. Therefore, $v(B)>1>\frac{4}{11}=v(\{x,z\})$.

If $z\in B$ and $x\notin B$, set $T:=B\setminus\{z\}$.
Then $T\neq\varnothing$, every $a\in T$ satisfies $v(a)\geq v(b_1)=1+\frac{1}{m}$, and $\mu(a)=1$ for all $a\in T$. Hence,
\[
v(B)\geq \frac{|T|\left(1+\frac{1}{m}\right)}{1+|T|}\geq
\frac{1+\frac{1}{m}}{2}>\frac12>\frac{4}{11}.
\]

Finally, suppose $\{x,z\}\subsetneq B$. Set $T:=B\setminus\{x,z\}$.
Then $T\neq\varnothing$, and every $a\in T$ satisfies $v(a)>\frac{4}{11}=v(\{x,z\})$.
Therefore,
\[
v(B)
=
\frac{\bigl(\mu(x)+\mu(z)\bigr)v(\{x,z\})+\sum_{a\in T}\mu(a)v(a)}
{\mu(x)+\mu(z)+\sum_{a\in T}\mu(a)}
>
v(\{x,z\}).
\]

Hence, $v(B)>v(\{x,z\})$. This completes the proof.

   \item Choose $\mu(a)=1$ for all $a\in A$. Define the utiliity function $v$ such that
\begin{itemize}
    \item $v(x)=5$,
    \item $v(y)=2$,
    \item $v(z)=1$, and
    \item $v(b_r)=\frac{r}{m}$ for all $r \in \{1,\dots,m-3\}$.
\end{itemize}

Thus, $0<v(b_r)<1$ for all $r \in \{1,\dots,m-3\}$. Now, we compute the following:
$$
v(x)=5,\qquad
v(\{x,y\})=\frac{5+2}{2}=\frac72,\qquad
v(\{x,z\})=\frac{5+1}{2}=3.
$$

Therefore, we have $v(x)>v(\{x,y\})>v(\{x,z\})$. We show that $v(\{x,z\})>v(B)$ for all $B \in \mathcal{A} \setminus \{\{x\},\{x,y\},\{x,z\}\}$. Consider any $B \in \mathcal{A} \setminus \{\{x\},\{x,y\},\{x,z\}\}$. If $x\notin B$, then $v(a)\leq 2$ for every $a \in B$. Therefore, $v(B)\leq 2<3=v(\{x,z\})$.

Now, suppose $x\in B$ and set $S:=B\setminus\{x\}$. Then, $S\neq \varnothing$ because $B \in \mathcal{A} \setminus \{\{x\},\{x,y\},\{x,z\}\}$. If $|S|=1$, then $S\in \{\{y\}, \{z\}, \{b_r\}\}$ for some $r \in \{1,\ldots,m-3\}$. Since $x \in B$ and $B \in \mathcal{A} \setminus \{\{x\},\{x,y\},\{x,z\}\}$, it must be the case that $S=\{b_r\}$. This, together with the fact that $0<v(b_r)<1$ implies that $v(B)= \dfrac{5+v(b_r)}{2}<3=v(\{x,z\})$.

If $|S|\geq 2$, then among the elements of $S$,
at most one has utility $2$ (namely $y$), at most one has utility $1$ (namely $z$), and all the remaining elements have utility $<1$. Therefore, $\sum_{a\in S}v(a)\leq 2+1+\bigl(|S|-2\bigr)\cdot 1=|S|+1$.
Hence,
\[
v(B)
=
\frac{5+\sum_{a\in S}v(a)}{1+|S|}
\leq
\frac{5+(|S|+1)}{1+|S|}
=
1+\frac{5}{|S|+1}.
\]
This, together with the fact that $|S|\geq 2$ implies $v(B)\leq 1+\frac{5}{|S|+1}\leq 1+\frac{5}{3}<3=v(\{x,z\})$. Therefore, $v(\{x,z\})>v(B)$ for all $B \in \mathcal{A} \setminus \{\{x\},\{x,y\},\{x,z\}\}$. This completes the proof.
    \item Choose $\mu(y)=\frac14$ and $\mu(a)=1$ for all $a \in A \setminus  \{y\}$. Define the utility function $v$ such that
\begin{itemize}
    \item $v(x)=10$,
    \item $v(y)=0$,
    \item $v(z)=1$, and 
    \item $v(b_r)=3+\frac{r}{m}$ for all $r \in \{1,\dots,m-3\}$.
\end{itemize}

Thus $3<v(b_r)<4$ for all $r \in \{1,\dots,m-3\}$. Next, we compute the following:
$$
v(x)=10,\qquad
v(\{x,y\})=\frac{10+\frac14\cdot 0}{1+\frac14}=8,
$$

$$
v(z)=1,\qquad
v(\{y,z\})=\frac{\frac14\cdot 0+1}{\frac14+1}=\frac45,
\qquad
v(y)=0.
$$

First, we show that $v(x)>v(\{x,y\})>v(B)$ for all $B \in \mathcal{A} \setminus \{\{x\},\{x,y\}\}$.  From the above computation, it follows that $v(x)>v(\{x,y\})$. Consider any $B \in \mathcal{A} \setminus \{\{x\},\{x,y\}\}$. We will show that $v(\{x,y\})>v(B)$. If $x\notin B$, then $v(a)\leq 4$ for every $a \in B$. Therefore, $v(B)\leq 4<8=v(\{x,y\})$. 

Now, suppose $x\in B$. Set $S:=X\setminus\{x\}$.
Then, it must be the case that $S \cap \{z,b_1,b_2,\ldots,b_{m-3}\}\neq\varnothing$ because $B \in \mathcal{A} \setminus \{\{x\},\{x,y\}\}$. Hence, $\mu(S):=\sum_{a\in S}\mu(a)\geq 1$. Moreover, $v(a)\leq 4$ for each $a \in S$. Therefore,
$v(B)\leq \frac{10+4\mu(S)}{1+\mu(S)}$. Consider the function $g: [1,\infty) \to \mathbb{R}$ defined by $g(u):=\frac{10+4u}{1+u}$ for all $u\geq 1$. Then,
$$
g'(u)=\frac{4(1+u)-(10+4u)}{(1+u)^2}
=
-\frac{6}{(1+u)^2}<0.
$$

Hence, $g$ is strictly decreasing on $[1,\infty)$, and thus
$
v(B)\leq g(\mu(S))\leq g(1)=7<8=v(\{x,y\})
$.
Therefore,$v(\{x,y\})>v(B)$.

Next, we show that $v(B)>v(z)>v(\{y,z\})>v(y)$ for all $B \in \mathcal{A} \setminus \{\{z\}, \{y,z\}, \{y\}\}$. We already know from our computation (before) that $v(z)=1> v(\{y,z\})=\frac45> v(y)=0$. Consider any $B \in \mathcal{A} \setminus \{\{z\}, \{y,z\}, \{y\}\}$. If $y\notin B$, then it must be the case that $B \cap \{x,b_1,b_2,\ldots,b_{m-3}\} \neq \emptyset$. This, together with the fact that $v(a)\geq 1$ for all $a \in B$ (where equality holds if and only if $a=z$) implies that $v(B)>1=v(\{z\})$.

Now, suppose $y\in B$. Set $S:=X\setminus\{y\}$. Since $B \in \mathcal{A} \setminus \{\{z\}, \{y,z\}, \{y\}\}$, it must be the case that $B \cap \{x,b_1,b_2,\ldots,b_{m-3}\} \neq \emptyset$. Let $a_0 \in B \cap \{x,b_1,b_2,\ldots,b_{m-3}\}$. Then, by the choice of $v$, it follows that $v(a_0)\geq 3+ \frac{1}{m}$. Also, $v(a)\geq 1$ for all $a \in S$. Therefore, it must be the case that
$$
\sum_{a\in S}v(a)\geq \sum_{a\in S \setminus \{a_0\}}v(a) + v(a_0) \geq \left(|S|-1\right) + \left(3+ \frac{1}{m}\right)= |S|+\left(2+\frac{1}{m}\right)
$$
Also, notice that $\mu(a)=1$ for all $a \in S$. These, together with the fact that $v(y)=0$ implies that
$$
v(B)
=
\frac{\sum_{a\in S}v(a)}{\frac14+|S|}
\geq
\frac{|S|+2+\frac{1}{m}}{\frac14+|S|}
>1=v(z).
$$
The strict inequality follows from the fact that  $2+\frac{1}{m}>\frac14$ for any natural number $m$. Therefore, $v(B)>v(z)$. This completes the proof.

     \item We choose the same $\mu$ and a similar $v$ (just interchange the values of $y$ and $z$) as in the previous part (vi). Choose $\mu(y)=\frac14$ and $\mu(a)=1$ for all $a \in A \setminus  \{y\}$. Define the utility function $v$ such that
\begin{itemize}
    \item $v(x)=10$,
    \item $v(y)=1$,
    \item $v(z)=0$, and 
    \item $v(b_r)=3+\frac{r}{m}$ for all $r \in \{1,\dots,m-3\}$.
\end{itemize}

Thus $3<v(b_r)<4$ for all $r \in \{1,\dots,m-3\}$. Next, we compute the following:
$$
v(x)=10,\qquad
v(\{x,y\})=\frac{10+\frac14\cdot 1}{1+\frac14}=\frac{41}{5},
$$

$$
v(y)=1,\qquad
v(\{y,z\})=\frac{\frac14\cdot 1+0}{\frac14+1}=\frac15,
\qquad
v(z)=0.
$$

Similar to Part (vi), first, we show that $v(x)>v(\{x,y\})>v(B)$ for all $B \in \mathcal{A} \setminus \{\{x\},\{x,y\}\}$.  From the above computation, it follows that $v(x)>v(\{x,y\})$. Consider any $B \in \mathcal{A} \setminus \{\{x\},\{x,y\}\}$. We will show that $v(\{x,y\})>v(B)$. If $x\notin B$, then $v(a)\leq 4$ for every $a \in B$. Therefore, $v(B)\leq 4<\dfrac{41}{5}=v(\{x,y\})$. 

Now, suppose $x\in B$. Set $S:=X\setminus\{x\}$.
Then, it must be the case that $S \cap \{z,b_1,b_2,\ldots,b_{m-3}\}\neq\varnothing$ because $B \in \mathcal{A} \setminus \{\{x\},\{x,y\}\}$. Hence, $\mu(S):=\sum_{a\in S}\mu(a)\geq 1$. Moreover, $v(a)\leq 4$ for each $a \in S$. Therefore,
$
v(B)\leq \frac{10+4\mu(S)}{1+\mu(S)}$. Since the function $g(u)=\frac{10+4u}{1+u}$ is strictly decreasing on $[1,\infty)$ as shown in the previous part (vi), it follows that
$
v(B)\leq g(\mu(S))\leq g(1)=7<\frac{41}{5}$. Therefore,$v(\{x,y\})>v(B)$.

Next, we show that $v(B)>v(y)>v(\{y,z\})>v(z)$ for all $B \in \mathcal{A} \setminus \{\{z\}, \{y,z\}, \{y\}\}$. We already know from our computation (before) that $v(y)=1> v(\{y,z\})=\frac15> v(z)=0$. Consider any $B \in \mathcal{A} \setminus \{\{z\}, \{y,z\}, \{y\}\}$.

If $z\notin B$, then it must be the case that $B \cap \{x,b_1,b_2,\ldots,b_{m-3}\} \neq \emptyset$. This, together with the fact that $v(a)\geq 1$ for all $a \in B$ (where equality holds if and only if $a=y$) implies that $v(B)>1=v(\{y\})$.

Now suppose $z\in B$. Set $S:=X\setminus\{z\}$. Since $B \in \mathcal{A} \setminus \{\{z\}, \{y,z\}, \{y\}\}$, it must be the case that $B \cap \{x,b_1,b_2,\ldots,b_{m-3}\} \neq \emptyset$. Let $a_0 \in B \cap \{x,b_1,b_2,\ldots,b_{m-3}\}$. Then, by the choice of $v$ and $\mu$, it follows that $v(a_0)\geq 3+ \frac{1}{m} \mbox{ and } \mu(a_0)=1$. Also, $v(a)\geq 1$ for all $a \in S$. Therefore, it must be the case that
$$
\sum_{a\in S}\mu(a)v(a)= \sum_{a\in S \setminus \{a_0\}}\mu(a)v(a) + \mu(a_0)v(a_0) \geq \sum_{a\in S \setminus \{a_0\}}\mu(a) + \left(3+ \frac{1}{m}\right)= \sum_{a\in S}\mu(a)+ \left(2+ \frac{1}{m}\right)
$$

Hence,
$$
v(B)
=
\frac{\sum_{a\in S}\mu(a)v(a)}{1+\sum_{a\in S}\mu(a)}
\geq
\frac{\sum_{a\in S}\mu(a)+2+\frac{1}{m}}{1+\sum_{a\in S}\mu(a)}
>1=v(y),
$$
The strict inequality follows from the fact that $2+\frac{1}{m}>1$ for any natural number $m$. Therefore, $v(B)>v(y)$. This completes the proof.
  \end{enumerate}
The proof of the lemma is now complete. 
\end{proof}

\section{Proof of Theorem \ref{du_n}}\label{ap_dictatorial}
\begin{proof}  Random dictatorial rules are unanimous and strategy-proof on any domain. So, we only prove the necessary part here. We prove the result by induction. The base case, when the number of agents is 2 has already been proved in Section \ref{sec:du}. Assume that the theorem holds for all sets with $k<n$ agents.  Let $N^* = {\{1,3,\ldots,n\}}$ and define the PSCC $g :
\du^{n-1} \to \Delta A$ for the set of agents in
$N^*$ as follows: For all $R_{N^*}=(R_1,R_{-\{1,2\}}) \in \du^{n- 1}$,
\begin{equation*}
g(R_1,R_{-\{1,2\}}) = \varphi(R_1,R_1,R_{-\{1,2\}}).
\end{equation*}
\begin{claim}\label{du_cl_1}
    $g$ is a random dictatorship.
\end{claim}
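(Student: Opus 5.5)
To prove Claim \ref{du_cl_1}, I will show that $g$ is unanimous and strategy-proof on $\du^{n-1}$ and then invoke the induction hypothesis, since $g$ is a PSCC for the $n-1$ agents in $N^*$. \emph{Unanimity} is immediate. If all agents in $N^*$ have the same top-ranked set $\{a\}$ at $(R_1,R_{-\{1,2\}})$, then at $(R_1,R_1,R_{-\{1,2\}})$ every agent in $N$ also has top $\{a\}$. Unanimity of $\varphi$ then gives $g_{\{a\}}(R_1,R_{-\{1,2\}})=1$.

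\emph{Strategy-proofness for agents $k\in N^*\setminus\{1\}$} is also immediate. A deviation by $k$ in $g$ is a deviation by $k$ in $\varphi$ at the profile where agents $1$ and $2$ both report $R_1$. The FOSD inequalities (\ref{eq_spf}) for $\varphi$ therefore transfer verbatim.

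The main step is \emph{strategy-proofness for agent $1$}. A deviation from $R_1$ to $R_1'$ in $g$ corresponds to agents $1$ and $2$ jointly deviating in $\varphi$, from $(R_1,R_1)$ to $(R_1',R_1')$. I will pass through the intermediate profile $(R_1',R_1,R_{-\{1,2\}})$. Fix $X\in\mathcal{A}$. Strategy-proofness of $\varphi$ for agent $1$ gives
\[\varphi_{U(X,R_1)}(R_1,R_1,R_{-\{1,2\}})\geq \varphi_{U(X,R_1)}(R_1',R_1,R_{-\{1,2\}}).\]
Strategy-proofness for agent $2$, whose true preference at the intermediate profile is $R_1$, gives
\[\varphi_{U(X,R_1)}(R_1',R_1,R_{-\{1,2\}})\geq \varphi_{U(X,R_1)}(R_1',R_1',R_{-\{1,2\}}).\]
Chaining these two inequalities yields $g_{U(X,R_1)}(R_1,R_{-\{1,2\}})\geq g_{U(X,R_1)}(R_1',R_{-\{1,2\}})$ for all $X$. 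This argument needs no tops-only property or special preferences. The only point to check is that in both inequalities the upper contour set is taken with respect to $R_1$, and it is, since $R_1$ is the true preference of the deviating agent in each step. I expect no real obstacle here.

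Having shown that $g$ is unanimous and strategy-proof on $\du^{n-1}$, and noting that $n-1\geq 2$ because we are in the induction step with $n\geq 3$, the induction hypothesis (Theorem \ref{du_n} for $n-1$ agents) implies that $g$ is random dictatorial. This proves the claim. The same three arguments apply unchanged to $g^{ij}$ on $\de$, which is why the proof of Claim \ref{cl_1} could refer back to this claim.
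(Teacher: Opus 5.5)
Your proposal is correct and follows the paper's proof essentially verbatim. Unanimity of $g$ is inherited from $\varphi$, and agents other than $1$ inherit strategy-proofness directly. A deviation by agent $1$ is handled by the same two-step chain through the intermediate profile $(R_1',R_1,R_{-\{1,2\}})$, applying strategy-proofness of $\varphi$ first for agent $1$ and then for agent $2$, with upper contour sets taken at $R_1$; the induction hypothesis then finishes the claim.
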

 
\textbf{Proof of the claim:} Note that $g$ inherits unanimity from $\varphi$. We show that $g$ also inherits strategy-proofness. It is easy to see that agents other than $1$ cannot manipulate $g$
since  $\varphi$ is strategy-proof. Let $(R_1,R_{-\{1,2\}}) \in \du^{n - 1}$ and $Q_1 \in \du$. For all $X \in \mathcal{A}$,
we have
\begin{align}
g_{U(X,R_1)}(R_1,R_{-\{1,2\}}) & =
\varphi_{U(X,R_1)}(R_1,R_1,R_{-\{1,2\}})  \nonumber \\
& \geq \varphi_{U(X,R_1)}(Q_1,R_1,R_{-\{1,2\}}) \nonumber \\
&\geq \varphi_{U(X,R_1)}(Q_1,Q_1,R_{-\{1,2\}}) \nonumber \\
&=g_{U(X,R_1)}(Q_1,R_{-\{1,2\}}), \nonumber
\end{align}
\noindent where the inequalities follow from the strategy-proofness of $\varphi$. The proof of Claim \ref{du_cl_1} is now complete by the induction hypothesis.\footnote{We have included the proof of Claim \ref{cl_1} for completeness. It can also be found in \cite{sen2011gibbard}.} \hfill $\square$

Let $\{\beta_i\}_{i\in N^*}$ be the coefficients associated with the random dictatorship $g$. For all profile $R_N\in \du^n$ and all $X\in \mathcal{A}$, let $\beta^X(R_N):=\sum_{\{i\in \{3,\ldots,n\}\mid \tau(P_i)=X\}}\beta_i$. This, together with the definition of $g$ and the fact that $g$ is a random dictatorship, implies for a profile $(R_1,R_1,R_{-\{1,2\}})\in \du^{n}$ with $\tau(R_1)=\{a\}$ for some $a\in A$ and $b\neq a$, 
\begin{align}\label{du_eq_1}
    & \varphi_{\{a\}}(R_1,R_1,R_{-\{1,2\}})=g_{\{a\}}(R_1,R_{-\{1,2\}})=\beta_1+\beta^{\{a\}}(R_1,R_1,R_{-\{1,2\}}), \text{ and } \nonumber\\ 
    & \varphi_{\{b\}}(R_1,R_1,R_{-\{1,2\}})=g_{\{b\}}(R_1,R_{-\{1,2\}})=\beta^{\{b\}}(R_1,R_1,R_{-\{1,2\}}).
\end{align}
   Two more observations that will be frequently used in the rest of the proof: (i) since $|\tau(R)|=1$ for all $R\in \du$, we have $\beta^X(R_N)=0$ for all $R_N\in \du^n$ and all $X \in \mathcal{A} $ with $|X|\geq 2$, and (ii) for any two profiles $R_N$ and $R_N'$ in $\du$ with $\tau(R_i)=\tau(R_i')$ for all $i\in \{3,\ldots,n\}$, we have $\beta^X(R_N)=\beta^X(R_N')$ for all $X\in \mathcal{A}$.
\begin{lemma}\label{du_lem_1}
    Let $a\in A$ and $R_N\in \du^n$ be such that $\tau(R_1)=\tau(R_2)=\{a\}$. Then
    \begin{enumerate}
        \item [(i)] $ \varphi_{\{a\}}(R_N)=\beta_{1}+\beta^{\{a\}}(R_N)$.
        \item [(ii)] $ \varphi_{X}(R_N)=\beta^{X}(R_N)$ for all $X\in \mathcal{A}\setminus\{a\}$
    \end{enumerate}
\end{lemma}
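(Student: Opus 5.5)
We need to prove Lemma \ref{du_lem_1}: with agents 1 and 2 both having top $\{a\}$, $\varphi$ coincides with the random dictatorship $g$. By (\ref{du_eq_1}), this already holds when $R_2=R_1$. So the task is to pass from $(R_1,R_1,R_{-\{1,2\}})$ to $(R_1,R_2,R_{-\{1,2\}})$, where $R_2$ is an arbitrary preference with the same top $\{a\}$. We then need to show that the lottery does not change.

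\textbf{Step 1: part (i).} Apply strategy-proofness to agent 2 in both directions, using the upper contour set of the top. Since $U(\{a\},R)=\{\{a\}\}$ for any $R$ with top $\{a\}$ (Remark \ref{rem_2}), we get
\[
\varphi_{\{a\}}(R_1,R_2,R_{-\{1,2\}})\ge \varphi_{\{a\}}(R_1,R_1,R_{-\{1,2\}})
\]
from truth-telling at $R_2$, and the reverse inequality from truth-telling at $R_1$. So $\varphi_{\{a\}}$ is equal at the two profiles, and by (\ref{du_eq_1}) it equals $\beta_1+\beta^{\{a\}}(R_N)$. Here we use observation (ii): $\beta^X$ depends only on the tops of agents $3,\dots,n$.

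\textbf{Step 2: part (ii).} The total mass on sets other than $\{a\}$ is $1-\beta_1-\beta^{\{a\}}(R_N)$. By the random dictatorship $g$ and observation (i), this mass equals $\sum_{b\ne a}\beta^{\{b\}}(R_N)$. It remains to show that it is split the same way, namely only on the singletons $\{b\}$ with the weights $\beta^{\{b\}}$.

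I would argue as follows. For each $b\neq a$, bound $\varphi_{\{b\}}(R_N)$ from below by $\beta^{\{b\}}(R_N)$. Summing these bounds then forces equality everywhere and zero mass on all other sets. For the lower bound, fix $b$ and replace agent 1's preference by a preference with top $\{a\}$ in which $\{b\}$ sits at the very bottom, e.g. $\breve{R}^{ab}$ or a preference from Lemma \ref{du_lem_pref_exis}. Symmetrically, choose agent 2's preference (in Step 1 style) so that the relevant lower contour set is controlled.

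Strategy-proofness via strict upper contour sets ($U^-(\{b\},R)=\mathcal{A}\setminus\{\{b\}\}$ when $\{b\}$ is bottom) then gives the following. Changing agent 2 from a preference $Q$ with $\{b\}$ at the bottom to $R_2$ cannot increase the mass on $\mathcal{A}\setminus\{\{b\}\}$. Hence $\varphi_{\{b\}}(R_1,R_2,\cdot)\ge\varphi_{\{b\}}(R_1,Q,\cdot)$. We chain this through profiles where both agents hold a bottom-$\{b\}$ preference with top $\{a\}$.

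The remaining difficulty is that (\ref{du_eq_1}) is only known at diagonal profiles $(Q,Q,\cdot)$. Each step of the chain changes one agent, and its direction must be right: the deviation toward the bottom-$\{b\}$ preference weakly lowers $\{b\}$'s mass. That is the wrong direction for a lower bound.

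\textbf{Main obstacle and fallback.} I expect this to be the main obstacle. The fix I would try first is to go instead through preferences where $\{b\}$ is ranked \emph{second-best among relevant sets}, or to use $\tilde R^{ab}$-type preferences together with the base-case technique of Claims \ref{claim dictator 1}--\ref{claim dictator 3}. That is, show that at profiles where agents 1 and 2 share top $\{a\}$, no set outside $\{\{c\}:c\in\tau(R_{N})\}$ gets mass. This would mimic Lemma \ref{lem_1} with the other agents fixed.

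Once mass is confined to singletons, a second step follows. Each $\varphi_{\{b\}}$ with $b\ne a$ can be compared with the diagonal profile by applying strategy-proofness with $\{b\}$ placed just below $\{a\}$. In that case $U(\{b\},\cdot)\cap$ singletons $=\{\{a\},\{b\}\}$, and the equality of the $\{a\}$-masses pins down the $\{b\}$-mass. Working one alternative at a time, ordered by these preferences, yields $\varphi_{\{b\}}(R_N)=\beta^{\{b\}}(R_N)$ for all $b$. This completes (ii).
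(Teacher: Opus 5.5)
Your Step 1 is exactly the paper's proof of (i). The plan you announce at the start of Step 2 is also exactly the paper's plan for (ii): bound each $\varphi_{\{b\}}(R_N)$, $b\neq a$, from below by $\beta^{\{b\}}(R_N)$, then sum. The gap is self-inflicted. You write down the chain that proves the lower bound, discard it as going in the wrong direction, and fall back on a sketch that is never carried out.

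The objection is mistaken. Fix $b\neq a$ and $Q\in\du$ with $\tau(Q)=\{a\}$ and $\{b\}$ ranked last (e.g.\ $\breve{R}^{ab}$). Since $U^{-}(\{b\},Q)=\mathcal{A}\setminus\{\{b\}\}$, strategy-proofness for an agent whose true preference is $Q$ says that replacing any report by $Q$ can only lower the probability of $\{b\}$. Doing this first for agent 2 and then for agent 1 gives
\[
\varphi_{\{b\}}(R_1,R_2,R_{-\{1,2\}})\;\geq\;\varphi_{\{b\}}(R_1,Q,R_{-\{1,2\}})\;\geq\;\varphi_{\{b\}}(Q,Q,R_{-\{1,2\}})\;=\;\beta^{\{b\}}(R_N).
\]
The equality is (\ref{du_eq_1}), since $\tau(Q)\neq\{b\}$, together with the fact that $\beta^{\{b\}}$ depends only on the tops of agents $3,\dots,n$. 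The chain starts at the profile you care about and ends at a diagonal profile where the value is known. So the fact that each move towards $Q$ weakly lowers the mass on $\{b\}$ is precisely what gives a \emph{lower} bound at $R_N$.

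With this bound, your own summation finishes the proof. By (i), the mass off $\{a\}$ at $R_N$ is $1-\beta_1-\beta^{\{a\}}(R_N)=\sum_{b\neq a}\beta^{\{b\}}(R_N)$. Hence each inequality must hold with equality, and every non-singleton set gets probability zero. This is the paper's argument, phrased there by contradiction, with $\bar R_1\equiv\{a\}\cdots\{b\}$ and agent 1 moved first.

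The fallback is unnecessary, and as written it is not an argument. Confining the mass to singletons at such profiles is essentially the content of (ii) itself. Moreover, the base-case claims behind Lemma~\ref{lem_1} start from the degenerate outcome that unanimity forces at $(\bar R^{xy},\bar R^{xy})$. With agents $3,\dots,n$ present, the diagonal outcome is instead the generally nondegenerate lottery of $g$, so those claims do not transfer verbatim. As submitted, (ii) is not established, even though the correct proof is already in your Step 2.
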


\begin{proof}
     Let $R_N\in \du^n$ be a profile with $\tau(R_1)=\tau(R_2)=\{a\}$. Strategy-proofness of $\varphi$ implies, $\varphi_{\{a\}}(R_N)=\varphi_{\{a\}}(R_1,R_1,R_{-\{1,2\}})$. Moreover, by (\ref{du_eq_1}), $\varphi_{\{a\}}(R_1,R_1,R_{-\{1,2\}})=\beta_{1}+\beta^{\{a\}}(R_1,R_1,R_{-\{1,2\}})$. Now as $\beta^{\{a\}}(R_1,R_1,R_{-\{1,2\}})=\beta^{\{a\}}(R_N)$, we have $ \varphi_{\{a\}}(R_N)=\beta_{1}+\beta^{\{a\}}(R_N)$, completing the proof of (i).

     Next, we show (ii), i.e., for all $X\in \mathcal{A}\setminus \{a\}$, $ \varphi_{X}(R_N)=\beta^{X}(R_N)$. Let $Y\in \mathcal{A}\setminus \{a\}$ be such that $\varphi_Y(R_N)<\beta^Y(R_N)$. Since $\beta^Y(R_N)=0$ for all $Y$ with $|Y|\geq 2$, $Y$ must be a singleton set. Let $Y=\{b\}$ for some $b\neq a$ and let  $\bar{R}_1\equiv \{a\}\cdots \{b\}\in \du$.  By strategy-proofness of $\varphi$, $\varphi_{\{b\}}(R_N)\geq \varphi_{\{b\}}(\bar{R}_1,R_2,R_{-\{1,2\}})\geq \varphi_{\{b\}}(\bar{R}_1,\bar{R}_1,R_{-\{1,2\}})$. This, together with $\beta^{\{b\}}(R_N)>\varphi_{\{b\}}(R_N)$, yields $\beta^{\{b\}}(R_N)>\varphi_{\{b\}}(\bar{R}_1,\bar{R}_1,R_{-\{1,2\}})$. However, this is a contradiction as $\tau(\bar{R}_1)\neq\{b\}$ and thus, by (\ref{du_eq_1}), $$\varphi_{\{b\}}(\bar{R}_1,\bar{R}_1,R_{-\{1,2\}})=\beta^{\{b\}}(\bar{R}_1,\bar{R}_1,R_{-\{1,2\}})=\beta^{\{b\}}(R_N).$$
Hence, $\varphi_{X}(R_N)\geq \beta^{X}(R_N)$ for all $X\in \mathcal{A}\setminus \{a\}$. Now, as $\beta^{W}(R_N)=\beta^{W}(R_1,R_1,R_{-\{1,2\}})$ for all $W\in \mathcal{A}$ and $\varphi_Z(R_1,R_1,R_{-\{1,2\}})=\beta^{Z}(R_1,R_1,R_{-\{1,2\}})$ for all  $Z\in \mathcal{A}\setminus \{a\}$, we have $\varphi_{X}(R_N)\geq \varphi_{X}(R_1,R_1,R_{-\{1,2\}})$ for all $X\in \mathcal{A}\setminus \{a\}$. This, together with $\varphi_{\{a\}}(R_N)=\varphi_{\{a\}}(R_1,R_1,R_{-\{1,2\}})$ (shown in (i)), implies that $\varphi_{X}(R_N)= \beta^{X}(R_N)$ for all $X\in \mathcal{A}\setminus \{a\}$. This completes the proof of (ii).
\end{proof}
   
\begin{lemma}\label{du_lem_2}
 Let $a,b\in A$ be distinct and consider three preferences $\bar{R}_1\equiv \{a\}\{a,b\}\cdots$, $\bar{R}_2\equiv \{b\}\{a,b\}\cdots$, and $\hat{R}_2\equiv \{b\}\cdots\{a\}$ in $\du$. Then, for all $X\in \{\{a\},\{b\},\{a,b\}\}$ and all $R_{-\{1,2\}}\in \du^{n-2}$, $$\varphi_X(\bar{R}_1,\bar{R}_2,R_{-\{1,2\}})=\varphi_X(\bar{R}_1,\hat{R}_2,R_{-\{1,2\}}).$$
\end{lemma}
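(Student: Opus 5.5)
The plan is to reduce the lemma to a single \emph{mass-conservation} fact. Throughout, fix $R_{-\{1,2\}}$ and write $\bar{R}_N=(\bar{R}_1,\bar{R}_2,R_{-\{1,2\}})$ and $\hat{R}_N=(\bar{R}_1,\hat{R}_2,R_{-\{1,2\}})$. The fact needed is
\[
\varphi_{\{\{a\},\{a,b\},\{b\}\}}(\bar{R}_N)=\varphi_{\{\{a\},\{a,b\},\{b\}\}}(\hat{R}_N).
\]
Given this, strategy-proofness for agent $2$ does the rest, as follows.

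First, $\{b\}$ is the top of both $\bar{R}_2$ and $\hat{R}_2$, and $\{\{b\}\}=U(\{b\},\bar{R}_2)=U(\{b\},\hat{R}_2)$. Applying (\ref{eq_spf}) in both directions gives $\varphi_{\{b\}}(\bar{R}_N)=\varphi_{\{b\}}(\hat{R}_N)$. Together with mass conservation this yields
\[
\varphi_{\{\{a\},\{a,b\}\}}(\bar{R}_N)=\varphi_{\{\{a\},\{a,b\}\}}(\hat{R}_N).
\]
Second, $\{\{b\},\{a,b\}\}$ is an upper contour set of $\bar{R}_2$. Strategy-proofness of agent $2$ at $\bar{R}_N$ against $\hat{R}_2$ gives $\varphi_{\{a,b\}}(\bar{R}_N)\ge\varphi_{\{a,b\}}(\hat{R}_N)$. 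Third, $\{a\}$ is the bottom of $\hat{R}_2$, so $\mathcal{A}\setminus\{\{a\}\}$ is an upper contour set of $\hat{R}_2$. Strategy-proofness at $\hat{R}_N$ against $\bar{R}_2$ gives $\varphi_{\{a\}}(\hat{R}_N)\le\varphi_{\{a\}}(\bar{R}_N)$. Since the two sums above are equal, the second inequality forces $\varphi_{\{a,b\}}(\hat{R}_N)\ge\varphi_{\{a,b\}}(\bar{R}_N)$. Combined with the first inequality, this gives equality on $\{a,b\}$, hence on $\{a\}$, completing the lemma.

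The main obstacle is mass conservation. My plan is to show that every $X\notin\{\{a\},\{b\},\{a,b\}\}$ receives exactly $\beta^X(R_N)$ at both $\bar{R}_N$ and $\hat{R}_N$ (with $\beta^X$ as defined before Lemma \ref{du_lem_1}). Since $\beta^X$ depends only on the tops of agents $3,\dots,n$, the right-hand side is the same for both profiles.

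\emph{Lower bound.} I would mimic the proof of Lemma \ref{du_lem_1}(ii). Suppose $\varphi_{\{c\}}(R_N)<\beta^{\{c\}}(R_N)$ for some singleton $\{c\}$ with $c\notin\{a,b\}$. Move agent $1$, and then agent $2$, to a preference that puts $\{c\}$ at the bottom while keeping the top $\{a\}$ for agent $1$; the preferences of Lemma \ref{du_lem_pref_exis} supply such orderings. Each move weakly lowers the probability of $\{c\}$ by strategy-proofness. Then use Lemma \ref{du_lem_1} at the resulting profile, where both agents have a common top, to get a contradiction. For non-singleton $X$ we have $\beta^X=0$, so the lower bound $\varphi_X\ge\beta^X$ is trivial.

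\emph{Upper bound.} It suffices to show
\[
\varphi_{\{\{a\},\{a,b\},\{b\}\}}(R_N)\ge\beta_1+\beta^{\{a\}}(R_N)+\beta^{\{b\}}(R_N)
\]
at each of the two profiles, since the total mass is $1$. For this I would deviate agent $2$ to $\bar{R}_1$, which preserves the upper contour $\{\{a\},\{a,b\}\}$-structure. At $(\bar{R}_1,\bar{R}_1,R_{-\{1,2\}})$, Lemma \ref{du_lem_1} gives exactly this mass on the three sets. I would then argue, via agent $2$'s strategy-proofness with the upper contour $\{\{b\},\{a,b\}\}$, and with agent $1$ moved symmetrically to a $b$-top preference ranking $\{a,b\}$ second, that no mass can leak out of these three sets. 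I expect this leakage step is where the specific shapes $\bar{R}_1,\bar{R}_2,\hat{R}_2$ and Claims-type arguments like those in Lemma \ref{lem_1} must be used carefully.
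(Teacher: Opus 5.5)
Your endgame (agent~2's strategy-proofness at the shared top $\{b\}$ of $\bar R_2$ and $\hat R_2$, at the upper contour set $\{\{b\},\{a,b\}\}$ of $\bar R_2$, and at the bottom $\{a\}$ of $\hat R_2$) is correct and is exactly how the paper finishes. Your lower bound $\varphi_{\{c\}}\ge\beta^{\{c\}}$ also goes through, by sending both agents to an $a$-top preference with $\{c\}$ last.

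\textbf{The gap.} The gap is the upper bound, which is where all the content of mass conservation lies, and your sketch does not establish it. The three sets $\{\{a\},\{a,b\},\{b\}\}$ need not form an upper contour set of $\bar R_1$, $\bar R_2$ or $\hat R_2$, since $\bar R_1$ and $\bar R_2$ fix only their top two sets. So no deviation available at these profiles holds the three-set mass fixed.

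Sending agent~2 to $\bar R_1$ preserves nothing for her: $\{\{a\},\{a,b\}\}$ is an upper contour set of $\bar R_1$, not of her true preference. The comparisons you describe give $\varphi_{\{\{b\},\{a,b\}\}}\ge\beta^{\{b\}}$ and only an \emph{upper} bound $\varphi_{\{\{a\},\{a,b\}\}}\le\beta_1+\beta^{\{a\}}$. Agent~1 deviating to a $b$-top report adds $\varphi_{\{a\}}\ge\beta^{\{a\}}$. Altogether this yields only $\varphi_{\{\{a\},\{a,b\},\{b\}\}}\ge\beta^{\{a\}}+\beta^{\{b\}}$, so the $\beta_1$ share is still free to leak to sets such as $\{a,c\}$. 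Moving agent~1 from $\bar R_1$ to a $b$-top preference that merely ranks $\{a,b\}$ second has the same defect.

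\textbf{The missing idea.} First replace $\bar R_1$ by $\tilde R_1\equiv\{a\}\{a,b\}\{b\}\cdots$. Because $\bar R_1$ and $\tilde R_1$ share the upper contour sets $\{\{a\}\}$ and $\{\{a\},\{a,b\}\}$, agent~1's strategy-proofness gives $\varphi_X(\bar R_1,R_2,R_{-\{1,2\}})=\varphi_X(\tilde R_1,R_2,R_{-\{1,2\}})$ for $X\in\{\{a\},\{a,b\}\}$ and every $R_2$. So for these two sets it suffices to prove the lemma at $\tilde R_1$; you already handle $\{b\}$ through agent~2.

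At $\tilde R_1$ the three sets form the top-three upper contour set, and $\tilde R_2\equiv\{b\}\{a,b\}\{a\}\cdots$ has the same one. Hence, for every $R_2$ with top $\{b\}$, Lemma~\ref{du_lem_1} gives
\[
\varphi_{\{\{a\},\{a,b\},\{b\}\}}(\tilde R_1,R_2,R_{-\{1,2\}})=\varphi_{\{\{a\},\{a,b\},\{b\}\}}(\tilde R_2,R_2,R_{-\{1,2\}})=\beta_1+\beta^{\{a\}}+\beta^{\{b\}}.
\]
This is mass conservation between $(\tilde R_1,\bar R_2,R_{-\{1,2\}})$ and $(\tilde R_1,\hat R_2,R_{-\{1,2\}})$ directly, with no need to control sets outside the three. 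Your endgame then runs verbatim with $\tilde R_1$ in place of $\bar R_1$, and this is the paper's proof. Mass conservation at $\bar R_1$ itself, as you stated it, is also true and can be recovered by running the same swap for agent~2 via the auxiliary report $\tilde R_2$, but it is more than the lemma needs.
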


\begin{proof}
Fix $R_{-\{1,2\}}\in \du^{n-2}$. For $X=\{b\}$, note that, as $\tau(\bar{R}_2)=\tau(\hat{R}_2)=\{b\}$, by strategy-proofness of $\varphi$, we have $\varphi_{\{b\}}(\bar{R}_1,\bar{R}_2,R_{-\{1,2\}})=\varphi_{\{b\}}(\bar{R}_1,\hat{R}_2,R_{-\{1,2\}})$. Thus, to complete the proof of the lemma, it remains to show that
\begin{equation}\label{du_eq_1.5}
    \varphi_X(\bar{R}_1,\bar{R}_2,R_{-\{1,2\}})=\varphi_X(\bar{R}_1,\hat{R}_2,R_{-\{1,2\}}) \text{ for all } X\in \{\{a\},\{a,b\}\}.
\end{equation}
Consider a preference $\tilde{R}_1\equiv \{a\}\{a,b\}\{b\}\cdots\in \du$. As $\bar{R}_1\equiv \{a\}\{a,b\}\cdots$, in view of strategy-proofness of $\varphi$, showing (\ref{du_eq_1.5}) is equivalent to showing 
\begin{equation}\label{du_eq_1.7}
    \varphi_X(\tilde{R}_1,\bar{R}_2,R_{-\{1,2\}})=\varphi_X(\tilde{R}_1,\hat{R}_2,R_{-\{1,2\}}) \text{ for all } X\in \{\{a\},\{a,b\}\}.
\end{equation}
Let $\tilde{R}_2\equiv \{b\}\{a,b\}\{a\}\cdots\in \du$. By strategy-proofness of $\varphi$, for all $R_2\equiv \{b\}\cdots$, $$\varphi_{\{\{a\},\{a,b\},\{b\}\}}(\tilde{R}_1,R_2,R_{-\{1,2\}})=\varphi_{\{\{a\},\{a,b\},\{b\}\}}(\tilde{R}_2,R_2,R_{-\{1,2\}}).$$
Moreover, as $\tau(\tilde{R}_2)=\tau(R_2)=\{b\}$, Lemma \ref{du_lem_1} gives us  $$\varphi_{\{\{a\},\{a,b\},\{b\}\}}(\tilde{R}_2,R_2,R_{-\{1,2\}})=\beta_1+\beta^{\{a\}}(\tilde{R}_2,R_2,R_{-\{1,2\}})+\beta^{\{b\}}(\tilde{R}_2,R_2,R_{-\{1,2\}}).$$ 
As $\beta^X(\tilde{R}_2,R_2,R_{-\{1,2\}})=\beta^X(R_2,R_2,R_{-\{1,2\}})$ for all $X\in \mathcal{A}$, combining the above two observations, we have for all $R_2\equiv \{b\}\cdots$,
\begin{equation}\label{du_eq_2}
    \varphi_{\{\{a\},\{a,b\},\{b\}\}}(\tilde{R}_1,R_2,R_{-\{1,2\}})=\beta_1+\beta^{\{a\}}(R_2,R_2,R_{-\{1,2\}})+\beta^{\{b\}}(R_2,R_2,R_{-\{1,2\}}).
\end{equation}
Further, as $\tau(\bar{R}_2)=\tau(\hat{R}_2)=\{b\}$, considering $\bar{R}_2$ and $\hat{R}_2$ in place of $R_2$ in (\ref{du_eq_2}), we get 
\begin{align}\label{du_eq_3}
     & \varphi_{\{\{a\},\{a,b\},\{b\}\}}(\tilde{R}_1,\bar{R}_2,R_{-\{1,2\}})=\beta_1+\beta^{\{a\}}(\bar{R}_2,\bar{R}_2,R_{-\{1,2\}})+\beta^{\{b\}}(\bar{R}_2,\bar{R}_2,R_{-\{1,2\}}) \text{ and } \nonumber\\
     & \varphi_{\{\{a\},\{a,b\},\{b\}\}}(\tilde{R}_1,\hat{R}_2,R_{-\{1,2\}})=\beta_1+\beta^{\{a\}}(\hat{R}_2,\hat{R}_2,R_{-\{1,2\}})+\beta^{\{b\}}(\hat{R}_2,\hat{R}_2,R_{-\{1,2\}}).
\end{align}
   However, as $\beta^X(\bar{R}_2,\bar{R}_2,R_{-\{1,2\}})=\beta^X(\hat{R}_2,\hat{R}_2,R_{-\{1,2\}})$ for all $X\in \mathcal{A}$, (\ref{du_eq_3}) yields $$\varphi_{\{\{a\},\{a,b\},\{b\}\}}(\tilde{R}_1,\bar{R}_2,R_{-\{1,2\}})=\varphi_{\{\{a\},\{a,b\},\{b\}\}}(\tilde{R}_1,\hat{R}_2,R_{-\{1,2\}}).$$
   As, by strategy-proofness, $\varphi_{\{b\}}(\tilde{R}_1,\bar{R}_2,R_{-\{1,2\}})=\varphi_{\{b\}}(\tilde{R}_1,\hat{R}_2,R_{-\{1,2\}})$, the above equation reduces to 
   \begin{equation}\label{du_eq_4}
       \varphi_{\{\{a\},\{a,b\}\}}(\tilde{R}_1,\bar{R}_2,R_{-\{1,2\}})=\varphi_{\{\{a\},\{a,b\}\}}(\tilde{R}_1,\hat{R}_2,R_{-\{1,2\}}).
   \end{equation}
We claim that $\varphi_{\{a,b\}}(\tilde{R}_1,\bar{R}_2,R_{-\{1,2\}})=\varphi_{\{a,b\}}(\tilde{R}_1,\hat{R}_2,R_{-\{1,2\}})$. Suppose not. Then, as $\bar{R}_2\equiv \{b\}\{a,b\}\ldots$, it must be that $\varphi_{\{a,b\}}(\tilde{R}_1,\bar{R}_2,R_{-\{1,2\}})>\varphi_{\{a,b\}}(\tilde{R}_1,\hat{R}_2,R_{-\{1,2\}})$. This, together with (\ref{du_eq_4}), yields $\varphi_{\{a\}}(\tilde{R}_1,\bar{R}_2,R_{-\{1,2\}})<\varphi_{\{a\}}(\tilde{R}_1,\hat{R}_2,R_{-\{1,2\}})$. However, as $\hat{R}_2\equiv \{b\}\cdots\{a,b\}\{a\}$, this means agent 2 manipulates at $(\tilde{R}_1,\hat{R}_2,R_{-\{1,2\}})$ via $\bar{R}_2$, a contradiction. Thus, $\varphi_{\{a,b\}}(\tilde{R}_1,\bar{R}_2,R_{-\{1,2\}})=\varphi_{\{a,b\}}(\tilde{R}_1,\hat{R}_2,R_{-\{1,2\}})$ and hence, by (\ref{du_eq_4}), $\varphi_{\{a\}}(\tilde{R}_1,\bar{R}_2,R_{-\{1,2\}})=\varphi_{\{a\}}(\tilde{R}_1,\hat{R}_2,R_{-\{1,2\}})$. Therefore, (\ref{du_eq_1.7}) holds. \end{proof}

\begin{lemma}\label{du_lem_3}
 Let $a,b\in A$ be distinct and consider two preferences $R_1=\{a\}\{a,b\}\{b\}\cdots$, $R_2=\{b\}\{a,b\}\{a\}\cdots$ in $\du$. Then,  for all $R_{-\{1,2\}}\in \du^{n-2}$, $\varphi_{\{a,b\}}(R_1,R_2,R_{-\{1,2\}})=0$.
\end{lemma}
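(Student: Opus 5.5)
The plan is to follow the base-case strategy (Claims \ref{claim dictator 1}--\ref{claim dictator 3}), using Lemmas \ref{du_lem_1} and \ref{du_lem_2} as the induction-step substitutes for Claims \ref{claim dictator 1} and \ref{claim dictator 2}. Fix $R_{-\{1,2\}}$ and write $\beta^X:=\beta^X(R_N)$. This quantity depends only on the tops of agents $3,\ldots,n$, so it is the same at every profile used below.

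\textbf{Step 1 (total mass on the three sets).} From (\ref{du_eq_2}) and the subsequent argument in the proof of Lemma \ref{du_lem_2}, for every $R_2'$ with top $\{b\}$,
\[\varphi_{\{\{a\},\{a,b\},\{b\}\}}(R_1,R_2',R_{-\{1,2\}})=\beta_1+\beta^{\{a\}}+\beta^{\{b\}}.\]
Here $R_1$ plays the role of $\tilde R_1$ and $R_2$ that of $\bar R_2$ in Lemma \ref{du_lem_2}. That lemma also gives $\varphi_{\{a,b\}}(R_1,R_2,\cdot)=\varphi_{\{a,b\}}(R_1,\hat R_2,\cdot)$ for any $\hat R_2\equiv\{b\}\cdots\{a\}$. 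I would choose $\hat R_2\equiv\{b\}\cdots\{a,b\}\{a\}$, which exists by Lemma \ref{du_lem_pref_exis}(ii) with the roles of $x$ and $y$ swapped.

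\textbf{Step 2 (upper bound from agent 2).} Let agent 2 deviate from $\hat R_2$ to some $R_2''$ with top $\{a\}$. By Lemma \ref{du_lem_1}, at $(R_1,R_2'',\cdot)$ the outcome is $\{a\}$ with probability $\beta_1+\beta^{\{a\}}$, zero on $\{a,b\}$, and $\beta^X$ on every other $X$. In $\hat R_2$ the strict upper contour set of $\{a,b\}$ is $\mathcal A\setminus\{\{a,b\},\{a\}\}$. Strategy-proofness of agent 2 at $\hat R_2$, applied to this set, yields
\[\varphi_{\{\{a\},\{a,b\}\}}(R_1,\hat R_2,\cdot)\le\beta_1+\beta^{\{a\}}.\]
Combining this with Step 1 gives $\varphi_{\{b\}}(R_1,\hat R_2,\cdot)\ge\beta^{\{b\}}$.

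\textbf{Step 3 (matching lower bound on $\varphi_{\{a\}}$; the main obstacle).} To force $\varphi_{\{a,b\}}=0$ I need either $\varphi_{\{a\}}(R_1,\hat R_2,\cdot)\ge\beta_1+\beta^{\{a\}}$ or $\varphi_{\{b\}}\le\beta^{\{b\}}$. Plain agent-1 deviations do not provide either bound. I would therefore imitate Claim \ref{claim dictator 3}, which moves agent 2 to a preference with a third top $z$, of the form $\hat R^{zab}\equiv\{z\}\{z,a\}\cdots\{a,b\}\{a\}$ (via Lemma \ref{du_lem_pref_exis}(vi)/(vii)). At $(R_1,\hat R^{zab},\cdot)$ I would first show that $\{a,b\}$ and $\{b\}$ receive mass only through $\beta^{\{b\}}$. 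To do this I compare with $(\hat R^{zab},\hat R^{zab},\cdot)$ using Lemma \ref{du_lem_1}, invoking strategy-proofness of agent 1 once with $R_1$ and once with a preference $\{a\}\cdots\{a,b\}\{b\}$. I would then transport the zero on $\{a,b\}$ back to top $b$ through the bottom-ranked segment $\{a,b\}\{a\}$ shared by $\hat R^{zab}$ and $\hat R_2$: strategy-proofness between these two reports fixes the mass on the lower contour sets $\{\{a\}\}$ and $\{\{a\},\{a,b\}\}$.

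Once $\varphi_{\{a,b\}}(R_1,\hat R_2,\cdot)=0$ is established, Lemma \ref{du_lem_2} transfers it to $(R_1,R_2,\cdot)$. I expect the main difficulty to be choosing the intermediate preferences so that the relevant upper contour sets line up exactly; the existence of those preferences is guaranteed by Lemma \ref{du_lem_pref_exis}.
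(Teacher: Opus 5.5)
Your endgame is sound and matches the paper's last two steps. A zero on $\{a,b\}$ with agent 2 at a $z$-topped report ending in $\{b\}\{a,b\}\{a\}$ is carried through the shared bottom $\{a,b\}\{a\}$ to $\hat R_2$, and Lemma \ref{du_lem_2} finishes. The gap is the step that is supposed to produce that zero at $(R_1,\hat R^{zab},\cdot)$ with $R_1=\{a\}\{a,b\}\{b\}\cdots$.

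Moving agent 1 from $\hat R^{zab}$, or from $\{a\}\cdots\{a,b\}\{b\}$, to $R_1$ lifts $\{a,b\}$ and $\{b\}$ in her ranking. Strategy-proofness in either direction therefore yields only lower bounds at $(R_1,\hat R^{zab},\cdot)$, e.g.\ $\varphi_{\{\{a\},\{a,b\}\}}(R_1,\hat R^{zab},\cdot)\ge\beta^{\{a\}}$. The one upper bound you can extract, $\varphi_{\{\{a,b\},\{b\}\}}\le\beta^{\{b\}}$, holds when agent 1 reports $\{a\}\cdots\{a,b\}\{b\}$, and it cannot be pushed up to $R_1$.

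Agent 2 does not rescue this while agent 1 stays at $R_1$. A zero is easy to get for the $z$-topped report ending in $\{a\}\{a,b\}\{b\}$: it shares the bottom $\{a,b\}\{b\}$ with an $a$-topped report, and Lemma \ref{du_lem_1} applies. But switching to the report ending in $\{b\}\{a,b\}\{a\}$ pins down only the total on $\{\{a\},\{a,b\},\{b\}\}$, not $\varphi_{\{a\}}$. Without $\varphi_{\{a\}}$ fixed, the zero on $\{a,b\}$ does not carry over.

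The missing idea is to take agent 1 off $R_1$ during this phase.
\begin{itemize}
\item The paper first places agent 1 at $\{a\}\{a,z\}\{a,b\}\cdots$. Then Lemma \ref{du_lem_2}, applied to the pair $(a,z)$, fixes $\varphi_{\{a\}}$ when agent 2 switches from $\{z\}\{a,z\}\cdots\{a\}\{a,b\}\{b\}$ to $\{z\}\{a,z\}\cdots\{b\}\{a,b\}\{a\}$.
\item The zero on $\{a,b\}$ survives this switch. Otherwise agent 2, truthfully at the first report, would gain by reporting the second, which moves mass off her bottom set $\{b\}$.
\item Next it moves agent 1 to $\{a\}\{a,b\}\cdots\{a,z\}\{z\}$. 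Lemma \ref{du_lem_2} for $(a,z)$, with the agents' roles swapped, fixes $\varphi_{\{a\}}$ and $\varphi_{\{a,z\}}$. Strategy-proofness for the upper contour set $\{\{a\},\{a,z\},\{a,b\}\}$ of $\{a\}\{a,z\}\{a,b\}\cdots$ then forces $\varphi_{\{a,b\}}=0$ at the new profile.
\item This report still begins with $\{a\}\{a,b\}$, so your transport and Lemma \ref{du_lem_2} then reach $(R_1,R_2,\cdot)$.
\end{itemize}

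Separately, the target you set at the start of Step 3 is false in general. In the random dictatorship eventually obtained, $\beta_1=\epsilon_1+\epsilon_2$, while $\varphi_{\{a\}}(R_1,\hat R_2,\cdot)=\epsilon_1+\beta^{\{a\}}$ and $\varphi_{\{b\}}(R_1,\hat R_2,\cdot)=\epsilon_2+\beta^{\{b\}}$. So both alternatives fail whenever $\epsilon_2>0$, and the second would not force $\varphi_{\{a,b\}}=0$ even if it held. Steps 1--2 are correct but become unnecessary once the zero is transported.
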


\begin{proof}
In this proof, we utilize several preference orderings within the $\du$ domain. The existence of these preferences is established in Appendix \ref{appen_2}. To facilitate the reader's understanding, we first provide a consolidated list of these preferences below for ease of reference throughout the proof. We write them in the order they are used in the proof. 
\begin{itemize}
    \item $R_1'\equiv \{a\}\{a,x\}\{a,b\}\cdots$,
    \item $R_2'\equiv \{a\}\cdots\{a,b\}\{b\}$,
    \item $R_2''\equiv \{x\}\{a,x\}\cdots\{a\}\{a,b\}\{b\}$,
    \item $\tilde{R}_2\equiv \{x\}\{a,x\}\cdots\{b\}\{a,b\}\{a\}$,
    \item $\bar{R}_1\equiv \{a\}\{a,b\}\cdots\{a,x\}\{x\}$, and 
    \item $\hat{R}_2\equiv \{b\}\cdots\{a,b\}\{a\}$.
    \end{itemize}
Fix $R_{-\{1,2\}}\in \du^{n-2}$. Let $x \in A\setminus \{a,b\}$. Note that such a choice is feasible as $|A|\geq 3$. Note that as $\tau(R_1')=\tau(R_2')=\{a\}$, by part (ii) of Lemma \ref{du_lem_1}, $\varphi_{\{a,b\}}(R_1',R_2',R_{-\{1,2\}})=\beta^{\{a,b\}}(R_1',R_2',R_{-\{1,2\}})=0$. Further, as $R_2'\equiv \{a\}\cdots\{a,b\}\{b\}$ and $R_2''\equiv \{x\}\{a,x\}\cdots\{a\}\{a,b\}\{b\}$, by strategy-proofness of $\varphi$ and $\varphi_{\{a,b\}}(R_1',R_2',R_{-\{1,2\}})=0$, we have $\varphi_{\{a,b\}}(R_1',R_2'',R_{-\{1,2\}})=0$.   Recall $\tilde{R}_2\equiv \{x\}\{a,x\}\cdots\{b\}\{a,b\}\{a\}$, and we claim $\varphi_{\{a,b\}}({R}_1',{R}''_2,R_{-\{1,2\}})\geq \varphi_{\{a,b\}}({R}_1',\tilde{R}_2,R_{-\{1,2\}})$, which implies $\varphi_{\{a,b\}}({R}_1',\tilde{R}_2,R_{-\{1,2\}})=0$. To see this, first note that by the strategy-proofness of $\varphi$,
    \begin{equation}\label{du_eq_5}
        \varphi_{\{\{a\},\{a,b\},\{b\}\}}({R}_1',{R}''_2,R_{-\{1,2\}})= \varphi_{\{\{a\},\{a,b\},\{b\}\}}({R}_1',\tilde{R}_2,R_{-\{1,2\}}).
    \end{equation}
    Moreover, as $R_1'\equiv  \{a\}\{a,x\}\{a,b\}\cdots$, $R_2''\equiv \{x\}\{a,x\}\cdots\{a\}\{a,b\}\{b\}$, and $\tilde{R}_2=\{x\}\{a,x\}\cdots\{b\}\{a,b\}\{a\}$, by Lemma \ref{du_lem_2}, $\varphi_{\{a\}}({R}_1',{R}''_2,R_{-\{1,2\}})= \varphi_{\{a\}}({R}_1',\tilde{R}_2,R_{-\{1,2\}})$. Combining this with (\ref{du_eq_5}), we have
    $$\varphi_{\{\{a,b\},\{b\}\}}({R}_1',{R}''_2,R_{-\{1,2\}})= \varphi_{\{\{a,b\},\{b\}\}}({R}_1',\tilde{R}_2,R_{-\{1,2\}}).$$
    Hence, if $\varphi_{\{a,b\}}({R}_1',{R}''_2,R_{-\{1,2\}})< \varphi_{\{a,b\}}({R}_1',\tilde{R}_2,R_{-\{1,2\}})$, it must be that  $\varphi_{\{b\}}({R}_1',{R}''_2,R_{-\{1,2\}})> \varphi_{\{b\}}({R}_1',\tilde{R}_2,R_{-\{1,2\}})$. But this leads to a manipulation of $\varphi$ by agent 2 at $({R}_1',{R}''_2,R_{-\{1,2\}})$ via $\tilde{R}_2$, as $R''_2\equiv\{x\}\{a,x\}\cdots\{a\}\{a,b\}\{b\}$, a contradiction. Therefore, we have
    \begin{equation}\label{du_eq_6}
        \varphi_{\{a,b\}}({R}_1',\tilde{R}_2,R_{-\{1,2\}})=0.
    \end{equation}
    
    Now, recall $\bar{R}_1\equiv \{a\}\{a,b\}\cdots\{a,x\}\{x\}$. As $R_1'\equiv  \{a\}\{a,x\}\{a,b\}\cdots$ and $\tilde{R}_2=\{x\}\{a,x\}\cdots\{b\}\{a,b\}\{a\}$, by Lemma \ref{du_lem_2}, $$\varphi_X({R}_1',\tilde{R}_2,R_{-\{1,2\}})= \varphi_{X}(\bar{R}_1,\tilde{R}_2,R_{-\{1,2\}}) \text{ for all }X\in \{\{a\},\{a,x\}\}.$$
   This, together with the strategy-proofness of $\varphi$ and $R_1'\equiv  \{a\}\{a,x\}\{a,b\}\cdots$, implies $\varphi_{\{a,b\}}({R}_1',\tilde{R}_2,R_{-\{1,2\}})\geq \varphi_{\{a,b\}}(\bar{R}_1,\tilde{R}_2,R_{-\{1,2\}})$. However, as $\varphi_{\{a,b\}}({R}_1',\tilde{R}_2,R_{-\{1,2\}})=0$ (shown in (\ref{du_eq_6})), we have $\varphi_{\{a,b\}}(\bar{R}_1,\tilde{R}_2,R_{-\{1,2\}})=0$.  Finally, we use $\hat{R}_2\equiv \{b\}\cdots\{a,b\}\{a\}$. As $\tilde{R}_2=\{x\}\{a,x\}\cdots\{b\}\{a,b\}\{a\}$, by strategy-proofness of $\varphi$, we have $\varphi_{\{a,b\}}(\bar{R}_1,\tilde{R}_2,R_{-\{1,2\}})= \varphi_{\{a,b\}}(\bar{R}_1,\hat{R}_2,R_{-\{1,2\}})$, which in turn implies, $\varphi_{\{a,b\}}(\bar{R}_1,\hat{R}_2,R_{-\{1,2\}})=0$.

    We now complete the proof of the lemma. Recall the two preferences stated in the statement of the lemma, $R_1\equiv \{a\}\{a,b\}\ldots$ and $R_2\equiv \{b\}\{a,b\}\ldots$. As $\bar{R}_1\equiv \{a\}\{a,b\}\cdots\{a,x\}\{x\}$ and $\hat{R}_2\equiv \{b\}\cdots\{a,b\}\{a\}$, by Lemma \ref{du_lem_2}, $\varphi_{\{a,b\}}(\bar{R}_1,\hat{R}_2,R_{-\{1,2\}})=\varphi_{\{a,b\}}(R_1,R_2,R_{-\{1,2\}})$. However, we have already shown the previous paragraph that $\varphi_{\{a,b\}}(\bar{R}_1,\hat{R}_2,R_{-\{1,2\}})=0$. Hence, $\varphi_{\{a,b\}}(R_1,R_2,R_{-\{1,2\}})=0$. \end{proof}

\begin{lemma}\label{du_lem_4}
 Let $a,b\in A$ be distinct and consider $\bar{R}_1=\{a\}\{a,b\}\{b\}\cdots$, $\bar{R}_2=\{b\}\{a,b\}\{a\}\cdots$ in $\du$. Then, for all $R_N\in \du^n$ with $R_1\equiv \{a\}\cdots$ and $R_2\equiv \{b\}\cdots$, we have
 \begin{enumerate}[(i)]
     \item $\varphi_X(R_N)=\varphi_X(\bar{R}_1,\bar{R}_2,R_{-\{1,2\}})$ for all $X\in \{\{a\},\{b\},\{a,b\}\}$. 
     \item $\varphi_{\{\{a\},\{b\}\}}(R_N)=\beta_{1}+\beta^{\{a\}}(R_N)+\beta^{\{b\}}(R_N)$.
 \end{enumerate} 
\end{lemma}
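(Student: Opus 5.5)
Part (i) is a statement about the block $\{\{a\},\{a,b\},\{b\}\}$ only, and I would prove it by changing agents 1 and 2 one at a time from $(\bar R_1,\bar R_2)$ to $(R_1,R_2)$.

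\emph{Agent 2's change.} The relevant comparison is between $(\bar R_1,\bar R_2,R_{-\{1,2\}})$ and $(\bar R_1,R_2,R_{-\{1,2\}})$. Lemma \ref{du_lem_2} only handles a preference of the form $\{b\}\cdots\{a\}$, so I route through $\hat R_2\equiv\{b\}\cdots\{a,b\}\{a\}$, whose existence is given by Lemma \ref{du_lem_pref_exis}. Lemma \ref{du_lem_2} then gives equality on all three sets between $\bar R_2$ and $\hat R_2$.

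For a general $R_2$ with top $\{b\}$, I would use the identity obtained in the proof of Lemma \ref{du_lem_2}, namely (\ref{du_eq_2}), together with (\ref{du_eq_4}) and its symmetric counterpart. The total mass on $\{\{a\},\{a,b\},\{b\}\}$ at $(\bar R_1,R_2,R_{-\{1,2\}})$ equals $\beta_1+\beta^{\{a\}}+\beta^{\{b\}}$, independently of $R_2$. By strategy-proofness with identical tops, $\varphi_{\{b\}}$ is also unchanged. So it remains to pin down how the residual mass splits between $\{a\}$ and $\{a,b\}$.

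\emph{Ruling out $\{a,b\}$.} By Lemma \ref{du_lem_3}, $\varphi_{\{a,b\}}(\bar R_1,\bar R_2,R_{-\{1,2\}})=0$. Strategy-proofness for agent 2 at $\bar R_2\equiv\{b\}\{a,b\}\cdots$ against a deviation to $R_2$ gives $\varphi_{\{\{b\},\{a,b\}\}}$ at $\bar R_2$ at least its value at $R_2$. Since $\varphi_{\{b\}}$ agrees at the two profiles, this yields $\varphi_{\{a,b\}}(\bar R_1,R_2,\cdot)\le 0$. Hence $\{a,b\}$ receives zero at $R_2$ as well, and $\{a\}$ is determined by the total.

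\emph{Agent 1's change.} I repeat the same argument, swapping the roles of the two agents, to move from $\bar R_1$ to $R_1$. The analogue of (\ref{du_eq_2}) for agent 1 should come from Lemma \ref{du_lem_1} applied with both agents 1 and 2 reporting top $\{a\}$. Strategy-proofness for agent 1 between $\bar R_1$ and $\tilde R_1\equiv\{a\}\cdots$ keeps $\varphi_{\{a\}}$ fixed and sends $\{a,b\}$ to $0$ via $\bar R_1\equiv\{a\}\{a,b\}\cdots$. This proves (i).

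For (ii), the computation above gives a total of $\beta_1+\beta^{\{a\}}+\beta^{\{b\}}$ with $\{a,b\}$ carrying zero, so (ii) follows immediately. The $\beta$ terms do not change, since they depend only on the tops of agents $3,\dots,n$.

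The main obstacle is that (\ref{du_eq_2}) was derived for agent 1 fixed at $\tilde R_1\equiv\{a\}\{a,b\}\{b\}\cdots$, i.e.\ $\bar R_1$ itself. For general $R_1$ in the second step, $\{a,b\}$ and $\{b\}$ need not be adjacent to the top in $R_1$, so the total mass on the block is no longer directly preserved. I would handle this by first establishing (i) and (ii) at the profile $(\bar R_1,R_2)$. I would then change agent 1 using only the two one-sided strategy-proofness inequalities: the upper contour set $\{\{a\}\}$, and the upper contour set $\{\{a\},\{a,b\}\}$ of $\bar R_1$. These, together with the fact that $\{a,b\}$ is already at zero, should force the required equalities. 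A reverse-direction deviation to a preference $\{a\}\cdots\{a,b\}\{b\}$ supplies the opposite inequality if needed.
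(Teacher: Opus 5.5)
There is a genuine gap, and it is in your second move (agent 1 from $\bar R_1$ to $R_1$).

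Your first move is correct, and slightly slicker than the paper's corresponding step. With agent 1 at $\bar R_1$, (\ref{du_eq_2}) fixes the block total at $\beta_1+\beta^{\{a\}}+\beta^{\{b\}}$ for every $R_2$ with top $\{b\}$. $\varphi_{\{b\}}$ is fixed because agent 2's top is fixed. The upper contour set $\{\{b\},\{a,b\}\}$ of $\bar R_2$, together with Lemma \ref{du_lem_3}, sends $\{a,b\}$ to zero. (The detour through $\hat R_2$ is never used.)

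In the second move, agent 2 sits at a general $R_2$; below I suppress $R_{-\{1,2\}}$. Strategy-proofness for agent 1 gives $\varphi_{\{a\}}(R_1,R_2)=\varphi_{\{a\}}(\bar R_1,R_2)$ and $\varphi_{\{a,b\}}(R_1,R_2)=0$. For $\{b\}$, however, it gives only $\varphi_{\{b\}}(R_1,R_2)\le\varphi_{\{b\}}(\bar R_1,R_2)$. The block total at $(R_1,R_2)$ is unknown. Nothing you have established rules out mass moving from $\{b\}$ to a set outside the block that $R_1$ ranks above $\{b\}$ and $\bar R_1$ ranks below it. So the two one-sided inequalities you cite do not force the required equalities.

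Your fallback does not close this either. Let $\check R_1\equiv\{a\}\cdots\{a,b\}\{b\}$. Strategy-proofness at $\check R_1$ gives $\varphi_{\{b\}}(\check R_1,R_2)\le\varphi_{\{b\}}(R_1,R_2)$. This helps only if you also know $\varphi_{\{b\}}(\check R_1,R_2)\ge\varphi_{\{b\}}(\bar R_1,R_2)$. Strategy-proofness between $\bar R_1$ and $\check R_1$ at fixed $R_2$ yields only $\varphi_{\{b\}}(\check R_1,R_2)\le\varphi_{\{b\}}(\bar R_1,R_2)$, whichever of the two is the true preference. Lemma \ref{du_lem_2} links $\bar R_1$ and $\check R_1$ only when agent 2 reports a preference of the form $\{b\}\{a,b\}\cdots$, not at an arbitrary $R_2$.

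The missing idea is to move the non-deviating agent instead. Agent 2's top is $\{b\}$ in both $R_2$ and $\bar R_2$, so strategy-proofness gives $\varphi_{\{b\}}(R_1,R_2)=\varphi_{\{b\}}(R_1,\bar R_2)$ and $\varphi_{\{b\}}(\bar R_1,R_2)=\varphi_{\{b\}}(\bar R_1,\bar R_2)$. It therefore suffices to compare $R_1$ and $\bar R_1$ with agent 2 at $\bar R_2$.

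There, the mirror image of (\ref{du_eq_2}) applies: agent 2 switches from $\bar R_2$ to a preference $\{a\}\{a,b\}\{b\}\cdots$, which preserves the block, and then Lemma \ref{du_lem_1} applies. This fixes the block total at $\beta_1+\beta^{\{a\}}+\beta^{\{b\}}$ for every $R_1$ with top $\{a\}$. You also have $\varphi_{\{a\}}$ unchanged, and $\varphi_{\{a,b\}}(R_1,\bar R_2)\le\varphi_{\{a,b\}}(\bar R_1,\bar R_2)=0$. Together these give $\varphi_{\{b\}}(R_1,\bar R_2)=\varphi_{\{b\}}(\bar R_1,\bar R_2)$, and the chain closes. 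Part (ii) then follows as you say.

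This is the device the paper uses. It first moves agent 1 while agent 2 sits at $\bar R_2$, giving (\ref{du_eq_7}). When it then moves agent 2 with agent 1 at a general $R_1$, it gets the reverse inequality for the non-mover's singleton $\{a\}$ by switching agent 1 to a convenient preference with the same top and invoking Lemma \ref{du_lem_2}, as in (\ref{du_eq_9}).
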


\begin{proof}
Fix $R_N\in \du^n$ with $R_1\equiv \{a\}\cdots$ and $R_2\equiv \{b\}\cdots$. We start with showing (i) of the lemma in two parts. In the first part, we show that
\begin{equation}\label{du_eq_7}
    \varphi_X(\bar{R}_1,\bar{R}_2,R_{-\{1,2\}})=\varphi_X(R_1,\bar{R}_2,R_{-\{1,2\}}) \text{ for all } X\in \{\{a\},\{b\},\{a,b\}\},
\end{equation}
and then in the second part we further show that
\begin{equation}\label{du_eq_8}
    \varphi_X(R_1,\bar{R}_2,R_{-\{1,2\}})=\varphi_X(R_1,R_2,R_{-\{1,2\}}) \text{ for all } X\in \{\{a\},\{b\},\{a,b\}\}.
\end{equation}
Note that (\ref{du_eq_7}) and (\ref{du_eq_8}) together imply (i) of the lemma. To see (\ref{du_eq_7}), as $R_1\equiv \{a\}\ldots$ and $\bar{R}_1\equiv \{a\}\{a,b\}\{b\}\cdots$, applying strategy-proofness of $\varphi$, we have
    \begin{align*}
        &\varphi_{\{a\}}(\bar{R}_1,\bar{R}_2,R_{-\{1,2\}})=\varphi_{\{a\}}(R_1,\bar{R}_2,R_{-\{1,2\}}), \\
        &\varphi_{\{a,b\}}(\bar{R}_1,\bar{R}_2,R_{-\{1,2\}})\geq \varphi_{\{a,b\}}(R_1,\bar{R}_2,R_{-\{1,2\}}) \mbox{ and }\\
       & \varphi_{\{\{a,b\},\{b\}\}}(\bar{R}_1,\bar{R}_2,R_{-\{1,2\}})\geq \varphi_{\{\{a,b\},\{b\}\}}(R_1,\bar{R}_2,R_{-\{1,2\}}).
    \end{align*} 
    Since by Lemma \ref{du_lem_3}, $\varphi_{\{a,b\}}(\bar{R}_1,\bar{R}_2,R_{-\{1,2\}})=0$, we have $\varphi_{\{a,b\}}(R_1,\bar{R}_2,R_{-\{1,2\}})=0$. Hence,  $\varphi_{\{b\}}(\bar{R}_1,\bar{R}_2,R_{-\{1,2\}})\geq \varphi_{\{b\}}(R_1,\bar{R}_2,R_{-\{1,2\}})$. We prove they are equal. Suppose not and $\varphi_{\{b\}}(\bar{R}_1,\bar{R}_2,R_{-\{1,2\}})> \varphi_{\{b\}}(R_1,\bar{R}_2,R_{-\{1,2\}})$. Consider a preference $\hat{R}_1\equiv \{a\}\ldots\{b\}$. By Lemma \ref{du_lem_2}, $\varphi_{\{b\}}(\bar{R}_1,\bar{R}_2,R_{-\{1,2\}})= \varphi_{\{b\}}(\hat{R}_1,\bar{R}_2,R_{-\{1,2\}})$. This means $\varphi_{\{b\}}(\hat{R}_1,\bar{R}_2,R_{-\{1,2\}})> \varphi_{\{b\}}(R_1,\bar{R}_2,R_{-\{1,2\}})$. However, as $\hat{R}_1\equiv \{a\}\ldots\{b\}$, this leads to a manipulation of $\varphi$ at $(\hat{R}_1,\bar{R}_2,R_{-\{1,2\}})$ via $R_1$, a contradiction. So, $\varphi_{\{b\}}(\bar{R}_1,\bar{R}_2,R_{-\{1,2\}})= \varphi_{\{b\}}(R_1,\bar{R}_2,R_{-\{1,2\}})$. This completes the verification of (\ref{du_eq_7}).

    Now, consider (\ref{du_eq_8}). As $R_2\equiv \{b\}\ldots$ and $\bar{R}_2\equiv \{b\}\{a,b\}\{a\}\cdots$, applying strategy-proofness of $\varphi$, we have 
    \begin{align*}
        &\varphi_{\{b\}}(R_1,\bar{R}_2,R_{-\{1,2\}})=\varphi_{\{b\}}(R_1,R_2,R_{-\{1,2\}}), \\
        &\varphi_{\{a,b\}}(R_1,\bar{R}_2,R_{-\{1,2\}})\geq \varphi_{\{a,b\}}(R_1,R_2,R_{-\{1,2\}}) \mbox{ and }\\
       & \varphi_{\{\{a,b\},\{a\}\}}(R_1,\bar{R}_2,R_{-\{1,2\}})\geq \varphi_{\{\{a,b\},\{a\}\}}(R_1,R_2,R_{-\{1,2\}}).
    \end{align*} 
     Since by Lemma \ref{du_lem_3}, $\varphi_{\{a,b\}}(\bar{R}_1,\bar{R}_2,R_{-\{1,2\}})=0$, and by (\ref{du_eq_7}), $\varphi_{\{a,b\}}(\bar{R}_1,\bar{R}_2,R_{-\{1,2\}})=\varphi_{\{a,b\}}(R_1,\bar{R}_2,R_{-\{1,2\}})$, we have $\varphi_{\{a,b\}}(R_1,\bar{R}_2,R_{-\{1,2\}})=0$. Hence, $\varphi_{\{a,b\}}(R_1,R_2,R_{-\{1,2\}})=0$ and  $\varphi_{\{a\}}(R_1,\bar{R}_2,R_{-\{1,2\}})\geq \varphi_{\{a\}}(R_1,R_2,R_{-\{1,2\}})$. We show that they are equal. Suppose not and $\varphi_{\{a\}}(R_1,\bar{R}_2,R_{-\{1,2\}})> \varphi_{\{a\}}(R_1,R_2,R_{-\{1,2\}})$. Consider two preferences $\hat{R}_1\equiv \{a\}\cdots\{b\}$ and $\hat{R}_2\equiv \{b\}\{a,b\}\cdots\{a\}$ in $\du$. Then 
      \begin{align} \label{du_eq_9}
          &\varphi_{\{a\}}(R_1,\bar{R}_2,R_{-\{1,2\}})> \varphi_{\{a\}}(R_1,R_2,R_{-\{1,2\}}) \nonumber\\
        \implies  & \varphi_{\{a\}}(\hat{R}_1,\bar{R}_2,R_{-\{1,2\}})> \varphi_{\{a\}}(\hat{R}_1,R_2,R_{-\{1,2\}}) \hspace{5mm} (\text{by strategy-proofness of } \varphi) \nonumber\\
       \implies   & \varphi_{\{a\}}(\hat{R}_1,\hat{R}_2,R_{-\{1,2\}})> \varphi_{\{a\}}(\hat{R}_1,R_2,R_{-\{1,2\}}) \hspace{5mm} (\text{as by Lemma }\ref{du_lem_2}, \varphi_{\{a\}}(\hat{R}_1,\bar{R}_2,R_{-\{1,2\}})=\varphi_{\{a\}}(\hat{R}_1,\hat{R}_2,R_{-\{1,2\}})).
      \end{align}
However, (\ref{du_eq_9}) is a contradiction to the strategy-proofness of $\varphi$ as $\hat{R}_2\equiv \{b\}\{a,b\}\cdots\{a\}$. Thus, (\ref{du_eq_8}) holds and the proof of (i) of the lemma is complete.

We now show (ii) of the lemma. Note that as $\varphi_{\{a,b\}}(R_N)=0$, we have $\varphi_{\{\{a\},\{b\}\}}(R_N)=\varphi_{\{\{a\},\{a,b\},\{b\}\}}(R_N)$. This, together with (i) of the lemma, implies $\varphi_{\{\{a\},\{b\}\}}(R_N)=\varphi_{\{\{a\},\{a,b\},\{b\}\}}(\bar{R}_1,\bar{R}_2,R_{-\{1,2\}})$. Moreover, as $\bar{R}_1\equiv \{a\}\{a,b\}\{b\}\cdots$ and   $\bar{R}_2\equiv \{b\}\{a,b\}\{a\}\cdots$, by strategy-proofness of $\varphi$ and the previous equality, we have
\begin{equation}\label{du_eq_10}
    \varphi_{\{\{a\},\{b\}\}}(R_N)=\varphi_{\{\{a\},\{a,b\},\{b\}\}}(\bar{R}_1,\bar{R}_1,R_{-\{1,2\}}).
\end{equation}
Further, as $\tau(\bar{R}_1)=\{a\}$, by Lemma \ref{du_lem_1}, $\varphi_{\{\{a\},\{a,b\},\{b\}\}}(\bar{R}_1,\bar{R}_1,R_{-\{1,2\}})=\beta_1+\beta^{\{a\}}(\bar{R}_1,\bar{R}_1,R_{-\{1,2\}})+\beta^{\{b\}}(\bar{R}_1,\bar{R}_1,R_{-\{1,2\}})$. Now as $\beta^X(\bar{R}_1,\bar{R}_1,R_{-\{1,2\}})=\beta^X(R_N)$ for all $X\in \mathcal{A}$, by (\ref{du_eq_10}), we have $$\varphi_{\{\{a\},\{b\}\}}(R_N)=\beta_1+\beta^{\{a\}}(R_N)+\beta^{\{b\}}(R_N).$$ This completes the verification of (ii) of the lemma. \end{proof}

\begin{lemma}\label{du_lem_5}
 Let $R_N\in \du^n$. Then $\varphi_{X}(R_N)=\beta^{X}(R_N)$ for all $X\in \mathcal{A}\setminus \{\tau(R_1),\tau(R_2)\}$. 
\end{lemma}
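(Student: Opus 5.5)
We split according to whether $\tau(R_1)=\tau(R_2)$. If $\tau(R_1)=\tau(R_2)=\{a\}$, the statement is exactly Lemma \ref{du_lem_1}(ii), since then $\mathcal{A}\setminus\{\tau(R_1),\tau(R_2)\}=\mathcal{A}\setminus\{\{a\}\}$. So the substantive case is $\tau(R_1)=\{a\}$ and $\tau(R_2)=\{b\}$ with $a\neq b$, where we must show $\varphi_X(R_N)=\beta^X(R_N)$ for every $X\notin\{\{a\},\{b\}\}$.

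\textbf{Accounting step.} First we establish that the probabilities in question add up correctly. By Lemma \ref{du_lem_4}(ii), $\varphi_{\{\{a\},\{b\}\}}(R_N)=\beta_1+\beta^{\{a\}}(R_N)+\beta^{\{b\}}(R_N)$. Since the $\beta$'s come from the random dictatorship $g$, we have $\beta_1+\sum_{X\in\mathcal{A}}\beta^X(R_N)=1$. Therefore
$$\sum_{X\notin\{\{a\},\{b\}\}}\varphi_X(R_N)=\sum_{X\notin\{\{a\},\{b\}\}}\beta^X(R_N).$$
It then suffices to prove the one-sided inequality $\varphi_X(R_N)\geq \beta^X(R_N)$ for every $X\notin\{\{a\},\{b\}\}$: equality of the totals together with termwise $\geq$ forces termwise equality. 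For $|X|\geq 2$ the inequality is trivial, because $\beta^X(R_N)=0$. So only singletons $X=\{c\}$ with $c\notin\{a,b\}$ need attention.

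\textbf{Main step.} Fix such a $c$, and take $\hat R_1\equiv\{a\}\cdots\{c\}\in\du$, which exists by Lemma \ref{du_lem_pref_exis}(iii) or a similar construction. The aim is to transport the known value at a profile where agents 1 and 2 share a top down to $R_N$. Agent 1 at $(\hat R_1,R_2,R_{-\{1,2\}})$ has $\{c\}$ as her worst set. Her strict upper contour set of $\{c\}$ is $\mathcal{A}\setminus\{\{c\}\}$, so strategy-proofness forces $\varphi_{\{c\}}(\hat R_1,R_2,R_{-\{1,2\}})\le \varphi_{\{c\}}(R_1',R_2,R_{-\{1,2\}})$ for any $R_1'$; the mirror-image deviation gives the inequality in the direction we need. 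Concretely, we compare with $R_2$ replaced by the top-$\{a\}$ preference $\hat R_1$:
$$\varphi_{\{c\}}(\hat R_1,\hat R_1,R_{-\{1,2\}})=\beta^{\{c\}}(R_N),$$
which holds by Lemma \ref{du_lem_1}(ii). Agent 2 with true preference $\hat R_1$ has $\{c\}$ at the bottom, so strategy-proofness gives
$$\varphi_{\{c\}}(\hat R_1,R_2,R_{-\{1,2\}})\ge \varphi_{\{c\}}(\hat R_1,\hat R_1,R_{-\{1,2\}}).$$
Next we use $\hat R_1\equiv\{a\}\cdots\{c\}$ as agent 1's true preference against the deviation to $R_1$. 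This yields $\varphi_{\{c\}}(R_1,R_2,R_{-\{1,2\}})\ge\varphi_{\{c\}}(\hat R_1,R_2,R_{-\{1,2\}})$. Chaining the three relations gives $\varphi_{\{c\}}(R_N)\ge\beta^{\{c\}}(R_N)$.

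The obstacle is getting the bottom-ranked-set monotonicity in the correct direction at each link. A worst-set argument bounds the truthful probability of the worst set from above, so in each link the agent whose true preference puts $\{c\}$ last must be the deviator toward the profile with the larger known value. If the direct chain above fails for this reason, we would instead build the profiles so that the true-$\hat R$ agent sits at $R_N$'s side. Failing that, we would compare via Lemma \ref{du_lem_4}(i) with $(\bar R_1,\bar R_2)$ and then move agent 2 to $\bar R_1$, using that $\{\{a\},\{a,b\},\{b\}\}$ is an upper contour set of both $\bar R_1$ and $\bar R_2$. Under that route the mass outside $\{\{a\},\{a,b\},\{b\}\}$ is preserved, and Lemma \ref{du_lem_1}(ii) then identifies it as $\sum_{X\notin\{\{a\},\{b\}\}}\beta^X$, with the individual terms pinned down using $\{c\}$-bottom preferences.
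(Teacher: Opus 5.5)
Your argument is correct and is essentially the paper's proof. The bound $\varphi_{\{c\}}(R_N)\ge\beta^{\{c\}}(R_N)$ comes from strategy-proofness with a preference ranking $\{c\}$ last together with Lemma~\ref{du_lem_1}(ii), and equality then follows from the mass accounting of Lemma~\ref{du_lem_4}(ii); you use $\beta_1+\sum_X\beta^X(R_N)=1$ directly, where the paper compares with $\varphi(\bar R_1,\bar R_1,R_{-\{1,2\}})$. Your two-link chain skips the paper's auxiliary $\bar R_2\equiv\{b\}\cdots\{c\}$ and is valid as written, since in each link the truthful agent with $\{c\}$ at the bottom sits at the profile with the smaller value, so the fallback routes in your last paragraph are unnecessary.
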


\begin{proof}
Fix $R_N\in \du^n$. Let's assume that $\tau(R_1)=\{a\}$ and $\tau(R_2)=\{b\}$. We first show that $\varphi_{X}(R_N)\geq \beta^{X}(R_N)$ for all $X\in \mathcal{A}\setminus \{\{a\},\{b\}\}$. Assume, for contradiction, that this does not hold, and let $Y\in \mathcal{A}\setminus \{\{a\},\{b\}\}$ be such that $\beta^Y(R_N)>\varphi_Y(R_N)$. Since $\beta^Y=0$ for all $Y$ with $|Y|\geq 2$, $Y$ must be a singleton set, say $Y=\{c\}$ where $c\in A\setminus \{a,b\}$. Let $\bar{R}_1\equiv \{a\}\cdots \{c\}$ and $\bar{R}_2=\{b\}\cdots \{c\}$. Therefore, by the strategy-proofness of $\varphi$, $$\varphi_{\{c\}}(R_N)\geq \varphi_{\{c\}}(\bar{R}_1,R_2,R_{-\{1,2\}})\geq \varphi_{\{c\}}(\bar{R}_1,\bar{R}_2,R_{-\{1,2\}})=\varphi_{\{c\}}(\bar{R}_1,\bar{R}_1,R_{-\{1,2\}}).$$
This, together with $\beta^{\{c\}}(R_N)>\varphi_{\{c\}}(R_N)$, implies $\beta^{\{c\}}(R_N)>\varphi_{\{c\}}(\bar{R}_1,\bar{R}_1,R_{-\{1,2\}})$. Moreover, as $\tau(\bar{R}_1)=\{a\}$ and $a\neq c$, by Lemma \ref{du_lem_1}, we have $\varphi_{\{c\}}(\bar{R}_1,\bar{R}_1,R_{-\{1,2\}})=\beta^{\{c\}}(\bar{R}_1,\bar{R}_1,R_{-\{1,2\}})$. However, this is a contradiction as $\beta^{\{c\}}(R_N)=\beta^{\{c\}}(\bar{R}_1,\bar{R}_1,R_{-\{1,2\}})$. Thus, $\varphi_{X}(R_N)\geq \beta^{X}(R_N)$ for all $X\in \mathcal{A}\setminus \{\{a\},\{b\}\}$. We now show that the equality holds for all $X\in \mathcal{A}\setminus \{\{a\},\{b\}\}$, as claimed in the statement of the lemma. To see this, first observe that for any such $X$, either, if $a\in X$ then $|X|\geq 2$, and hence, $\varphi_Z(\bar{R}_1,\bar{R}_1,R_{-\{1,2\}})=0=\beta^{Z}(\bar{R}_1,\bar{R}_1,R_{-\{1,2\}})$, or if $a\notin X$, by (ii) Lemma  \ref{du_lem_1}, $\varphi_X(\bar{R}_1,\bar{R}_1,R_{-\{1,2\}})=\beta^{X}(\bar{R}_1,\bar{R}_1,R_{-\{1,2\}})$. This observation, together with $\beta^{X}(R_N)=\beta^{X}(\bar{R}_1,\bar{R}_1,R_{-\{1,2\}})$ for all $X\in \mathcal{A}$, implies that $\varphi_X(\bar{R}_1,\bar{R}_1,R_{-\{1,2\}})=\beta^{X}(R_N)$ for all $X\in \mathcal{A}\setminus \{\{a\},\{b\}\}$. Hence, 
\begin{equation}\label{du_eq_11}
    \varphi_X(R_N)\geq \varphi_X(\bar{R}_1,\bar{R}_1,R_{-\{1,2\}})= \beta^{X}(R_N) \text{ for all } X\in \mathcal{A}\setminus \{\{a\},\{b\}\}.
\end{equation}
 Now, as $\tau(\bar{R}_1)=\{a\}$, by (i) of Lemma \ref{du_lem_1}, $\varphi_{\{a\}}(\bar{R}_1,\bar{R}_1,R_{-\{1,2\}})=\beta_1+\beta^{\{a\}}(\bar{R}_1,\bar{R}_1,R_{-\{1,2\}})$ and by (ii) of Lemma \ref{du_lem_1}, $\varphi_{\{b\}}(\bar{R}_1,\bar{R}_1,R_{-\{1,2\}})=\beta^{\{b\}}(\bar{R}_1,\bar{R}_1,R_{-\{1,2\}})$. Moreover, by Lemma \ref{du_lem_4}, $\varphi_{\{\{a\},\{b\}\}}(R_N)=\beta_{1}+\beta^{\{a\}}(R_N)+\beta^{\{b\}}(R_N)$. Thus, $\varphi_{\{\{a\},\{b\}\}}(R_N)=\varphi_{\{\{a\},\{b\}\}}(\bar{R}_1,\bar{R}_1,R_{-\{1,2\}})$, as $\beta^{X}(R_N)=\beta^{X}(\bar{R}_1,\bar{R}_1,R_{-\{1,2\}})$ for all $X$. Further, combining this with (\ref{du_eq_11}) yield 
 $$\varphi_X(R_N)\geq \varphi_X(\bar{R}_1,\bar{R}_1,R_{-\{1,2\}}) \text{ for all } X\in \mathcal{A},$$
which in turn implies $\varphi_X(R_N)= \varphi_X(\bar{R}_1,\bar{R}_1,R_{-\{1,2\}})$ for all $X\in \mathcal{A}$. Therefore by (\ref{du_eq_11}), we have $\varphi_X(R_N)=\beta^{X}(R_N)$ for all $X\in \mathcal{A}\setminus \{\{a\},\{b\}\}$.\end{proof}

  We now proceed to complete the proof of the theorem. By Lemma \ref{du_lem_4} and Lemma \ref{du_lem_5}, $\varphi$ assigns positive probability only to singleton sets. Hence, we use $\varphi$ to construct a PSCF $ \widehat{\varphi}:\mathcal P^n\to\Delta A $ such that for all $a\in A$, $\widehat{\varphi}_a(Q_N)=\varphi_{\{a\}}(R_N), $ where $Q_i=R_i|_A$ for each $i\in N$. This is well-defined because $\du|_A=\mathcal P$, and strategy-proofness together with singleton support implies that $\varphi$ depends only on the induced profile $R_N|_A$. Formally, if $R_i|_A=R_i'|_A$ for some $i\in N$, then $\varphi(R_i,R_{-i})=\varphi(R_i',R_{-i})$ for all $R_{-i}\in \de^{n-1}$. Unanimity of $\widehat{\varphi}$ follows immediately from unanimity of $\varphi$. Strategy-proofness of $\widehat{\varphi}$ follows by the same argument as in the case for $n=2$. For any agent $i$, any deviation $Q_i'\in\mathcal P$, and any $a\in A$, choose $R_N\in\du^n$ and $R_i'\in\du$ such that $R_j|_A=Q_j$ for all $j$ and $R_i'|_A=Q_i'$. Then  $\widehat{\varphi}_{U(a,Q_i)}(Q_N) = \varphi_{U(\{a\},R_i)}(R_N) \geq \varphi_{U(\{a\},R_i)}(R_i',R_{-i}) = \widehat{\varphi}_{U(a,Q_i)}(Q_i',Q_{-i}),$ where the inequality follows from strategy-proofness of $\varphi$. Thus $\widehat{\varphi}$ is unanimous and strategy-proof. By \cite{gibbard1977manipulation}, $\widehat{\varphi}$ is random dictatorial. Therefore, there exist $\epsilon_1,\ldots,\epsilon_n\geq0$, with $\sum_{i=1}^n\epsilon_i=1$, such that for every $Q_N\in\mathcal P^n$ and every $a\in A$, $\widehat{\varphi}_a(Q_N) = \sum_{\{i\in N:\tau(Q_i)=a\}}\epsilon_i.$ Since $\tau(Q_i)=a$ if and only if $\tau(R_i)=\{a\}$, we get $\varphi_{\{a\}}(R_N) = \sum_{\{i\in N\mid\tau(R_i)=\{a\}\}}\epsilon_i.$  Finally, since $\varphi_X(R_N)=0$ for every non-singleton $X$, it follows that for all $R_N\in\du^n$ and all $X\in\mathcal A$, $ \varphi_X(R_N) = \sum_{\{i\in N\mid\tau(R_i)=X\}}\epsilon_i. $ Hence, $\varphi$ is random dictatorial.  \end{proof}

\section{Supplementary Material}\label{appen_supple}

The complete proofs for Theorem \ref{de_tops-only} and associated technical lemmas are hosted externally due to space and length considerations. 

The supplementary material is available as an ancillary file on the arXiv abstract page.

%     \section{Result on $\de$ domain for $m=3$ and $n=3$}\label{appen_6}
%     \begin{theorem}\label{de_m=3}
%         Let $m=n=3$. Then a PSCC $\varphi: \de^3 \to \Delta \mathcal{A}$ is unanimous and strategy-proof if and only if $\varphi$ is a random bi-dictatorial rule.
%     \end{theorem}

% \begin{proof}
% The if part of the theorem follows from Remark \ref{rem_star}. We prove the only-if part of the theorem. Let $A=\{a,b,c\}$ and let $\varphi:\de^3\to \Delta\mathcal{A}$ be a unanimous and strategy-proof PSCC. We go along the line of the proof of Theorem \ref{de_n}. The induction hypothesis considered at the beginning of the proof of  Theorem \ref{de_n} holds true by Theorem \ref{de_2}. Further, one may observe that in the proof of Theorem \ref{de_n}, the fact that $m\geq 4$ is used for the first time in Lemma \ref{de_lem_7}. Therefore, we assume that all the lemmas before Lemma \ref{de_lem_7} hold true in the current setting. \end{proof}

\end{appendix}

\end{document}